\newcommand{\rw}{\rightarrow}    
\newcommand{\Real}{\mathbb{R}}
\newcommand{\mB}{{\mathcal B}}
\newcommand{\mP}{{\mathcal P}}
\newcommand{\mS}{{\mathcal S}}
\newcommand{\mI}{{\mathcal I}}
\newcommand{\mG}{{\mathcal G}}
\newcommand{\mF}{{\mathcal F}}
\newcommand{\mX}{{\mathcal X}}
\newcommand{\mY}{{\mathcal Y}}
\newcommand{\mN}{{\mathcal N}}
\newcommand{\mM}{{\mathcal M}}
\newcommand{\mO}{{\mathcal O}}
\newcommand{\mZ}{{\mathcal Z}}
\newcommand{\mbN}{\mathbb{N}}
\newcommand{\mbP}{\mathbb{P}}
\newcommand{\mbR}{\mathbb{R}}
\newcommand{\mbE}{\mathbb{E}} 
\newcommand{\sff}{{\sf f}}
\newcommand{\sd}{{\sf d}}
\newcommand{\sfd}{{\sf d}}
\newcommand{\mix}{\mathfrak{m}}
\newcommand{\beq}{\begin{equation}}
\newcommand{\eeq}{\end{equation}}
\newcommand{\beqa}{\begin{eqnarray}}
\newcommand{\eeqa}{\end{eqnarray}}
\newcommand{\nn}{\nonumber}
\theoremstyle{plain}
\newtheorem{theorem}{Theorem}[section]
\newtheorem{lemma}[theorem]{Lemma}
\theoremstyle{definition}
\newtheorem{proposition}[theorem]{Proposition}
\newtheorem{corollary}[theorem]{Corollary}
\newtheorem{remark}{Remark}
\newtheorem{algorithm}{Algorithm}
\begin{document}

\begin{frontmatter}
\title{Nested importance sampling for Bayesian inference: error bounds and the role of dimension}
\runtitle{Nested importance sampling for Bayesian inference}

\begin{aug}
\author[A]{\fnms{Fabián}~\snm{González} \thanks{[\textbf{Corresponding author.}]}\ead[label=e1]{omgonzal@math.uc3m.es}\orcid{0000-0003-4353-5737}},
\author[B]{\fnms{Víctor}~\snm{Elvira}\ead[label=e2]{victor.elvira@ed.ac.uk}\orcid{0000-0002-8967-4866}}
\and
\author[A]{\fnms{Joaquín}~\snm{Míguez}\ead[label=e3]{joaquin.miguez@uc3m.es}\orcid{0000-0001-5227-7253}}
\address[A]{Department of Signal Theory and Communications, Universidad Carlos III de Madrid, Spain \printead[presep={ ,\ }]{e1,e3}}

\address[B]{School of Mathematics, University of Edinburgh, United Kingdom \printead[presep={,\ }]{e2}}
\end{aug}

\begin{abstract}
Many Bayesian inference problems involve high dimensional models for which only a subset of the model variables are of actual interest. All other variables are just nuisance parameters that one would ideally like to integrate out analytically. Unfortunately, such integration is often impossible. There are several computational methods that have been proposed over the past 15 years that replace intractable analytical marginalization by numerical integration, typically using different flavours of importance sampling (IS). Such methods include particle Markov chain Monte Carlo, sequential Monte Carlo squared (SMC$^2$), IS$^2$, nested particle filters, and others. In this paper, we investigate the role of the dimension of the nuisance variables in the error bounds achieved by nested IS methods in Bayesian inference. We prove that, under suitable regularity assumptions on the model, the approximation errors increase at a polynomial (rather than exponential) rate w.r.t. the dimension of the nuisance variables. Our analysis relies on tools from functional analysis and measure theory, and it includes the case of polynomials of degree zero, where the approximation error remains uniformly bounded, even as the dimension of the nuisance variables grows without bound. We also show how the general analysis can be applied to specific classes of models, including linear and Gaussian settings, models with bounded observation functions, and others. These findings improve the current understanding of how the performance of IS techniques scales with dimension in Bayesian inference problems.
\end{abstract}

\begin{keyword}[class=MSC]
\kwd[Primary ]{62F15}
\kwd{60B05}
\kwd{65C05}
\kwd[; secondary ]{46N30}
\end{keyword}

\begin{keyword}
\kwd{Bayesian inference}
\kwd{importance sampling}
\kwd{error bounds}
\kwd{curse of dimensionality}
\kwd{Bochner integrability}
\end{keyword}

\end{frontmatter}


\section{Introduction}\label{sIntroduction}

Bayesian inference is a cornerstone of modern statistical methodology, widely used in fields such as signal, image processing, and machine learning. Many problems in these domains are naturally framed in terms of high-dimensional probabilistic models, where solutions are given as integrals. Unfortunately, this integration can only be performed analytically for a limited set of models for which the prior distribution and the likelihood function are specifically selected to yield a closed-form posterior distribution, as it is the case for a handful of conjugate families \cite{Bernardo94}. The obvious alternative is numerical integration \cite{Robert04}. One widely used approach for handling intractable integrals in Bayesian inference is importance sampling (IS) \cite{Robert04,tokdar2010importance,Bugallo17}.


\subsection{Motivation and background}
 
 IS is a Monte Carlo integration technique that employs a collection of random samples $X=x^i$, $i=1, \ldots, N$, drawn from a proposal probability measure $\tau$ to approximate integrals with respect to (w.r.t.) a target probability measure $\pi$. Each sample $x^i$ is assigned a weight $w^i$, which is 
\begin{itemize}
\item proportional to the relative density $\frac{\sfd \pi}{\sfd \tau}(x^i)$,
\item and normalised, $\sum_{i=1}^N w^i = 1$.
\end{itemize}
Under mild assumptions weighted averages can be proved to converge to the corresponding integrals \cite{Robert04,Chopin20}, i.e., 
$$
\sum_{i=1}^N w^i \varphi(x^i) \stackrel{N\rw\infty}{\longrightarrow} \int \varphi(x)\pi(\sfd x),
$$ 
where $\varphi(\cdot)$ is a test function. 

IS is a classic methodology with a plethora of applications \cite{Doucet01b,Robert04,Bucklew04,DelMoral06,Bugallo17}. It provides a flexible and easy-to-use method for approximating expectations w.r.t. complex probability distributions.
Over the past decade, several advanced IS-based methods, \cite{Andrieu10,Chopin12,Crisan18bernoulli,Tran21,doucet2022score,doucet2022annealed,doucet2018sequential,akyildiz2021convergence} have been developed to extend importance sampling and handle inference in increasingly sophisticated models, either static or dynamic. See also \cite{agapiou2017importance,chatterjee2018sample} for a recent account of the fundamentals of IS. 
 
IS-based methods are often applied in Bayesian inference settings, where the target probability measure $\pi$ is the posterior distribution of a random variable $X$ (sometimes termed the state variable, or simply the state) conditional on some random variable (r.v.) $Y$ that represents the observations available in the system of interest. To be specific, given a realization $Y=y$, the objective is to determine the conditional probability law of $X$ given $y$. This is formulated via Bayes' theorem
$$
\pi(\sd x) = \frac{l_y(x)\pi_0(\sd x)}{\int l_y(x')\pi_0(\sd x')} = \frac{l_y(x)\pi_0(\sd x)}{\pi_0(l_y)}.
$$
where, $l_y(x)$ denotes the likelihood of the observation $Y=y$ given $X=x$, $\pi_0(\sd x)$ is the prior probability measure of the state $X$, and $\pi_0(l_y)=\int l_y(x')\pi_0(\sd x')$ is the marginal likelihood, or model evidence, which acts as the normalization constant for the posterior distribution $\pi(\sd x)$.

\subsection{Impact of dimension}

Despite its versatility, IS is known to suffer from limitations, most notably the so-called \textit{curse of dimensionality} \cite{agapiou2017importance,chatterjee2018sample,Rebeschini15,Snyder08}. Theoretical and empirical studies suggest that performance can deteriorate significantly when inference involves high-dimensional state spaces  (where the number of scalar r.v.'s or parameters to be inferred is referred to as the dimension of the state space).  
For example, in~\cite{chatterjee2018sample} the authors show that, in general, the number of samples required to approximate the posterior probability law $\pi(\sd x)$ (within a specified accuracy) can scale exponentially with the Kullback--Leibler (KL) divergence between the posterior and the proposal distributions. Also, importance samplers can suffer from weight degeneracy when the dimension of the state space is large \cite{Bengtsson08,Snyder08,Snyder15}. While sophisticated methods have been proposed to overcome this difficulty in various setups \cite{VanLeeuwen13,Beskos14,Koblents15,Rebeschini15,Beskos17,Ruzayqat22}, they are often computationally costly and rely on assumptions which are hard to assess.

Recent works have begun to characterise the multiple roles of dimension in the analysis of IS algorithms. In particular, the authors of ~\cite{agapiou2017importance} have identified a key quantity governing the efficiency of IS estimators: the second moment of the Radon--Nikodym derivative of the posterior w.r.t. the prior measure. This quantity is further related to important metrics such as the effective sample size (ESS) \cite{Martino17}, and it also appears in bounds on the mean squared error of IS approximations \cite{akyildiz2021convergence}. It plays a central role in our analysis throughout this paper as well.

Other works have demonstrated that the degradation of performance in high-dimensional settings can be alleviated under certain conditions. Adaptive and structure-aware variants of IS (such as those found in the adaptive particle filtering literature~\cite{VanLeeuwen13,Rebeschini15,Beskos17,Ruzayqat22,kuntz2024divide} and in sequential Monte Carlo samplers~\cite{Beskos14}) exploit problem-specific geometric features or sequential data structure to maintain accuracy in moderately high dimensional problems.

\subsection{Nuisance variables}
 There are many problems that require complex, high dimensional models for their proper description, yet the state variables of actual interest for inference form just a (possibly small) subset of the total (possibly large) set of variables in the model. 
  Whenever possible, it is desirable to identify the subset of variables of interest, while integrating out any other {\em nuisance} random quantities. When inference is restricted to a low-dimensional projection of the latent
space, numerical marginalization of nuisance variables can, in principle, reduce the effective complexity of the problem and improve accuracy of estimators.

 Several algorithms have been proposed over the past 15 years for this purpose, including numerical Rao-Blackwellisation for particle filters \cite{Hu08,Johansen12,Naesseth19,Rebeschini15}, particle Markov chain Monte Carlo methods \cite{Andrieu10}, nested IS schemes \cite{Tran21,ashton2022nested,naesseth2015nested}, and sequential importance samplers for state-space dynamical systems, including sequential Monte Carlo square (SMC$^2$) \cite{golightly2018efficient,Chopin12} and nested particle filters \cite{Crisan18bernoulli,Crisan17}.

Nested IS methods have been popular for Bayesian inference in a variety of 
applications \cite{ashton2022nested,lange2023nautilus,pullen2014bayesian}. The basic scheme relies on computing unbiased estimates of the marginal likelihood of the parameters of interest to approximate their posterior distribution given the observed data, while numerically integrating out all other variables. In \cite{Tran21}, asymptotic convergence results are given for nested IS estimators, including a central limit theorem.  
The key to the analysis is the realisation that unbiased estimators of the marginal likelihood of the parameters of interest can be computed with a fixed and finite number of Monte Carlo samples \cite{Chopin12,Tran21} --the same idea underlying, e.g., particle Markov chain Monte Carlo methods \cite{Andrieu09,Andrieu10}. In general, the variance of these estimators can be expected to increase with the dimension of the nuisance variables. 


\subsection{Contributions}

In this paper, we investigate the role of the dimension of the nuisance variables (denoted $d_z$) in the error bounds achieved by nested IS methods in Bayesian inference problems\footnote{We focus on Bayesian inference for static models, rather than Bayesian filtering in dynamical systems, but we expect that many results can be extended to the latter setting and this will be the focus of future work.}. We begin with a (rather classical) derivation of the error bounds, from where we explicitly identify their dependence on $d_z$. The key factors to analyze are the marginal likelihood of the model and, more specifically, the (generally intractable) likelihood of the target variables that results from integrating out the nuisance variables. In order to gain some insight, we first look into the class of linear and Gaussian models. For this case, we find sufficient conditions (expressed in terms of the spectral decomposition of the model matrices) that guarantee that the approximation error bounds of the nested importance sampler increase at a polynomial (rather than exponential) rate w.r.t. the dimension of the nuisance variables, $d_z$. We show how this result can be attained either assuming deterministic (arbitrary) observations {\em or} assuming that the observations are random and generated by the model. Remarkably, in the latter case the analysis can be simplified.

Following the insight gained from the linear and Gaussian setting, we proceed to analyze the $L^1$ and $L^2$ error bounds for general models assuming random observations, which yield a random posterior measure and random error bounds as well. Within this framework, we first obtain sufficient and necessary conditions for the approximation errors of the nested importance sampler to be finite and converge with $\mO\left(N^{-\frac{1}{2}}\right)$ for any given (arbitrary large) dimension $d_z$, where $N$ is the number of Monte Carlo samples. Our analysis relies on tools from measure theory and functional analysis, namely the notion of Bochner integrability. Then, we proceed to obtain sufficient conditions for the approximation errors to increase polynomially (rather than exponentially) with the dimension of the nuisance variables $d_z$, similar to the linear and Gaussian case. We explicitly highlight the case where the polynomial degree is 0, which corresponds to error bounds that hold uniformly over $d_z$. We apply these general results to several examples, including linear systems, models with bounded observation functions, and models with pointwise-bounded likelihood. In all cases, we provide analytical proof of the validity of our bounds. 
Finally we show that the proposed methodology can also be used to quantify the role of the dimension of the signal of interest in standard IS schemes (with no nuisance variables). In particular we identify sufficient conditions for the error bounds of a conventional IS algorithm to increase polynomially, or even remain constant, w.r.t. the dimension of the variable of interest. 

This analysis improves our current understanding of how the performance of IS algorithms scales with the dimension of probabilistic models in Bayesian inference problems.

\subsection{Outline of the Paper}

The paper is organized as follows. Section~\ref{sec:sIS2} formally describes the probabilistic model and the nested importance sampling algorithm.  In Section~\ref{sec:Linear_Gaussian}, we analyze the error bounds attained for the linear and Gaussian model and then investigate a broad class of possibly nonlinear and non-Gaussian models in Section~\ref{sec:General_Random}. The general results obtained in that framework are then particularised to some examples in Section~\ref{sec:Special_cases}. Section~\ref{sec:standard_random} demonstrates how we can derive analogous results in the standard IS framework (without nuisance variables). Finally, in Section~\ref{sec:Discussion}, we summarise our main findings and discuss some connections with existing results, as well as possible venues for future research. Technical proofs and some additional results are provided in the appendices. We conclude this introduction with a brief summary of the notation used throughout the paper. 

\subsection{Summary of notation} \label{ssNotation}
\begin{itemize}
    \item Sets, measures, and integrals:
    \begin{itemize}
        \item[-] $\mB (S)$ is the $\sigma$-algebra of Borel subsets of $S \subseteq  \mbR^d$.
\item[-] $\mP(S) := \{ \nu  : \mB (S) \mapsto  [0, 1]$ and $\nu (S) = 1\}$  is the set of probability measures over
$\mB (S)$.
\item[-] $ \nu (f) := 
\int f d\nu$  is the integral of a Borel measurable function $f : S \mapsto  \mbR$ w.r.t. the
measure $\nu  \in  \mP (S)$.

\end{itemize}
\item Functions:
    \begin{itemize}
    
    \item [-] Consider the measurable spaces \((S, \mathcal{B}(S))\) and \((\mathbb{R}, \mathcal{B}(\mathbb{R}))\).  
We denote by \(B(S)\) the space of bounded, real-valued, measurable functions \(f : S \mapsto \mathbb{R}\).  
For any \(f \in B(S)\), the uniform norm is defined as
\[
\|f\|_{\infty} := \sup_{s \in S} |f(s)| < \infty.
\]
    \end{itemize}
    
\item Real r.v.'s on a probability space $(\Omega,\mF,\mbP)$ are denoted by capital letters (e.g., $Z:\Omega \mapsto \mbR^d)$, while their realisations are written as lowercase letters (e.g., $Z(\omega)=z$, or simply, $Z=z$). If $X$ is a $d_x$ multivariate Gaussian r.v., then its probability law is denoted $\mN(\sd x;m, C)$, where $m$ is the mean and $C$ is the covariance matrix. If the r.v. $X$ has probability law $\pi$ and $f$ is a measurable function, then we denote $$\pi(f) = \int f \sd\pi = \mbE[f(X)],$$
where $\mbE[\cdot]$ is the expectation operator.

\item Linear algebra:
    \begin{itemize}
       \item Let ${A}\in \mathbb{R}^{m\times p}$  be a given matrix, we  denote the ordered singular values of ${A}$ as  $$\sigma_1({A})\geq\dots \geq \sigma_{\min\{m,p\}}({A}).$$
        \item Let $A \in \mathbb{R}^{n \times n}$ be a real symmetric matrix. The \emph{spectrum} of $A$ is denoted as 
\[
\mathrm{spec}(A) := \{ \lambda \in \mathbb{R} \, : \, \exists\, v \in \mathbb{R}^n \setminus \{0\} \text{ such that } Av = \lambda v \}.
\]
\item We denote the real eigenvalues of $A$ as
$\lambda_1(A) \geq \lambda_2(A) \geq \cdots \geq \lambda_n(A),$
where $\lambda_1(A)$ and $\lambda_n(A)$ denote the maximum and minimum eigenvalues of $A$, respectively.
   \end{itemize}
   
\item Functional analysis:
    \begin{itemize}
        \item For $1\leq p < \infty$, and $\nu \in \mP(S),$ denote by $
L^p(\nu)$ the Banach space defined by  $$
L^p(\nu)= \left\{\,f\colon S \mapsto \mathbb{R} : \|f\|_p := \left(\int_S |f(s)|^p\,\sd \nu(s)\right)^{1/p} < \infty \right\}.$$
        \item In particular, for $p=2$, the inner product in the Hilbert space $L^2(\nu)$ is denoted by $$\langle \varphi, \psi \rangle_{L^2(\nu)}:=\int_S\varphi(s) \psi(s) \sd \nu(s).$$
    \end{itemize}

\end{itemize}


\section{Nested importance samplers} \label{sec:sIS2}


\subsection{Model} \label{ssec:sIS2}
Let $(\Omega,\Sigma,\mbP)$ be a probability space and let $X:\Omega\mapsto\mX \subseteq \Real^{d_x}$, $Z:\Omega\mapsto\mZ \subseteq \Real^{d_z}$, and $Y:\Omega\mapsto\mY \subseteq \Real^{d_y}$ denote three multidimensional r.v.'s. To be specific, the r.v. $X$ represents the $d_x$-dimensional signal of interest (or state), which we aim to estimate, $Y$ represents some $d_y$-dimensional observed data, and $Z$ is a $d_z$-dimensional nuisance variable needed to model the relationship between $X$ and $Y$.

The complete model can be specified as follows:
\begin{itemize}
\item The prior probability law of the state of interest $X$ is denoted by $\pi_0(\mathrm{d}x)$, and we assume the ability to generate random samples from this distribution.

\item The nuisance variable $Z$ and the state $X$ are related through a Markov kernel $\kappa : \mathcal{X} \times \mathcal{B}(\mathcal{Z}) \to [0, 1]$, where $\mathcal{B}(\mathcal{Z})$ denotes the Borel $\sigma$-algebra of subsets of $\mathcal{Z}$.

\item Given $X = x$ and $Z = z$, the observation $Y$ admits a conditional probability density function (pdf) w.r.t. the Lebesgue measure, denoted by $g(y \mid x, z)$.

\end{itemize}

We refer to a triple $\mathcal{M} = (\pi_0, \kappa, g)$ as a model for Bayesian inference. Throughout this work, we consider a family of models $\{\mathcal{M}^{d_z}\}_{d_z \in \mathbb{N}} = \{\pi_0, \kappa^{d_z}, g^{d_z}\}_{d_z \in \mbN}$, indexed by the dimension $d_z$ of the nuisance variable $Z$. The superscript $d_z$ explicitly denotes the potential dependence of the kernel $\kappa^{d_z}$ and the likelihood $g^{d_z}$ on the dimension of $Z$. Importantly, these models are of the same fundamental type; their differences solely arise from varying the dimensionality of the nuisance variable $Z$. This implies that while $d_z$ can increase, the inherent structure of the transition kernel $\kappa^{d_z}$ and the likelihood $g^{d_z}$ remains consistent. For an explicit example, see Section \ref{sec:Linear_Gaussian}. Similar superscript notation is  used for other model-dependent objects throughout this paper.

We introduce the notation $g_y^{d_z}(x, z)$ for an associated likelihood function, defined as
\[
g_y^{d_z}(x, z) := c \, g^{d_z}(y \mid x, z) > 0,
\]
where $c \in \mathbb{R}^+$ is a positive (possibly unknown) constant, independent of $x$ and $z$. The likelihood of a state value $X = x$ given an observation $Y = y$ for this model is
\begin{equation}
l_y^{d_z}(x) := \int g_y^{d_z}(x, z) \, \kappa^{d_z}(x, \sd z).
\label{eq_ly}
\end{equation}
Note, however, that the integral above cannot be computed exactly for many practical models of interest.

Our aim is to approximate the posterior law of the state $X$ conditional on the observation $Y=y$, denoted $\pi(\sd x)$, using IS. This posterior probability measure can be written as
\beq
\pi(\sd x) = \frac{
	l_y^{d_z}(x) \pi_0(\sd x)
}{
	\pi_0(l_y^{d_z})
},
\nn
\eeq
where the likelihood $l_y^{d_z}(x)$ is given by Eq. \eqref{eq_ly} and 
\beq 
\pi_0(l_y^{d_z}):=\int l_y^{d_z}(x) \pi_0(\sd x)
\nn
\eeq 
is the normalization constant of the distribution, for $y\in \mY$ fixed, and also corresponds to the marginal density of the r.v. \( Y \). We remark that both $l_y^{d_z}(x)$ and $\pi_0(l_y^{d_z})$ are analytically intractable in general. One remarkable  exception is the case of of linear Gaussian models, discussed in Section \ref{sec:Linear_Gaussian}. 

Let us define the joint measure \(\mix^{d_z}\) on \( \mB(\mX \times \mZ)\)  constructed as

\begin{equation}\label{eq:Mix_meas}
\mix^{d_z}( \sd (x, z)) :=  \kappa^{d_z}(x, \sd z) \, \pi_0(\sd x).
\end{equation}
This probability measure represents the joint distribution of the pair \((X, Z)\), where \(X \sim \pi_0\) and \(Z \sim \kappa^{d_z}(x, \cdot)\). We assume that $g_y^{d_z}(x,z) \in L^2(\mix^{d_z}),$ $\forall \; y \in \mY$, i.e.,
\begin{equation}\label{eq:normL2}
\| g_y \|^2_{L^2(\mix^{d_z})}:=\int_{\mathcal{X}} \int_{\mathcal{Z}} g_y^2(x,z) \, \kappa^{d_z}(x, \sd z) \, \pi_0(\sd x) 
 < \infty,
\end{equation} something that holds trivially when $g_y^{d_z}(x,z)$ is uniformly bounded for all $ (x,z)\in \mX \times \mZ$.

\begin{remark}
Throughout this section and in Section~\ref{sec:Linear_Gaussian}, we assume that the observation $y \in \mathcal{Y}$ is arbitrary but fixed. Moreover, we assume that the likelihood $g_y^{d_z}(x,z)$ is positive and bounded for all $y \in \mathcal{Y}$, i.e.,
$$\|g_y^{d_z}\|_{\infty}=\sup_{(x,z)\in \mX \times \mZ} \lvert g_{y}^{d_z}(x,z) \rvert< \infty.$$ 
\end{remark}

The approximation of the posterior law $\pi(\sd x)$ for a given (fixed) observation $Y=y$ can be carried out using a simple nested IS scheme as shown in Section \ref{Algorithm}.



\subsection{Nested importance sampler}\label{Algorithm}

Assume a model $\mM^{d_z}$ as described in Section \ref{ssec:sIS2} above. In order to describe a general nested importance sampler for the approximation of $\pi(\sd x)$, let us introduce an importance probability measure $\nu$ such that $\pi_0$ is absolutely continuous w.r.t. $\nu$ (denoted $\pi_0 \ll \nu$) and a Markov kernel $\tau(x,\sd z)$ such that $ \kappa^{d_z}(x,\cdot) \ll \tau(x,\cdot)$ almost everywhere on $\mX$. Let $\frac{\sd \pi_0}{\sd \nu}(x)$ denote the relative density of $\pi_0$ w.r.t. $\nu$ evaluated at $x\in \mX$ and let $\frac{\sd \kappa^{d_z}}{\sd \tau}(x,z)$ denote the relative density of $\kappa^{d_z}(x,\sd z)$ w.r.t. $\tau(x,\sd z)$ evaluated at $z \in \mZ$, for given $x$.

\begin{algorithm}\label{A1}
A general nested importance sampler

\begin{enumerate}
\item Draw $N$ i.i.d. samples $x^1, \ldots, x^N$ with common probability law $\nu$.
\item For each $i=1, \ldots, N$, 
	\begin{itemize}
	\item draw $M$ i.i.d. samples $z^{i,1},\ldots, z^{i,M}$ with common probability law $\tau(x^i,\sd z)$, and
	\item compute the approximate likelihood 
	\beq
	l_{y,\tau}^{M,d_z}(x^i) = \frac{1}{M}\sum_{j=1}^M \frac{\sd \kappa^{d_z}}{\sd \tau}(x^i,z^{ij}) g_y^{d_z}(x^i,z^{ij}).
	\nn
	\eeq
	\end{itemize}
\item Compute normalised importance weights $w_\nu^{i,M} \propto l_{y,\tau}^{M,d_z}(x^i) \frac{\sd \pi_0}{\sd \nu}(x^i)$.
\end{enumerate} 
\end{algorithm}

The samples $x^1, \ldots, x^N$ and their importance weights yield a Monte Carlo estimator of the posterior measure $\pi(\sd x)$, namely
\beq
\pi_{\nu,\tau}^{N,M}(\sd x) := \sum_{i=1}^N w_\nu^{i,M} \delta_{x^i}(\sd x),
\nn
\eeq
where $\delta_{x^i}(\sd x)$ is the Dirac delta measure located at $x^i$.

The scheme above requires the ability to sample and evaluate, point-wise, the relative densities $\frac{\sd \pi_0}{\sd \nu}(x)$ and $\frac{\sd \kappa^{d_z}}{\sd \tau}(x,z)$ \cite{agapiou2017importance}. For clarity, we carry out our analysis for a simpler version of this general algorithm where $\nu=\pi_0$, $\tau=\kappa$, and we only need to assume the ability to sample (not evaluate) the prior $\pi_0$ and the kernel $\kappa^{d_z}(x,\cdot)$. The resulting scheme is displayed as Algorithm \ref{A2}. Hereafter we refer to this specific procedure as nested IS, although other implementations are obviously possible with different choices of $\nu$ and $\tau$.

\begin{algorithm} \label{A2}
Standard nested importance sampler 
\begin{enumerate}
\item Draw $N$ i.i.d. samples $x^1, \ldots, x^N$ from the prior law $\pi_0$.
\item For each $i=1, \ldots, N$, 
	\begin{itemize}
	\item draw $M$ i.i.d. samples $z^{i,1},\ldots, z^{i,M}$ using the kernel $\kappa^{d_z}(x^i,\sd z)$, and
	\item compute the approximate likelihood 
	$
	l_y^{M,d_z}(x^i) = \frac{1}{M}\sum_{j=1}^M g_y^{d_z}(x^i,z^{ij}).
	$
	\end{itemize}
\item Compute normalised importance weights $w^{i,M} = \frac{l_y^{M,d_z}(x^i)}{\sum_{j=1}^N l_y^{M,d_z}(x^j)}$.
\item Output the random probability measure $\pi^{N,M}(\sd x) = \sum_{i=1}^N w^{i,M} \delta_{x^i}(\sd x)$
\end{enumerate}
\end{algorithm}

\begin{remark}
    Our analysis of Algorithm \ref{A2},  including Theorem \ref{thIS2} and the results in Sections \ref{sec:Linear_Gaussian}-\ref{sec:Special_cases}, can be extended in a straightforward manner to the more general Algorithm \ref{A1}. This is explicitly discussed in Appendix \ref{sec:importance_sampling}.
\end{remark}

The measure $\pi^{N,M}$ can be easily used to approximate posterior expectations of $X$, i.e., integrals w.r.t. $\pi(\sd x)$. For a given real test function $f:\mX \mapsto \Real$, let us denote
\beq
\pi(f) := \int_\mX f(x)\pi(\sd x)=\mbE[f(X) \mid Y=y],
\nn
\eeq
hence $\pi(f)$ is the expected value of the r.v. $f(X)$ conditional on $Y=y$. We can naturally approximate
\beq
\pi(f) \approx \pi^{N,M}(f) = \int_\mX f(x)\pi^{N,M}(\sd x) = \sum_{i=1}^N f(x^i) w^{i,M},
\nn
\eeq
and then analyze the random errors $\pi(f)-\pi^{N,M}(f)$. In particular, the next section is devoted to the calculation of upper bounds for the $L_p$ norms of the error $\pi(f)-\pi^{N,M}(f)$, namely
\beq
\left\|
	\pi(f)-\pi^{N,M}(f)
\right\|_p = \mbE\left[
	\left|
		\pi(f)-\pi^{N,M}(f)
	\right|^p
\right]^{\frac{1}{p}}, \quad p \ge 1,
\nn
\eeq
where the real test function $f$ is assumed to belong to $B(\mX)$, the space of real‑valued, measurable, bounded functions on $(\mX,\mB(\mX))$. Note that the expectation $\mbE[\cdot]$ in the expression above is computed w.r.t. the distribution of the measure-valued r.v. $\pi^{N,M}$ generated by the nested IS Algorithm \ref{A2}.


\subsection{Error bounds}\label{sec:EB}
 Theorem \ref{thIS2} below provides an explicit error bound for the estimator $\pi^{N,M}(f)$ for bounded test functions, $f \in B(\mX)$, under mild assumptions on the model of Section \ref{ssec:sIS2}.

\begin{theorem} \label{thIS2}
For given $Y=y$, assume that the likelihood function \( g_y^{d_z}(x,z) \) is strictly positive and bounded, i.e.,
\[
g_y^{d_z}(x,z) > 0 \quad \text{for all } (x,z) \in \mathcal{X} \times \mathcal{Z}, \quad \text{and} \quad \|g_y^{d_z}\|_{\infty} < \infty.
\]
Then, there exists a constant \( C_y^{d_z} < \infty \), independent of \( N \), \( M \), and \( f \), such that for any \( f \in B(\mathcal{X}) \) and any \( p \geq 1 \), the following inequality holds
\begin{equation}\label{eq:ThIS2}
\left\| \pi(f) - \pi^{N,M}(f) \right\|_p \leq \frac{C_y^{d_z} \, \|f\|_{\infty}}{\sqrt{N}}.
\end{equation}
\end{theorem}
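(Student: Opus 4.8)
The plan is to follow the classical route for self-normalized (ratio) importance sampling estimators, adapted to the fact that the weights here carry the \emph{random} likelihood estimates $l_y^{M,d_z}(x^i)$ rather than the exact (intractable) likelihoods $l_y^{d_z}(x^i)$. Write $B := \pi_0(l_y^{d_z})$, which is strictly positive because $g_y^{d_z}>0$ forces $l_y^{d_z}(x)>0$ for every $x$. I would first replace $f$ by the centered function $\bar f := f - \pi(f)$, so that $\pi(\bar f)=0$ and $\|\bar f\|_\infty \le 2\|f\|_\infty$, and note that $\pi^{N,M}(f)-\pi(f)=\pi^{N,M}(\bar f)$. This lets me express the error as the ratio
$$\pi^{N,M}(\bar f)=\frac{\frac1N\sum_{i=1}^N U_i}{b^{N,M}},\qquad U_i:=l_y^{M,d_z}(x^i)\,\bar f(x^i),\quad b^{N,M}:=\frac1N\sum_{i=1}^N l_y^{M,d_z}(x^i).$$

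The first key step is unbiasedness: since $z^{i,1},\dots,z^{i,M}$ are drawn from $\kappa^{d_z}(x^i,\cdot)$, conditioning on $x^i$ gives $\mbE[l_y^{M,d_z}(x^i)\mid x^i]=l_y^{d_z}(x^i)$, whence $\mbE[U_i]=\pi_0(l_y^{d_z}\bar f)=0$ and $\mbE[l_y^{M,d_z}(x^i)]=B$. Because the triples $(x^i,z^{i,1:M})$ are independent across $i$, both $\{U_i\}$ and $\{l_y^{M,d_z}(x^i)-B\}$ are collections of i.i.d., zero-mean variables, bounded in absolute value by $2\|g_y^{d_z}\|_\infty\|f\|_\infty$ and by $\|g_y^{d_z}\|_\infty$, respectively (using $0<l_y^{M,d_z}\le\|g_y^{d_z}\|_\infty$). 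To control their averages in $L^p$ I would invoke the Marcinkiewicz--Zygmund inequality: for i.i.d. zero-mean $V_i$ with $|V_i|\le c$ a.s., one has $\big\|\frac1N\sum_i V_i\big\|_p\le B_p\,c\,N^{-1/2}$ with $B_p<\infty$ depending only on $p$. This yields $\|\frac1N\sum_i U_i\|_p\le 2B_p\|g_y^{d_z}\|_\infty\|f\|_\infty\,N^{-1/2}$ and $\|b^{N,M}-B\|_p\le B_p\|g_y^{d_z}\|_\infty\,N^{-1/2}$, both crucially uniform in $M$.

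The remaining, and main, obstacle is the random denominator $b^{N,M}$, which admits no deterministic positive lower bound. Rather than bounding $1/b^{N,M}$ directly, I would use the algebraic identity
$$\pi^{N,M}(\bar f)=\frac{1}{B}\,\frac1N\sum_{i=1}^N U_i+\frac{B-b^{N,M}}{B}\,\pi^{N,M}(\bar f),$$
which follows from $\tfrac{1}{b^{N,M}}=\tfrac1B+\tfrac{B-b^{N,M}}{B\,b^{N,M}}$ together with $\frac1N\sum_i U_i=b^{N,M}\,\pi^{N,M}(\bar f)$. The decisive point is that the self-normalized ratio obeys the deterministic bound $|\pi^{N,M}(\bar f)|\le\|\bar f\|_\infty\le 2\|f\|_\infty$, since it is a convex combination of the values $\bar f(x^i)$; this is well defined a.s. because $b^{N,M}>0$ a.s. whenever $g_y^{d_z}>0$. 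Consequently the second term carries the random denominator only through a factor that is uniformly bounded, so no pointwise control of $b^{N,M}$ is needed — this absorption is the crux of the argument.

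Taking $L^p$ norms, applying the triangle inequality, and inserting the two Marcinkiewicz--Zygmund estimates gives
$$\|\pi^{N,M}(\bar f)\|_p\le\frac{2B_p\|g_y^{d_z}\|_\infty\|f\|_\infty}{B\sqrt N}+\frac{2\|f\|_\infty}{B}\cdot\frac{B_p\|g_y^{d_z}\|_\infty}{\sqrt N},$$
so that \eqref{eq:ThIS2} holds with $C_y^{d_z}=4B_p\|g_y^{d_z}\|_\infty/\pi_0(l_y^{d_z})$. This constant is independent of $N$, $M$, and $f$, as claimed; it does, however, depend on $p$ through $B_p$ and encodes all the dependence on $y$ and $d_z$ through $\|g_y^{d_z}\|_\infty$ and $\pi_0(l_y^{d_z})$, which is precisely the quantity whose scaling in $d_z$ the rest of the paper sets out to analyse.
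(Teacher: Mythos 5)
Your proof is correct and follows essentially the same route as the paper's: both rest on the ratio decomposition in which the deterministic bound $|\pi^{N,M}(\cdot)|\le\|\cdot\|_\infty$ on the self-normalized estimator absorbs the random denominator (in the paper this step is packaged as Proposition \ref{MIP}), followed by the Marcinkiewicz--Zygmund inequality applied to i.i.d.\ zero-mean bounded sums, yielding the same constant $C_y^{d_z}\propto \|g_y^{d_z}\|_\infty/\pi_0(l_y^{d_z})$. The only cosmetic difference is that you center $f$ and treat the numerator as a single zero-mean sum $\frac{1}{N}\sum_i U_i$, whereas the paper keeps $f$ uncentered and splits the numerator error into an outer-sampling term and an inner-sampling term via the triangle inequality, handling each with the same inequality.
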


\noindent\textbf{Proof:} See Appendix \ref{appThIS2}. \qed

The inequality in \eqref{eq:ThIS2} is well know for standard importance samplers \cite{agapiou2017importance}. It states that the approximation error is $\mathcal{O}(N^{-\frac{1}{2}})$. However, it also shows that, for given $N$, the error depends on the dimension $d_z$ of the nuisance variables through the constant $C_y^{d_z}$.  

In particular, let us assume without loss of generality that $\|g_y^{d_z}\|_\infty \le 1$ (and $\| l_y^{d_z} \|_\infty \le 1$ as a consequence). Then, the constant on the right-hand-side of \eqref{eq:ThIS2} can be shown to have the form (see Eq. \eqref{eq:C_ydz} in Appendix \ref{appThIS2}) 
\beq
C_y^{d_z} \leq \frac{\mathcal{C}_p}{\pi_0(l_y^{d_z}) } 
	,
\label{eqCy}
\eeq
where $\mathcal{C}_p$ is a finite constant (which only depends on $p$) that results from the application of the Marcinkiewicz-Zygmund inequality. 
Therefore, any dependence on the dimension of the nuisance variable $Z$ is encapsulated in the normalization constant $\pi_0(l_y^{d_z})$.

To analyze how the normalization constant~$\pi_0(l_y^{d_z})$ behaves as the dimension~$d_z$ grows, we begin with a simple analytically tractable model. In Section~\ref{sec:Linear_Gaussian} below, we examine a linear–Gaussian model in which the true posterior measure $\pi(\sd x)$ is Gaussian and can be computed exactly.


\section{Linear and Gaussian models} \label{sec:Linear_Gaussian}

\subsection{The model}\label{Gauss_Model} 

We consider a family of linear Gaussian models $\{\mM^{d_z}\}_{d_z \in \mbN}$, indexed by the dimension $d_z$
of the nuisance variables. To be specific, let $X \sim \mN(\sd x;\mu_x,\,\Sigma_x)$ denote a $d_x$-dimensional Gaussian vector with $d_x \times 1$ mean vector $\mu_x$ and $d_x \times d_x$ positive definite covariance matrix $\Sigma_x$, and let $U \sim \mN(\sd z; 0,Q^{d_z})$ and $V \sim \mN(\sd y; 0,R)$ be two zero-mean Gaussian vectors with covariance matrices $Q^{d_z} \in \Real^{d_z \times d_z}$ and $R \in \Real^{d_y\times d_y}$, respectively. $U$ and $V$ are mutually independent and also independent of $X$. Then, we construct the nuisance variable as
\beq
Z = H^{d_z}X + U, \label{eqExZ} 
\eeq
where $H^{d_z} \in \Real^{d_z \times d_x}$ is a fixed and known matrix, and the observation vector as 
\beq
Y = AX + B^{d_z}Z + V, \label{eqExY}
\eeq
where the matrices $A \in \Real^{d_y \times d_x}$ and $B^{d_z} \in \Real^{d_y \times d_z}$ are fixed and known as well.

Substituting \eqref{eqExZ} into \eqref{eqExY}, one can rewrite the observation vector $Y$ in terms of the state of interest $X$ and known model parameters alone, namely,
\beq
Y = T^{d_z} X + D^{d_z}, \quad \text{where} \quad T^{d_z}=A+B^{d_z}H^{d_z} \quad \text{and} \quad D^{d_z} = B^{d_z}U+ V.
\label{eqExYX}
\eeq
As a result, $Y$ is Gaussian with distribution $\mN(\sd y; \mu_y^{d_z},\Sigma_y^{d_z})$, where 
\beq
\mu_y^{d_z} = T^{d_z}\mu_x, 
\quad \text{and} \quad 
\Sigma_y^{d_z} = T^{d_z}\Sigma_x (T^{d_z})^\top + B^{d_z}Q^{d_z}(B^{d_z})^\top + R. \label{eq:mu_Sd}
\eeq
If we assume that  $T^{d_z}\Sigma_x(T^{d_z})^\top$, $B^{d_z}Q^{d_z}(B^{d_z})^\top$ and $R$ are positive definite, then 
\[
y^\top \Sigma_y^{d_z} y \ge 0,
\quad\text{for all }y \in \mathbb{R}^{d_y},
\]   
hence $\Sigma_y^{d_z}$ is also positive definite.

Let $\sff(y|x,z)$, $\sff(y|x)$ and $\sff(y)$ denote
\begin{itemize}
\item the conditional pdf of the r.v. $Y$ given $X=x$ and $Z=z$,
\item the conditional pdf of the r.v. $Y$ given $X=x$, and
\item the marginal pdf of the r.v. $Y$,
\end{itemize}
respectively. All three pdf's are Gaussian: $\sff(y|x,z)$ corresponds to Eq. \eqref{eqExY}, $\sff(y|x)$ corresponds to Eq. \eqref{eqExYX} and $\sff(y)$ is the density of the distribution $\mN(\sd y; \mu_y^{d_z},\Sigma_y^{d_z})$. For the observation model in \eqref{eqExY} the likelihood function can be written as 
\beq
g_y^{d_z}(x,z) = \exp\left\{
	-\frac{1}{2}\left(
		y - Ax - B^{d_z}z
	\right)^\top R^{-1} \left(
		y - Ax - B^{d_z}z
	\right)
\right\},   \notag
\eeq
which satisfies $\| g_y^{d_z} \|_\infty = 1$ for all $y$ and can be written as $g_y^{d_z}(x,z) = \sff(y|x,z)\sqrt{(2\pi)^{d_y}|R|}$, where $\abs{R}$ denotes the determinant of $R$. Then it readily follows that 
\beq 
l_y^{d_z}(x) = \int g_y^{d_z}(x,z)\kappa^{d_z}(x,\sd z) = \sff(y|x) \sqrt{(2\pi)^{d_y}|R|}
\nn
\eeq
and
\beq
\pi_0(l_y^{d_z}) = \sff(y) \sqrt{(2\pi)^{d_y}|R|} = \sqrt{
	\frac{|R|}{|\Sigma_y^{d_z}|} 
} \exp\left\{
	-\frac{1}{2}(y-\mu_y^{d_z})^\top(\Sigma_y^{d_z})^{-1}(y-\mu_y^{d_z})
\right\}.
\label{eq:pi0lyd}
\eeq
Note from Eq. \eqref{eq:pi0lyd} that the dependence of the integral $\pi_0(l_y^{d_z})$ on the dimension $d_z$ is encapsulated in the argument of the exponential term and in $|\Sigma_y^{d_z}|$, therefore, the accuracy of the nested IS approximation is governed by the structure of the model matrices in expression~\eqref{eq:mu_Sd}. In particular, if we can uniformly control both the exponential term and the determinant of $\Sigma_y^{d_z}$ w.r.t. the nuisance variable dimension $d_z$, then it is possible to build linear Gaussian models in which the approximation error remains uniformly bounded as $d_z$ grows. 
The previous discussion is formalized in the following theorem.

\begin{theorem}\label{thm:LG_pol}
Let \(\{\mM^{d_z}\}_{d_z \in \mathbb{N}}\) be the family of linear Gaussian models introduced above. Assume that there exists a polynomial \(\widetilde{P}_m : \mathbb{R} \mapsto \mathbb{R}\) of degree \(m \in \mathbb{N}\) such that
\beq\label{eq:SL_Pm}
\max \left\{ \sigma_1(B^{d_z})^{2d_y},\, \sigma_1(H^{d_z})^{2d_y}, \; \lambda_1(Q^{d_z})^{d_y} \right\}\leq \widetilde{P}_m(d_z), 
\quad \forall d_z \in \mbN.
\eeq
Then for any $r>0$, there exists a polynomial \(P_{n,r}(d_z)\) of degree at most \(n\leq 2m\), such that
\[
C_y^{d_z} \leq P_{n,r}(d_z), \quad \forall y\in B_{r}(\mu_y^{d_z}) := \{ y'\in\mbR^{d_y}: \| y' - \mu_y^{d_z} \| < r \},
\]
where $C_y^{d_z}$ is the constant in Theorem \ref{thIS2} (see also \eqref{eqCy}).  
Consequently, for any \(f \in B(\mX)\),
\[
\left\| \pi(f) - \pi^{N,M}(f) \right\|_p \leq \frac{P_{n,r}(d_z)\, \|f\|_\infty}{\sqrt{N}}, \quad \forall y\in B_{r}(\mu_y^{d_z}).
\]
\end{theorem}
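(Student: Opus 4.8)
The plan is to reduce the whole statement to the explicit formula for $\pi_0(l_y^{d_z})$ in \eqref{eq:pi0lyd} and then bound, separately and as functions of $d_z$, the two factors appearing in its reciprocal. Starting from $C_y^{d_z} \leq \mathcal{C}_p / \pi_0(l_y^{d_z})$ in \eqref{eqCy}, I would write
\[
\frac{1}{\pi_0(l_y^{d_z})} = \sqrt{\frac{\abs{\Sigma_y^{d_z}}}{\abs{R}}}\,\exp\left\{\tfrac{1}{2}(y-\mu_y^{d_z})^\top (\Sigma_y^{d_z})^{-1}(y-\mu_y^{d_z})\right\},
\]
so that the task splits into controlling the exponential factor and the determinant factor uniformly over $y \in B_r(\mu_y^{d_z})$.

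For the exponential factor, the key observation is that \eqref{eq:mu_Sd} exhibits $\Sigma_y^{d_z} = R + \bigl[T^{d_z}\Sigma_x(T^{d_z})^\top + B^{d_z}Q^{d_z}(B^{d_z})^\top\bigr]$ as $R$ plus a positive semidefinite matrix. Weyl's monotonicity inequality then yields $\lambda_{d_y}(\Sigma_y^{d_z}) \geq \lambda_{d_y}(R) > 0$ \emph{uniformly} in $d_z$, so $(\Sigma_y^{d_z})^{-1} \preceq \lambda_{d_y}(R)^{-1} I$. Since $\|y-\mu_y^{d_z}\|\leq r$ on the ball, this gives
\[
(y-\mu_y^{d_z})^\top (\Sigma_y^{d_z})^{-1}(y-\mu_y^{d_z}) \leq \frac{r^2}{\lambda_{d_y}(R)},
\]
whence the exponential factor is bounded by the $d_z$-independent constant $\exp\{r^2/(2\lambda_{d_y}(R))\}$ throughout $B_r(\mu_y^{d_z})$. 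This is precisely the role played by restricting $y$ to a ball of fixed radius.

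For the determinant factor I would bound $\abs{\Sigma_y^{d_z}} \leq \lambda_1(\Sigma_y^{d_z})^{d_y}$ and estimate $\lambda_1(\Sigma_y^{d_z})$ by subadditivity of the top eigenvalue together with submultiplicativity of the largest singular value:
\[
\lambda_1(\Sigma_y^{d_z}) \leq \sigma_1(T^{d_z})^2\lambda_1(\Sigma_x) + \sigma_1(B^{d_z})^2\lambda_1(Q^{d_z}) + \lambda_1(R), \quad \sigma_1(T^{d_z}) \leq \sigma_1(A) + \sigma_1(B^{d_z})\sigma_1(H^{d_z}).
\]
Raising this sum to the power $d_y$ and expanding, every resulting term is a product of powers of $\sigma_1(B^{d_z})$, $\sigma_1(H^{d_z})$ and $\lambda_1(Q^{d_z})$ with total multiplicities at most $2d_y$, $2d_y$ and $d_y$ respectively; the dominant contributions are of the form $\sigma_1(B^{d_z})^{2d_y}\sigma_1(H^{d_z})^{2d_y}$ and $\sigma_1(B^{d_z})^{2d_y}\lambda_1(Q^{d_z})^{d_y}$. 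By hypothesis \eqref{eq:SL_Pm} each of these is at most $\widetilde{P}_m(d_z)^2$, a polynomial of degree $2m$, so $\abs{\Sigma_y^{d_z}}$ is dominated by a polynomial of degree at most $2m$. Using the elementary estimate $\sqrt{t}\leq 1+t$, the factor $\sqrt{\abs{\Sigma_y^{d_z}}}$ is then likewise dominated by a polynomial of degree at most $2m$.

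Combining the two estimates yields $C_y^{d_z} \leq P_{n,r}(d_z)$ with $n \leq 2m$ for all $y \in B_r(\mu_y^{d_z})$, and the stated error bound follows at once by substituting this into \eqref{eq:ThIS2} of Theorem \ref{thIS2}. I expect the main obstacle to be the combinatorial bookkeeping in the last step: one must check, through the multinomial expansion of $\lambda_1(\Sigma_y^{d_z})^{d_y}$ and the exact normalization built into \eqref{eq:SL_Pm} (where the spectral quantities already appear at their $2d_y$-th, $2d_y$-th and $d_y$-th powers), that no cross term exceeds degree $2m$. The other point that genuinely uses the structural hypotheses is the lower bound $\lambda_{d_y}(\Sigma_y^{d_z}) \geq \lambda_{d_y}(R)$, which relies on the positive-definiteness assumptions accompanying \eqref{eq:mu_Sd} and cannot be dispensed with if the exponential factor is to stay uniformly bounded.
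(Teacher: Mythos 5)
Your proposal is correct and follows essentially the same route as the paper: reduce to the explicit formula for $1/\pi_0(l_y^{d_z})$, bound the exponential factor uniformly on $B_r(\mu_y^{d_z})$ via $\lambda_{d_y}(\Sigma_y^{d_z})\geq\lambda_{d_y}(R)$, and bound the determinant through $\sigma_1(T^{d_z})\leq\sigma_1(A)+\sigma_1(B^{d_z})\sigma_1(H^{d_z})$ and the hypothesis \eqref{eq:SL_Pm}. The only (harmless) deviations are cosmetic: your operator bound $(\Sigma_y^{d_z})^{-1}\preceq\lambda_{d_y}(R)^{-1}I$ gives the constant $r^2/\lambda_{d_y}(R)$ more directly (and slightly sharper) than the paper's componentwise argument in Lemma~\ref{lem:expT}, and you use $\abs{\Sigma_y^{d_z}}\leq\lambda_1(\Sigma_y^{d_z})^{d_y}$ with Weyl subadditivity where the paper invokes Fiedler's determinant inequality, both yielding the same bound as Lemma~\ref{lem:detT}.
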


\begin{proof}
See Appendix \ref{Ap:Linear_Gaussian}.
\end{proof}

\begin{remark}
Under the assumptions of Theorem~\ref{thm:LG_pol}, the number of samples required to guarantee a prescribed approximation error for the posterior expectation \(\pi(f)\) grows at most polynomially with the dimension $d_z$ of the nuisance variable $Z$, rather than exponentially. This highlights the role of structural properties (e.g., boundedness of singular values and eigenvalues) in mitigating the curse of dimensionality in IS.
\end{remark}

\begin{corollary}\label{cor:conv_unif_dz}
Under the assumptions of Theorem~\ref{thm:LG_pol}, if the inequality \eqref{eq:SL_Pm} holds with \(m = 0\), then there exists a constant \(C_y < \infty\), independent of \(N\), \(M\), \(f\), and \(d_z\), such that
\[
\left\| \pi(f) - \pi^{N,M}(f) \right\|_p \leq \frac{C_y\, \|f\|_\infty}{\sqrt{N}}, \quad \forall\, f \in B(\mX).
\]
In particular, $\lim_{N \to \infty} \| \pi(f) - \pi^{N,M}(f) \|_p = 0,$
uniformly in \(d_z\).
\end{corollary}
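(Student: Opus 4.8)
The plan is to obtain the corollary as the degree-zero specialization of Theorem~\ref{thm:LG_pol}. First I would set $m=0$ in the hypothesis \eqref{eq:SL_Pm}, so that the dominating polynomial $\widetilde{P}_0$ reduces to a finite constant; concretely, $\sigma_1(B^{d_z})$, $\sigma_1(H^{d_z})$ and $\lambda_1(Q^{d_z})$ are then all bounded uniformly in $d_z$. The heart of the argument is simply that the degree bound $n \le 2m$ collapses to $n \le 0$, forcing the polynomial $P_{n,r}$ produced by the theorem to be a constant.

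The one point that requires a small argument is matching the quantifier on the observation. Theorem~\ref{thm:LG_pol} delivers a bound valid for $y \in B_r(\mu_y^{d_z})$, whereas here $y \in \mY$ is a single fixed observation and the centre $\mu_y^{d_z} = T^{d_z}\mu_x$ varies with $d_z$. I would therefore first verify that the family of centres stays in a bounded region. Since $T^{d_z} = A + B^{d_z}H^{d_z}$, the uniform bounds on $\sigma_1(B^{d_z})$ and $\sigma_1(H^{d_z})$ give $\sigma_1(T^{d_z}) \le \sigma_1(A) + \sigma_1(B^{d_z})\,\sigma_1(H^{d_z})$, which is finite and independent of $d_z$; hence $\|\mu_y^{d_z}\| \le \sigma_1(T^{d_z})\,\|\mu_x\|$ is bounded. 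Consequently there is a finite radius $r$ (depending on $y$ but not on $d_z$) such that $y \in B_r(\mu_y^{d_z})$ for every $d_z \in \mbN$.

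With such an $r$ fixed, I would invoke Theorem~\ref{thm:LG_pol}: the resulting $P_{n,r}$ has degree at most $n \le 2m = 0$, so it is a constant. Setting $C_y := P_{0,r} < \infty$ yields $C_y^{d_z} \le C_y$ for all $d_z$, with $C_y$ independent of $N$, $M$, $f$, and $d_z$. Substituting this into the error bound of Theorem~\ref{thIS2} gives $\| \pi(f) - \pi^{N,M}(f)\|_p \le C_y\,\|f\|_\infty / \sqrt{N}$ for every $f \in B(\mX)$. Uniform convergence then follows at once: the right-hand side tends to $0$ as $N \to \infty$ and does not depend on $d_z$, so the convergence $\lim_{N\to\infty}\| \pi(f) - \pi^{N,M}(f)\|_p = 0$ holds uniformly in $d_z$.

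The main (and essentially only) obstacle is the quantifier-matching step: confirming that a single radius $r$ works for the fixed $y$ across all $d_z$, which is exactly what the $m=0$ control on the singular values buys through the boundedness of $\mu_y^{d_z}$. Everything else is a direct reading of the degree bound $n \le 2m$ at $m=0$, and I would keep the write-up correspondingly short.
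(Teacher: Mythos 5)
Your proof is correct and follows exactly the route the paper intends: the corollary is the $m=0$ specialization of Theorem~\ref{thm:LG_pol}, with the degree bound $n\le 2m$ collapsing the polynomial $P_{n,r}$ to a constant. Your quantifier-matching step --- using the uniform bounds on $\sigma_1(B^{d_z})$ and $\sigma_1(H^{d_z})$ to show $\|\mu_y^{d_z}\|$ is bounded uniformly in $d_z$, so a single radius $r$ works for the fixed $y$ --- is precisely the point the paper addresses in Remark~\ref{rem:r_dz}, and you handle it correctly.
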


We emphasize that the key challenge in obtaining the previous results within the context of the family of linear Gaussian models $\mathcal{M}^{d_z}$ is the control of the exponential term and the determinant $|\Sigma_y^{d_z}|$ in Eq.~\eqref{eq:pi0lyd}. However, we have found that addressing the exponential term becomes more straightforward if we treat the observation $Y$ as a r.v. rather than a fixed value. This approach simplifies the analysis for this specific scenario, as discussed in Section \ref{ssec:Random observations}.


\subsection{Random observations}\label{ssec:Random observations} Let us assume now that, rather than arbitrary and fixed, the observation $Y$ is random and generated by the model described in Section \ref{Gauss_Model}. Within this setting, we can still obtain a positive lower bound for $\pi_0(l_Y^{d_z})$ and, hence, a finite upper bound for $C_Y^{d_z}$, which holds uniformly over $d_z\in \mbN.$ We use the subscript $Y$ instead of $y$ to emphasize the randomness of the observation. To be specific, let
 $\mathcal{G}=\sigma$-$(Y)$ denote the $\sigma$-algebra generated by the random observation $Y.$ Following exactly the same argument as in the proof of Theorem \ref{thIS2} one can show that 
\beq
\mbE \left[ \abs{\pi(f) - \pi^{N,M}(f)}^p \mid \mathcal{G} \right]^{\frac{1}{p}} \leq \frac{
	C_Y^{d_z} \| f \|_\infty
}{
\sqrt{N}
},  \notag
\eeq
 where  
 $$
 C_Y^{d_z} \leq \frac{\mathcal{C}_p \norm{g_Y^{d_z}}_\infty  }{\pi_0(l_Y^{d_z}) },
 $$
and $\mathcal{C}_p< \infty$ is a deterministic constant independent of $d_z$.  Note that, as a consequence, $C_Y^{d_z}$ and $\pi_0(l_Y^{d_z})$ are now r.v.'s, while $\| g_Y^{d_z}\|_{\infty}=1$ independently of $Y$ by construction. 
The integral $\pi_0(l_Y^{d_z})$ can be explicitly written as
\beq
\pi_0(l_Y^{d_z}) = \sqrt{
	\frac{|R|}{|\Sigma_Y^{d_z}|} 
} \exp\left\{
	-\frac{1}{2}(Y-\mu_y^{d_z})^\top\Sigma_Y^{-1}(Y-\mu_y^{d_z})
\right\}= \sqrt{
	\frac{|R|}{|\Sigma_Y^{d_z}|} 
} \exp \left(-\frac {\xi_Y^{2}}{2}\right),
\nn
\eeq
where the r.v. $\xi^2_Y:=(Y-\mu_y^{d_z})^\top\Sigma_Y^{-1}(Y-\mu_y^{d_z})$ is distributed according to a $\chi^2$ probability law with  $d_y$ degrees of freedom (where $d_y$ is the dimension of $Y$) \cite{koch2007introduction} and, therefore, it is independent of $d_z$. 
The determinant of $\Sigma_Y^{d_z}$ can be bounded in the same manner as shown in Lemma~\ref{lem:detT} in Appendix~\ref{Ap:Linear_Gaussian}. As a result, we can obtain an upper bound for the error that remains uniform in the dimension $d_z$ even in this random setting. 

As highlighted in this section, exploiting the randomness and structure of the observation $Y$ can significantly simplify certain aspects of the error bound analysis within the IS scheme. We explore this in further detail in Section \ref{sec:General_Random} below.


\section[Models with random observations]{General models with random observations}\label{sec:General_Random}

In the preceding sections, we have examined the approximation error \( \| \pi(f) - \pi^{N,M}(f) \|_p \) for a fixed observation $Y=y$. In this section, we adopt the viewpoint that $Y$  itself is a r.v. With this perspective, the likelihood function and the posterior measure also become random objects, in particular, the quantity $\mbE\left[ | \pi(f) - \pi^{N,M}(f) |^p \vert \mG \right]^{\frac{1}{p}}$ becomes random as well, because it depends on $Y$. To derive an upper bound on this random error, we take the expectation w.r.t. $Y$. This leads us naturally to examine how integration over the random observation is performed. The central tool for this development is the Bochner integrability \cite{dashti2015bayesian} of a specially constructed function, which acts as a link between the fixed and random observation formulations.

\subsection{Random posterior measure}
As hinted in Section \ref{ssec:Random observations}, the approximation error \( \| \pi(f) - \pi^{N,M}(f) \|_p \) can become easier to be analyzed when adopting the perspective that the observation $Y$ is a r.v. generated by the model in Section \ref{ssec:sIS2}.    
In that setting both
\begin{equation*}
    l_Y^{d_z}(x) := \int_{\mathcal{Z}} g_Y^{d_z}(x, z) \, \kappa^{d_z}(x, \sd z), 
\end{equation*}
and \( \pi_0(l_Y^{d_z}) \) are real-valued r.v.'s. 
Consequently, the posterior distribution becomes a random probability measure, which we denote as \( \pi_Y(\sd x) \propto l_Y^{d_z}\pi_0(\sd x) \) with the subscript $_Y$ included to emphasize the randomness of the observation.

\medskip
Throughout this section, we assume that \( g_y^{d_z} \in L^2(\mix^{d_z}) \) for all \( y \in \mathcal{Y} \), where the joint measure \( \mix^{d_z} \) is defined in Eq.~\eqref{eq:Mix_meas}. We further assume that \( g_y^{d_z}(x, z) > 0 \) for all \( (x, z) \in \mathcal{X} \times \mathcal{Z} \) and all \( y \in \mathcal{Y} \). In addition, \( g_y^{d_z} \) is now assumed to be normalized, namely,
\begin{equation*} 
    \int_{\mathcal{Y}} g_y^{d_z}(x, z) \, \lambda(\sd y) = 1, \quad \forall (x, z) \in \mathbb{R}^{d_x \times d_z},
\end{equation*}
where \( \lambda \) denotes the Lebesgue measure on \( \mathcal{Y} \). Therefore \( g_y^{d_z} \) coincides with the conditional pdf of \( Y \) given \( (X, Z) = (x, z) \). This normalization assumption is introduced to facilitate the analysis. However, it does not affect the construction of the estimators in Algorithm~\ref{A2}, which rely on normalized weights. In other words, the estimators \( l_Y^M \) and \( \pi_Y^{N,M} \) can still be constructed using a likelihood function of the form $g_y^{d_z}(x,z)= cg_y^{d_z}(y \mid x,z)$ for an arbitrary constant $c>0$ independent of $x$ and $z$.

Much of the analysis to be elaborated in this section depends on two objects that we introduce next:
\begin{itemize}
\item The probability measure \( \eta \in \mP(\mY) \), defined as
\[
\eta(\sd y) :=  \pi_0(l_y^{d_z}) \, \lambda(\sd y),
\]
corresponds to the marginal probability law of the r.v. $Y$ induced by the model $\mM^{d_z}$. 

\item The measurable mapping
\beq
\begin{array}{llll}
    \ell: &\mathcal{Y} &\mapsto &L^2(\mix^{d_z})\\
    &y &\leadsto &\ell_y := \frac{g_y^{d_z}(\cdot,\cdot)}{\mix^{d_z}(g_y^{d_z})},
\end{array}
\label{eq:link}
\eeq
which we refer to as the \emph{link function}, assigns to each observation \( y \) the corresponding normalized density over \( \mX\times \mZ \). Since \( \mix^{d_z}(g_y^{d_z}) = \pi_0(l_y^{d_z}) > 0 \), the map \( \ell_y \) is well defined.
\end{itemize}

For any test function \(\varphi \in L^2(\pi_0)\), define its constant extension
\begin{equation*}
\widetilde\varphi(x,z) := \varphi(x).
\end{equation*}
Trivially, $\widetilde \varphi \in L^2(\mix^{d_z})$.  The posterior expectation under the random observation \( Y \) satisfies
\begin{equation}\label{eq:pi_Y_Inner}
    \pi_Y(\varphi) = \int_\mathcal{X} \varphi(x) \, \pi_Y(\sd x) 
= \frac{\int_\mathcal{X} \varphi(x)\, l_Y^{d_z}(x) \, \pi_0(\sd x)}{\pi_0(l_Y^{d_z})} 
= \left\langle \widetilde\varphi, \ell_Y \right\rangle_{L^2(\mix^{d_z})},
\end{equation}
where $\langle \cdot, \cdot \rangle_{L^2(\mix^{d_z})}$ denotes the inner product defined in the Hilbert space $L^2(\mix^{d_z})$, i.e. 

$$ \left\langle \widetilde\varphi, \ell_Y \right\rangle_{L^2(\mix^{d_z})}= \int_\mX \int_\mZ \widetilde \varphi(x,z) \ell_Y(x,z) \mix^{d_z}( \sd (x, z)).$$

\subsection{Bochner integrability}\label{sec:Bochner_Int}
Take test functions $\varphi,\psi \in L^2(\pi_0)$. The quantity \( \mathbb{E}[\pi_Y(\varphi)] \) defines a linear functional on \( L^2(\pi_0) \); meanwhile, the quantity \( \mathbb{E}[\pi_Y^2(\varphi)] \) defines a quadratic form on \( L^2(\pi_0) \) induced by the symmetric bilinear form 
\beq
\begin{array}{lccl}
B : &L^2(\pi_0) \times L^2(\pi_0) &\mapsto &\mbR\\
&(\varphi, \psi) &\leadsto &\mathbb{E}[\pi_Y(\varphi)\, \pi_Y(\psi)].    
\end{array}
\nn
\eeq
By the inner-product representation of the random measure $\pi_Y$  in Eq.~\eqref{eq:pi_Y_Inner}, the boundedness of  \( \mathbb{E}[\pi_Y^p(\varphi)] \) with $p\in\{1,2\}$, is closely related to the Bochner integrability of the map \( y \mapsto \ell_y \). 

The function \( \ell \) is called Bochner $p$-integrable (see Proposition~1.2.2 in \cite{tuomas2016analysis}) if, and only if,
\begin{enumerate}[label=\roman*)]
    \item \( \ell \) is strongly \( \eta \)-measurable (see Definition~1.1.14 in \cite{tuomas2016analysis}), and
    \item the following integral is finite
  \begin{equation}
    \label{eq:Kpdz}
    K_p^{d_z} := \int_{\mathcal{Y}} \| \ell_y \|_{L^2(\mix^{d_z})}^p \, \eta(\sd y) < \infty.
  \end{equation}
\end{enumerate}

The first condition is satisfied under mild regularity assumptions on \( g_y^{d_z} \), which already holds for the models of interest (see Lemma~\ref{lem:S_M}). Therefore, for the remainder of the paper, we only require the finiteness of the integral $K_p^{d_z} $ in \eqref{eq:Kpdz} to ensure Bochner integrability.

For \( 1 \leq p \leq \infty \), we denote by \( L^p(\mathcal{Y}; L^2(\mix^{d_z})) \) the Banach space of Bochner \( p \)-integrable functions \( \upsilon : \mathcal{Y} \mapsto L^2(\mix^{d_z}) \) w.r.t. the measure \( \eta \) (cf. \cite{tuomas2016analysis}). In particular, for \( p = \infty \), the norm is defined as
\[
\|\upsilon\|_{L^\infty(\mathcal{Y}; L^2(\mix^{d_z}))} := \operatorname*{ess\,sup}_{y \in \mathcal{Y}} \| \upsilon_y \|_{L^2(\mix^{d_z})} < \infty.
\]
Since \( \eta \) is a probability measure, for $1\leq p < \infty$, we have the continuous embedding 
\[
L^{\infty}(\mathcal{Y}; L^2(\mix^{d_z})) \subseteq L^{p+1}(\mathcal{Y}; L^2(\mix^{d_z})) \subseteq L^p(\mathcal{Y}; L^2(\mix^{d_z})).
\]

With this notation, we formalize the above discussion in the following result.
\begin{theorem}\label{thm:4.1}
Assume that \( \ell \in  L^p(\mathcal{Y}; L^2(\mix^{d_z})) \) for \( p \in \{1,2\} \). Then, for all test functions \( \varphi \in L^2(\pi_0) \),
\[
\mathbb{E}[\pi_Y^p(\varphi)] \leq K_p^{d_z} \, \|\varphi\|_{L^2(\pi_0)}^p,
\]
where $K_p^{d_z}<\infty$ is the constant in \eqref{eq:Kpdz} and $p\in\{1,2\}$.
\end{theorem}

\begin{proof}
See Lemma \ref{lem:bounded_functional} and Remark \ref{rem:proof_cor} in Appendix \ref{apx:Bochner-F}.
\end{proof}


\subsection{Approximation errors}

We now provide a convergence result for the approximation of the posterior expectation with random observations.

\begin{theorem}\label{thm:thIS2R}
For all $y\in \mY$, let \( g_y^{d_z}(x,z)\in L^2(\mix^{d_z}) \) be a strictly positive conditional pdf. Then, for any \( f\in B(\mathcal{X})\) the following statements are equivalent:
\begin{enumerate}[label=\roman*)]
    \item There is a constant \( C_2^{d_z} < \infty \), independent of \( N \), \( M \), and \( f \), such that
    \begin{equation}\label{eq:Iff_error}
        \left\| \pi_Y(f) - \pi_Y^{N,M}(f) \right\|_2 \leq \frac{C_2^{d_z} \| f \|_\infty}{\sqrt{N}}.
    \end{equation}
    \item  \( \ell \in L^2\big( \mathcal{Y}; L^2(\mix^{d_z}) \big) \).
\end{enumerate}
Moreover 
\[
C_2^{d_z} \leq B_2 K_2^{d_z},
\]
where \( B_2 \) is independent of $d_z$, and $K_2^{d_z}$ is defined in \eqref{eq:Kpdz}.
\end{theorem}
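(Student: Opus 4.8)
The plan is to prove the equivalence by establishing the two implications separately, extracting the quantitative bound $C_2^{d_z}\le B_2 K_2^{d_z}$ from the sufficiency direction. The unifying idea is that conditioning on $\mG=\sigma(Y)$ reduces everything to the fixed-observation analysis of Theorem~\ref{thIS2}, with the crucial refinement that the fixed-$y$ error constant equals, up to a universal factor, $\|\ell_y\|_{L^2(\mix^{d_z})}=\|g_y^{d_z}\|_{L^2(\mix^{d_z})}/\pi_0(l_y^{d_z})$; integrating over $Y\sim\eta$ then converts $\mbE[\|\ell_Y\|_{L^2(\mix^{d_z})}^2]$ into exactly $K_2^{d_z}$. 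Note that here one works with the $L^2(\mix^{d_z})$ norm of $g_y^{d_z}$ rather than its $L^\infty$ norm, so the boundedness hypothesis of Theorem~\ref{thIS2} is relaxed.

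\emph{Sufficiency and the quantitative bound.} I would re-run the proof of Theorem~\ref{thIS2} at a fixed $y$, measuring the likelihood in $L^2(\mix^{d_z})$. Writing the self-normalised error as $\pi_y^{N,M}(f)-\pi_y(f)=U_N/S_N$ with $S_N=\tfrac1N\sum_i l_y^{M,d_z}(x^i)$ and $U_N=\tfrac1N\sum_i (f(x^i)-\pi_y(f))\,l_y^{M,d_z}(x^i)$, the summands of $U_N$ are i.i.d. and centred, since $\mbE[(f(X)-\pi_y(f))\,l_y^{d_z}(X)]=0$. The Marcinkiewicz-Zygmund inequality bounds $\|U_N\|_2$ by $N^{-1/2}\|f\|_\infty\,\mbE[(l_y^{M,d_z}(X))^2]^{1/2}$, and the unbiasedness $\mbE[l_y^{M,d_z}(X)\mid X]=l_y^{d_z}(X)$ together with the inner-sample variance gives $\mbE[(l_y^{M,d_z}(X))^2]\le \pi_0((l_y^{d_z})^2)+M^{-1}\mix^{d_z}((g_y^{d_z})^2)\le 2\,\mix^{d_z}((g_y^{d_z})^2)$, using $\pi_0((l_y^{d_z})^2)\le \mix^{d_z}((g_y^{d_z})^2)$ (Jensen on the inner kernel integral). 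Controlling $S_N$ around $\pi_0(l_y^{d_z})$ in the standard way yields $\mbE[|\pi_y^{N,M}(f)-\pi_y(f)|^2\mid Y=y]^{1/2}\le C\,\|f\|_\infty\,\|\ell_y\|_{L^2(\mix^{d_z})}/\sqrt N$ with $C$ universal. Squaring, taking the expectation over $Y\sim\eta$, and using $\mbE[\|\ell_Y\|_{L^2(\mix^{d_z})}^2]=K_2^{d_z}$ gives (i) with $C_2^{d_z}=C\sqrt{K_2^{d_z}}$. Finally, Jensen applied to $\mix^{d_z}(\ell_y)=1$ shows $\|\ell_y\|_{L^2(\mix^{d_z})}\ge1$, hence $K_2^{d_z}\ge1$ and $C\sqrt{K_2^{d_z}}\le C K_2^{d_z}$, which is the asserted bound with $B_2=C$.

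\emph{Necessity.} Assume (i); it suffices to argue with $M=1$. I would lower-bound the left-hand side of (i) by the asymptotic variance: via a central limit theorem for self-normalised importance sampling (or a direct finite-$N$ second-moment estimate) one obtains, conditionally on $Y=y$, $\liminf_N N\,\mbE[|\pi_y^{N,M}(f)-\pi_y(f)|^2\mid Y=y]\ge \sigma_y^2(f):=\mbE[(f(X)-\pi_y(f))^2(l_y^{M,d_z}(X))^2]/\pi_0(l_y^{d_z})^2$ (possibly $+\infty$). Combining with (i) and Fatou over $Y$ gives $\mbE[\sigma_Y^2(f)]\le (C_2^{d_z})^2\|f\|_\infty^2$ for every $f\in B(\mX)$. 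Choosing test functions for which $(f(X)-\pi_Y(f))^2$ is bounded below for $\eta$-almost every $y$ (a random-sign construction exploiting non-atomicity of $\pi_Y$), the supremum over $\|f\|_\infty\le1$ of $\mbE[\sigma_Y^2(f)]$ recovers $\mbE[\mix^{d_z}((g_Y^{d_z})^2)/\pi_0(l_Y^{d_z})^2]=\mbE[\|\ell_Y\|_{L^2(\mix^{d_z})}^2]=K_2^{d_z}$; the choice $M=1$ is convenient because the inner-sample variance then contributes the full $\mix^{d_z}((g_Y^{d_z})^2)$. Hence $K_2^{d_z}<\infty$, i.e.\ $\ell\in L^2(\mY;L^2(\mix^{d_z}))$.

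\emph{Main obstacle.} The necessity direction is the delicate one, for two reasons: making the variance lower bound rigorous despite the random self-normalising denominator $S_N$, and producing a single test function $f$, independent of $y$, that extracts $\|\ell_y\|_{L^2(\mix^{d_z})}$ simultaneously for $\eta$-almost every $y$ (the pointwise-optimal $f$ depends on $y$). The cleanest route is to pass through the symmetric bilinear form $B(\varphi,\psi)=\mbE[\pi_Y(\varphi)\,\pi_Y(\psi)]$ and invoke the Bochner characterisation of Lemma~\ref{lem:bounded_functional}, which identifies boundedness of the induced quadratic form on $L^2(\pi_0)$ with $\ell\in L^2(\mY;L^2(\mix^{d_z}))$, i.e.\ with $K_2^{d_z}<\infty$; relating $C_2^{d_z}$ to the norm of $B$ then closes the equivalence and supplies the constant $B_2$.
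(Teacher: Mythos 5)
Your sufficiency argument is essentially the paper's: a fixed-$y$ second-moment bound in which the likelihood is measured in $L^2(\mix^{d_z})$ rather than $L^\infty$ (this is exactly Lemma~\ref{lem:L2bound}), combined with the ratio decomposition of Proposition~\ref{MIP} (your ``controlling $S_N$ around $\pi_0(l_y^{d_z})$ in the standard way''), followed by integration over $Y\sim\eta$ to convert $\mbE\big[\|\ell_Y\|_{L^2(\mix^{d_z})}^2\big]$ into $K_2^{d_z}$; your remark that $\|\ell_y\|_{L^2(\mix^{d_z})}\ge 1$, hence $K_2^{d_z}\ge 1$, is precisely what reconciles $C\sqrt{K_2^{d_z}}$ with the stated bound $B_2K_2^{d_z}$. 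This direction is sound and matches the paper.

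The necessity direction is where you genuinely diverge, and where the gaps are. First, the CLT route is circular: a central limit theorem for self-normalised importance sampling presupposes finiteness of the very asymptotic variance you are trying to establish, and when $\sigma_y^2(f)=\infty$ you would need a separate argument that $N\,\mbE[|\cdot|^2\mid Y=y]$ diverges. Second, and more seriously, your test-function construction asks for a single $f$, independent of $y$, with $(f(X)-\pi_Y(f))^2$ bounded below for $\eta$-a.e.\ $y$; this requires non-degeneracy of $\pi_Y$, an assumption absent from the theorem, and it is unclear it can be carried out. The paper avoids both problems with a much simpler device: take $f$ bounded below by some $s>0$, so that $\pi_Y^{N,M}(f)\ge s$ deterministically. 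In the exact expansion of $(\pi_Y(f)-\pi_Y^{N,M}(f))^2$ this isolates the fluctuation of the \emph{normalising constant}, namely $s^2\big(\pi_0^N(l_Y^{M,d_z})-\pi_0(l_Y^{d_z})\big)^2/\pi_0(l_Y^{d_z})^2$, and the lower Marcinkiewicz--Zygmund bound applied to the i.i.d.\ sum $\pi_0^N(l_y^{M,d_z})-\pi_0(l_y^{d_z})$ makes the term $\tfrac{1}{NM}\,\pi_0(\kappa(g_y^2))$ appear in its second moment; dividing by $\pi_0(l_y^{d_z})^2$ and integrating against $\eta$ yields exactly $K_2^{d_z}<\infty$. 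No centring of $f$, no CLT, and no random-sign construction are needed. Your fallback through the bilinear form $B(\varphi,\psi)=\mbE[\pi_Y(\varphi)\pi_Y(\psi)]$ and Lemma~\ref{lem:bounded_functional} does not obviously close the argument either: hypothesis (i) concerns only bounded test functions and the Monte Carlo error, so it gives no direct control of $\mbE[\pi_Y(\varphi)^2]$ for general $\varphi\in L^2(\pi_0)$, which is what that lemma would require.
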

\begin{proof}
See Appendix~\ref{appThIS3}.
\end{proof}

Theorem \ref{thm:thIS2R} provides a sufficient and necessary condition for the $L^2$ approximation errors of nested IS Algorithm \ref{A2} to converge with rate $\mathcal{O}(N^{-1/2})$. It also provides the means to establish a control on the effect of the dimension $d_z$ on the approximation error bounds. This is specifically provided by Theorem \ref{thIS2R} below.

\begin{theorem}\label{thIS2R}
For all $y\in \mY$, let \( g_y^{d_z}(x,z) \in L^2(\mix^{d_z})\) be strictly positive conditional pdf. Assume that \(\ell \in  L^p\big( \mathcal{Y}; L^2(\mix^{d_z}) \big) \) for some \( p \in \{1, 2\} \),  and there exists a polynomial \( \widetilde P_{n,p}(d_z) \) of degree \(m\), possibly depending on \(p\), such that
\begin{equation*}
K_p^{d_z} \leq \widetilde P_{m,p}(d_z), \quad \forall d_z \in \mbN, 
\end{equation*} where $K_p^{d_z}$ is defined in \eqref{eq:Kpdz}.
 Then, for any \( f \in B(\mX) \), there exists a polynomial \(P_{n,p}(d_z)\) of degree at most \(n\leq m\), possibly depending on \(p\), such that
\[
\left\| \pi_Y(f) - \pi_Y^{N,M}(f) \right\|_p 
\leq \frac{P_{n,p}(d_z)\, \|f\|_\infty}{\sqrt{N}}
\]
for the assumed value of $p\in\{1,2\}$.
\end{theorem}

\begin{proof}
See Appendix~\ref{ap:T4.3}.
\end{proof}

\begin{remark}
Under the assumptions of Theorem~\ref{thIS2R}, the number of samples needed to accurately approximate the posterior expectation \(\pi_Y(f)\) increases at most polynomially with the dimension \(d_z\) of the nuisance variables, rather than exponentially. This illustrates how structural properties of the model can effectively alleviate the curse of dimensionality in the nested importance sampler.
\end{remark}

\begin{remark}
The norm in Theorem~\ref{thm:thIS2R} is not the conditional expectation $$\mbE\left[ \left| \pi_Y(f) - \pi_Y^{N,M}(f) \right|^p \vert \mG \right]^{\frac{1}{p}},$$ but rather reflects the full randomness in the estimator \( \pi_Y^{N,M} \) as well as in the observation \( Y \).
\end{remark}

\begin{corollary}\label{cor:IS_uniform}
Under the assumptions of Theorem~\ref{thIS2R}, if \( m = 0 \), i.e., if there is a constant \( \mathcal{K}_p < \infty \) independent of $d_z$ such that
\begin{equation}\label{eq:U_Kdz}
\sup_{d_z \in \mathbb{N}} K_p^{d_z} \leq \mathcal{K}_p,
\end{equation}
where $K_p^{d_z}$ is defined in \eqref{eq:Kpdz},
then, for any \( f \in B(\mX) \), there exists a constant \( C_p < \infty \), independent of \( N \), \( M \), \( f \), and the dimension \( d_z \), such that
\[
\left\| \pi_Y(f) - \pi_Y^{N,M}(f) \right\|_p 
\leq \frac{C_p \, \|f\|_\infty}{\sqrt{N}}, \quad \text{for $p\in\{1,2\}$.}
\]
In particular, $\lim_{N \to \infty} \| \pi_Y(f) - \pi_Y^{N,M}(f) \|_p = 0,$
uniformly over \(d_z\).
\end{corollary}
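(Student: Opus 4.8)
The plan is to derive this statement as the degree-zero specialization of Theorem~\ref{thIS2R}. First I would note that hypothesis~\eqref{eq:U_Kdz}, namely $\sup_{d_z \in \mbN} K_p^{d_z} \le K_p$, is exactly the assertion that $K_p^{d_z} \le \widetilde P_{m,p}(d_z)$ holds with $m = 0$ and the constant polynomial $\widetilde P_{0,p} \equiv K_p$. Thus every hypothesis of Theorem~\ref{thIS2R} is in force with $m = 0$, and I may invoke its conclusion directly, without reproving anything from scratch.

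Reading off that conclusion, Theorem~\ref{thIS2R} furnishes a polynomial bound $P_{n,p}(d_z)$ of degree at most $n \le m$. With $m = 0$ we get $n \le 0$, so $P_{n,p}$ reduces to a polynomial of degree zero, i.e.\ a constant; setting $C_p := P_{0,p}$ yields
\[
\left\| \pi_Y(f) - \pi_Y^{N,M}(f) \right\|_p \le \frac{C_p \, \|f\|_\infty}{\sqrt{N}}, \qquad \forall f \in B(\mX).
\]
The one substantive point — and the step I expect to require care rather than a purely formal appeal — is confirming that $C_p$ carries \emph{no} residual dependence on $d_z$. In the proof of Theorem~\ref{thIS2R} the error is controlled by a product of a structural factor $B_p$ (originating from the Marcinkiewicz--Zygmund inequality together with the inner-product representation~\eqref{eq:pi_Y_Inner}; recall that in Theorem~\ref{thm:thIS2R} one has $C_2^{d_z} \le B_2 K_2^{d_z}$ with $B_2$ independent of $d_z$) and the Bochner norm $K_p^{d_z}$. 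Since $K_p^{d_z}$ is the only $d_z$-dependent factor in this product, the uniform bound~\eqref{eq:U_Kdz} forces $C_p \le B_p K_p$ uniformly in $d_z$. I would make this explicit by retracing the chain of inequalities in Appendix~\ref{ap:T4.3} and checking that every intervening constant is either absolute or depends only on $p$, so that the passage $m=0 \Rightarrow n=0$ genuinely produces a $d_z$-free constant.

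Finally, the uniform convergence assertion is immediate: the right-hand side $C_p \|f\|_\infty / \sqrt{N}$ tends to $0$ as $N \to \infty$, and because $C_p$ does not depend on $d_z$, this limit holds uniformly over all $d_z \in \mbN$. This mirrors exactly the linear Gaussian Corollary~\ref{cor:conv_unif_dz}, where the uniform bound arose from the $m=0$ case of Theorem~\ref{thm:LG_pol}.
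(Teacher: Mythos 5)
Your proposal is correct and follows the same route as the paper: the corollary is the $m=0$ specialization of Theorem~\ref{thIS2R}, and the paper likewise treats it as immediate, with the factorization $C_p = B_p K_p$ (with $B_p$ depending only on $p$ via the Marcinkiewicz--Zygmund inequality) recorded in Remark~\ref{rem:C_p_BK}. Your extra step of tracing the constants in Appendix~\ref{ap:T4.3} to confirm that $K_p^{d_z}$ is the only $d_z$-dependent factor is exactly the right check and matches the paper's reasoning.
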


\begin{remark}\label{rem:C_p_BK}
The constant \( C_p < \infty\) appearing in Corollary~\ref{cor:IS_uniform} can be decomposed as
\[
C_p = B_p \mathcal{K}_p,
\]
where \( B_p \) is a constant depending only on \( p \), arising from the Marcinkiewicz–Zygmund inequality, and \( \mathcal{K}_p \) captures the bounds related to the Bochner integrability of the link function $\ell$.
\end{remark}


\section{Special cases}\label{sec:Special_cases}

In this section, we explore specific scenarios under which the assumptions of Corollary~\ref{cor:IS_uniform} are satisfied, i.e., where inequality \eqref{eq:U_Kdz} holds. We concentrate on the case \(p = 2\), which supersedes \(p = 1\). First note that, using \eqref{eq:link} and the identity \( \eta(\sd y) = \mix^{d_z}(g_y^{d_z}) \, \lambda(\sd y) \), we obtain
\beq
\label{eq:C_Fb} 
\int_{\mY} \| \ell_y \|_{L^2(\mix^{d_z})}^p \, \eta(\sd y) =\int_{\mY} \frac{ \| g_y^{d_z} \|_{L^2(\mix^{d_z})}^p }{ \mix^{d_z}(g_y^{d_z})^{p-1} } \, \lambda(\sd y).
\eeq
Throughout this section, we omit the superscript \( d_z \) in the likelihood function for notational simplicity (we write \( g_y \) instead of \( g_y^{d_z} \)). Nevertheless, it should be understood that all models under consideration depend on the dimension \( d_z \).

The following remark plays a key role in our analysis.
\begin{remark}\label{rem:link_norm}
Observe that \[
\| \ell_y \|_{L^2(\mix^{d_z})}^2 = \frac{\mix^{d_z}(g_y^2)}{\mix^{d_z}(g_y)^2},
\] is directly related to the chi-square divergence between the posterior and prior distributions ($\pi$ and $\pi_0$). This factor has been exploited for the analysis of importance samplers in earlier work, see, e.g., \cite{agapiou2017importance,akyildiz2021convergence}. Integrating $\| \ell_y \|_{L^2(\mix^{d_z})}^2$  over the observation space yields
\[
\int_{\mY} \| \ell_y \|_{L^2(\mix^{d_z})}^2 \, \eta(\sd y) 
= \int_{\mX \times \mZ} \left( \int_{\mY} \frac{g_y^2(x,z)}{\mix^{d_z}(g_y)} \, \lambda(\sd y) \right) \mix^{d_z}(\sd (x,z) ).
\]
Finally, if \( \ell_Y \) is regarded as a r.v., the Bochner $p$-integrability condition in~\eqref{eq:Kpdz} is equivalent to the moment condition
\begin{equation}\label{eq:BI_as_Ex}
\mathbb{E}\left[\| \ell_Y \|_{L^2(\mix^{d_z})}^p\right] < \infty.
\end{equation}
\end{remark}

%
\subsection{Linear Gaussian model}\label{sub:LG_PC}

We now show that, in the linear and Gaussian model, the integral \( K_2^{d_z} \) is in fact bounded by a finite constant \( \mathcal{K}_2 \) independent of the dimension \( d_z \). This implies that the assumptions of Corollary~\ref{cor:IS_uniform} are satisfied in this setting.

Recall the family of linear Gaussian models $\{\mM^{d_z}\}_{d_z \in \mbN}$, indexed by the dimension $d_z$ of the nuisance variables presented in Section \ref{Gauss_Model}. The likelihood function for this model is Gaussian and can be written as 
\[
g_y(x,z) = \mathcal{N}( \sd y;Ax + B^{d_z}z, R),
\]
which, after taking squares, yields
\[
g_y^2(x,z) = \frac{1}{(2\pi)^{d_y} |R|} \exp\left\{
	-\frac{1}{2} \left( y - Ax - B^{d_z}z \right)^\top (2R^{-1}) \left( y - Ax - B^{d_z}z \right)
\right\}.
\]
It is easy to see that $g_y^2$ is proportional to a Gaussian density with halved covariance, i.e.,
\[
g_y^2(x,z) \propto \mathcal{N} \left( \sd y; Ax + B^{d_z}z, \frac{1}{2}R\right).
\]
Therefore,
\[
\int g_y^2(x,z)\, \mix^{d_z}(\sd( x,z)) = \frac{1}{(2\pi)^{d_y} |R|}  \mathcal{N}( \sd y;\mu_y^{d_z}, \Sigma_y^{d_z} - \tfrac{1}{2}R),
\]
where $\mu_y^{d_z}$ and $\Sigma_y^{d_z}$ are as defined in \eqref{eq:mu_Sd}. From \eqref{eq:C_Fb} and \eqref{eq:BI_as_Ex}, we observe that in order to satisfy the integrability condition \eqref{eq:Kpdz} it suffices to find a finite bound for
\begin{equation}\label{eq:norm_ell_Gauss}
\mathbb{E}\left[\| \ell_Y \|_{L^2(\mix^{d_z})}^2\right]
= \frac{1}{(2\pi)^{d_y} |R|} \int \frac{ \mathcal{N}( \sd y;\mu_y^{d_z}, \Sigma_y^{d_z} - \tfrac{1}{2}R) }{ \mathcal{N}( \sd y;\mu_y^{d_z}, \Sigma_y^{d_z}) } \, \lambda(\sd y).
\end{equation}

Let us denote $S_1 := \Sigma_y^{d_z}$ and $S_2 := \Sigma_y^{d_z} - \tfrac{1}{2}R$, so that the integrand in \eqref{eq:norm_ell_Gauss} becomes
\[
 \frac{ \mathcal{N}( \sd y;\mu_y^{d_z}, S_2) }{ \mathcal{N}( \sd y;\mu_y^{d_z}, S_1) }
= \left(\frac{ |S_1| }{ |S_2| } \right)^{\frac{1}{2}} \exp\left\{ -\frac{1}{2} (y - \mu_y^{d_z})^\top (S_2^{-1} - S_1^{-1}) (y - \mu_y^{d_z}) \right\}.
\]
Note that both \( S_1 \) and \( S_2 \) are positive definite, and
\[
S_1 = \Sigma_y^{d_z} \succ \Sigma_y^{d_z} - \tfrac{1}{2}R = S_2,
\]
where \( \succ \) denotes the Loewner order (see Section 7.7 in ~\cite{horn2012matrix}). Let us define
\beq 
\label{eq_SS2S1}
S^{-1} := S_2^{-1} - S_1^{-1}.
\eeq
By the operator monotonicity of matrix inversion on the cone of positive definite matrices (see Theorem 24 in \cite{magnus2019matrix}), it follows that
\[
S_1 \succ S_2 \quad \Rightarrow \quad S_1^{-1} \prec S_2^{-1} \quad \Rightarrow \quad S^{-1} \succ 0
\]
(hence, $S^{-1}$ is positive definite). The integrand in \eqref{eq:norm_ell_Gauss} is thus proportional to the density of a Gaussian distribution with mean $\mu_y^{d_z}$ and covariance $S$. As a consequence, the integral in \eqref{eq:norm_ell_Gauss} can be computed exactly to yield
\begin{equation}\label{eq:int_g2_g1_Gauss}
 \int \frac{ \mathcal{N}( \sd y;\mu_y^{d_z}, S_2) }{ \mathcal{N}( \sd y;\mu_y^{d_z}, S_1) } \, \lambda(\sd y)
= \left( \frac{1}{(2\pi)^{d_y} |S|}  \frac{ |S_1| }{ |S_2| } \right)^{\frac{1}{2}}.
\end{equation}

From \eqref{eq_SS2S1}, the identity $S_2^{-1}-S_1^{-1} = S_1^{-1}(S_1-S_2)S_2^{-1}$, and since $S_1-S_2=\cfrac{1}{2}R$, we have 
\beq
S^{-1} = S_1^{-1}\left(\cfrac{1}{2}R\right)S_2^{-1}, 
\nn
\eeq
which readily yields
\beq
S = S_2(2R^{-1})S_1
\label{eq__a}
\eeq
and 
\beq
\abs{S} = 2 \frac{\abs{S_1}\abs{S_2}}{\abs{R}}.
\label{eq__b}
\eeq
Substituting \eqref{eq__a} and \eqref{eq__b} into \eqref{eq:int_g2_g1_Gauss} we arrive at
\beq
\int \frac{ \mathcal{N}(\sd y;\mu_y^{d_z}, S_2) }{ \mathcal{N}(\sd y; \mu_y^{d_z}, S_1) } \, \lambda(\sd y) = 
\frac{|R|^{\frac{1}{2}}}{2^{\frac{1}{2}} (2\pi)^{\frac{d_y}{2}} |S_2|}.
\label{eq__c}
\eeq
Combining \eqref{eq__c} and \eqref{eq:norm_ell_Gauss} we readily see that
\beq
\label{eq:B_LG}
\mathbb{E}\left[\| \ell_Y \|_{L^2(\mix^{d_z})}^2\right]
= \frac{1}{2^{\frac{1}{2}} (2\pi)^{\frac{3}{2}d_y} |R|^{\frac{1}{2}}} \frac{1}{|S_2|}=:K_2^{d_z},
\eeq
which is always finite, ensuring that the moment condition \eqref{eq:BI_as_Ex} holds (whenever $R \succ 0$, as assumed for the family of models) and so does the Bochner integrability condition \eqref{eq:Kpdz}. Moreover, the quantity in \eqref{eq:B_LG} can be bounded in terms of the model parameters (cf.~Lemma~\ref{lem:detT}). Indeed, recalling \eqref{eq:mu_Sd}, we readily see that
\[
S_2 = \Sigma_y^{d_z} - \tfrac{1}{2} R
= T^{d_z} \Sigma_x (T^{d_z})^\top + B^{d_z} Q^{d_z} (B^{d_z})^\top + \tfrac{1}{2} R.
\]
If we additionally observe that $S-\frac{1}{2}R\succ 0$ then, by Theorem 25 in \cite{magnus2019matrix},
\[
|S_2|\geq \cfrac{1}{2^{d_y}}|R|,
\quad
\text{hence}
\quad
\frac{1}{|S_2|} \leq \frac{2^{d_y}}{|R|},
\] 
and the expectation in \eqref{eq:B_LG} can be bounded as
\begin{equation*}
\mathbb{E}\left[\| \ell_Y \|_{L^2(\mix^{d_z})}^2\right] = K_2^{d_z} 
\leq \frac{2^{d_y-\frac{1}{2} }}{ \left[ (2\pi)^{d_y} |R| \right]^{\frac{3}{2}}}=:\mathcal{K}_2 < \infty,
\end{equation*}
where \(\mathcal{K}_2\) is, by construction, independent of \(d_z\) (see Subsection \ref{Gauss_Model}).

\begin{remark}
In the linear and Gaussian case, the Bochner 2-integrability condition given by Eq. \eqref{eq:B_LG} enables us to find a finite constant \( \mathcal{K}_2<\infty \), independent of \( d_z \), that satisfies the assumptions of Corollary~\ref{cor:IS_uniform} and, therefore, \eqref{eq:B_LG} is sufficient to prove that the error bounds for the nested importance sampler (Algorithm \ref{A2}) hold uniformly over $d_z$ and converge towards 0 as $\mO(N^{-\frac{1}{2}})$.
\end{remark}

\subsection[Uniformly bounded observation function]{Uniformly bounded observation function}\label{sub:BfO_GN}

Let us consider the observation model
\begin{equation}
y = f(x, z) + \epsilon,
\label{eqBoundedSensor}
\end{equation}
where $\epsilon \sim \mathcal{N}( \sd y; 0, R)$ is a Gaussian r.v. independent of \( X \) and \( Z \). The function \( f : \mathcal{X} \times \mathcal{Z} \mapsto \mathbb{R}^{d_y} \) is assumed to be deterministic and uniformly bounded in the Euclidean norm, i.e., 
\beq
\sup_{(x,z)\in\mathcal{X} \times \mathcal{Z}} \| f(x,z) \|_2 \leq F,
\label{eq__F}
\eeq
for some constant \( F < \infty \) independent of the nuisance dimension \( d_z \).
The noise covariance matrix $R \in \mathbb{R}^{d_y \times d_y}$ is assumed to be symmetric and positive definite.

We define the Mahalanobis inner product and its induced norm w.r.t. $R$ for $y_1, y_2 \in \mathbb{R}^{d_y}$ as
\begin{equation*}
\langle y_1, y_2 \rangle_R := y_1^\top R y_2, \quad
\text{and}
\quad
\|y_2\|_R := \sqrt{y_2^\top R y_2},
\end{equation*}
respectively. In addition, let \( R^{\frac{1}{2}} \) denote the unique symmetric positive definite square root of \( R \). Then, the \( R \)-norm of any vector \( y \) satisfies \( \|y\|_R = \|R^{\frac{1}{2}} y\|_2 \). Let \( \lambda_1(R) \) be the largest eigenvalue of \( R \); the induced operator 2-norm of \( R^{\frac{1}{2}} \) is given by \( \|R^{\frac{1}{2}}\|_2 = \sqrt{\lambda_1(R)} \). Using the standard inequality \( \|A v\|_2 \le \|A\|_2 \|v\|_2 \), it follows that
\[
\|f(x,z)\|_R = \|R^{\frac{1}{2}} f(x,z)\|_2 \le \|R^{\frac{1}{2}}\|_2 \, \|f(x,z)\|_2 \le \sqrt{\lambda_1(R)} F,
\]
where we have used \eqref{eq__F} for the last inequality. 

Let us additionally define
\[
F_R := \sup_{(x,z)} \|f(x,z)\|_R,
\]
which satisfies \( F_R \le \sqrt{\lambda_1(R)} F \). Applying the triangle inequality, and its reverse for the Mahalanobis norm, to the difference \( y - f(x, z) \) and then using the bound \( \|f(x, z)\|_R \le F_R \), we obtain (upon squaring each side) the inequality  
\begin{equation}\label{eq:Sq_ineq}
\big( \|y\|_R - \|f(x,z)\|_R \big)^2 \le \|y - f(x,z)\|_R^2 \le \big( \|y\|_R + F_R \big)^2.
\end{equation}
Now, note that the conditional pdf of $y$ given $x, z$ can be written as
\[
g_y(x,z) = \frac{1}{(2\pi)^{\frac{d_y}{2}} \abs{R}^{\frac{1}{2}}} \exp\left\{ -\tfrac{1}{2} \|y - f(x,z)\|_R^2 \right\}
\]
and define 
$$
C_R := \left( (2\uppi)^{d_y} \abs{R} \right)^{-1/2}
$$ 
for conciseness. Using \eqref{eq:Sq_ineq} we readily obtain the bounds
\beq\label{ineq:Bounds_gy}
C_R \exp\left\{ -\tfrac{1}{2} (\|y\|_R + F_R)^2 \right\} \leq g_y(x,z) \leq C_R \exp\left\{ -\tfrac{1}{2} \big(\|y\|_R - \|f(x,z)\|_R\big)^2 \right\}.
\eeq
The lower bound in \eqref{ineq:Bounds_gy} is independent of $(x,z)$, hence we have 
\beq\label{ineq:gy}
C_R \exp\left\{ -\tfrac{1}{2} (\|y\|_R + F_R)^2 \right\} \leq \mix^{d_z}(g_y),
\eeq
while from the right-hand-side of \eqref{ineq:Bounds_gy} we have
\beq\label{ineq:gy2}
g_y^2(x,z) \leq C_R^2 \exp\left\{ - \big(\|y\|_R - \|f(x,z)\|_R\big)^2 \right\}.
\eeq
Combining \eqref{ineq:gy} and \eqref{ineq:gy2} we arrive at
\beq\label{ratio_bound}
\frac{g_y^2(x,z)}{\mix^{d_z}(g_y)} \leq C_R \exp\left\{ \tfrac{1}{2} (\|y\|_R + F_R)^2 - (\|y\|_R - \|f(x,z)\|_R)^2 \right\}.
\eeq
Expanding the exponent and completing the square on the right-hand-side of \eqref{ratio_bound}, it is straightforward to show that
\beq\label{B_Ratio}
\int_\mY \frac{g_y^2(x,z)}{\mix^{d_z}(g_y)} \, \lambda(\sd y)\leq C_R \exp\left\{ \tfrac{11}{4} F_R^2 \right\} \int_\mY
\exp\left\{ -\tfrac{1}{2} \left( \|y\|_R - \tfrac{3}{2} F_R \right)^2 \right\}\,  \lambda(\sd y).
\eeq
From Remark \ref{rem:link_norm}, we see that Bochner integrability is guaranteed if the dominating function $y \mapsto \exp\left\{ -\tfrac{1}{2} (\|y\|_R - \tfrac{3}{2}F_R)^2 \right\}$ is integrable, which we show next.

\subsubsection*{Integrability of the dominating function}
Let \( c = \tfrac{3}{2} F_R \).  
We now verify the integrability of the function
\beq\label{eq:h(y)}
h(y) := \exp\left( -\tfrac{1}{2} \left\{ \|y\|_R - c \right)^2 \right\}.
\eeq
Performing the change of variable \(\widetilde{y} = R^{\frac{1}{2}} y\), whose Jacobian determinant is \(\lvert R \rvert^{\frac{1}{2}}\), the volume element transforms as $\sd y = \lvert R \rvert^{-\frac{1}{2}} \, \sd \widetilde{y}.$
Hence,
\beq\label{eq:h(y)lam(dy)}
\int_{\mathcal{Y}} h(y) \, \lambda(\sd y) 
= \lvert R \rvert^{-1/2} \int_{\mathcal{Y}} \exp\left\{ -\tfrac{1}{2} \left( \|\widetilde{y}\|_2 - c \right)^2 \right\} \lambda(\sd \widetilde{y}).
\eeq

Next, we transform the integral from Cartesian coordinates to \( d_y \)-dimensional hyperspherical coordinates \((r, \phi_1, \ldots, \phi_{d_y-1})\), where $r=\|\widetilde{y}\|_2.$
In these coordinates, the volume element transforms as
\[
\sd \widetilde{y} = r^{d_y - 1} \, \sd r \, \sd \omega_{d_y - 1},
\]
where $\sd \omega_{d_y-1}$ is the differential solid angle in $d_y-1$ dimensions.
The integral in \eqref{eq:h(y)lam(dy)} becomes
\beq\label{eq:h(y)lam(dy)_Sph}
\int_{\mY} h(y) \lambda(\sd y)=|R|^{-\frac{1}{2}} \int_0^\infty \int_{\text{full range}} \exp\left\{ -\tfrac{1}{2} (r - c)^2 \right\} r^{d_y - 1} \, \sd r \, \sd \omega_{d_y - 1}.
\eeq
Note that the integral above separates into radial and angular parts. The angular integral yields the surface area of the unit \((d_y - 1)\)-sphere,
\[
\mS_{d_y - 1} := \int_{\text{full range}} \sd \omega_{d_y - 1} = \frac{2 \uppi^{\frac{d_y}{2}}}{\Gamma\left( \frac{d_y}{2} \right)},  
\]
where $\Gamma(\cdot)$ is the Gamma function. Thus, we can take \eqref{eq:h(y)lam(dy)_Sph} and obtain the simplification
\beq\label{eq:Yellow_Box}
\int_{\mY}h(y)\lambda(\sd y) = \frac{\mS_{d_y - 1}}{|R|^{\frac{1}{2}}} \int_0^\infty \exp\left\{ -\tfrac{1}{2} (r - c)^2 \right\} r^{d_y - 1} \, \sd r.
\eeq
Let us denote the radial integral as
\beq\label{eq:I}
\mI := \int_0^\infty \exp\left\{ -\tfrac{1}{2} (r - c)^2 \right\} r^{d_y - 1} \, \sd r.
\eeq
Since the integrand is non-negative for \( r \geq 0 \), extending the integration range to \((-\infty, \infty)\) and replacing \( r^{d_y - 1} \) by \( |r|^{d_y - 1} \) 
and performing the change of variable \( u = r - c \)  (so that \( r = u + c \) and \( \sd r = \sd u \) ) yields the upper bound 
\[
\mI \leq \int_{-\infty}^\infty \exp\left\{ -\tfrac{1}{2} u^2 \right\} |u + c|^{d_y - 1} \, \sd u.
\]
Using the elementary inequality $|a + b|^k \leq 2^{k - 1} \big( |a|^k + |b|^k \big),$
we can therefore bound the integral \(\mI\) as
\begin{align}
\mI &\leq \int_{-\infty}^\infty \exp\left\{ -\tfrac{1}{2} u^2 \right\} 2^{d_y - 2} \big( |u|^{d_y - 1} + |c|^{d_y - 1} \big) \, \sd u \notag \\
&= 2^{d_y - 2} \left( \int_{-\infty}^\infty \exp\left\{ -\tfrac{1}{2} u^2 \right\} |u|^{d_y - 1} \, \sd u + |c|^{d_y - 1} \int_{-\infty}^\infty \exp\left\{ -\tfrac{1}{2} u^2 \right\} \, \sd u \right). \label{ineq:Up_Bound_I}
\end{align}
Both integrals in \eqref{ineq:Up_Bound_I} are standard and well-known. Specifically, the second term yields
\beq\label{eq:Gaus_NC}
\int_{-\infty}^\infty \exp\left\{ -\tfrac{1}{2} u^2 \right\} \, \sd u = \sqrt{2\uppi},
\eeq
while the first integral can be expressed in terms of the Gamma function,
\begin{align}\label{eq:first_Int}
\int_{-\infty}^\infty \exp\left\{ -\tfrac{1}{2} u^2 \right\} |u|^{d_y-1} \sd u = 2 \int_0^\infty \exp\left\{ -\tfrac{1}{2} u^2 \right\} u^{d_y-1} \sd u
= 2^{d_y/2} \Gamma\left( \frac{d_y}{2} \right). \end{align}
Substituting \eqref{eq:Gaus_NC} and \eqref{eq:first_Int} back into \eqref{ineq:Up_Bound_I}
and replacing $c = \tfrac{3}{2}F_R$, we obtain 
\beq\label{ineq:Final_I}
 \mI \le 2^{d_y-2} \left( 2^{\frac{d_y}{2}} \Gamma\left( \frac{d_y}{2} \right) + \left(\tfrac{3}{2}F_R\right)^{d_y-1}  \sqrt{2\uppi} \right). 
 \eeq
Therefore, combining \eqref{eq:Yellow_Box}, \eqref{eq:I} and \eqref{ineq:Final_I} we arrive at
\beq\label{ineq:h(y)_Cst}
\int_{\mY} h(y) \, \sd y 
\leq \frac{S_{d_y - 1}}{|R|^{\frac{1}{2}}} \, \mI 
\leq \frac{2^{d_y - 1} \uppi^{\frac{d_y}{2}}}{\abs{R}^{\frac{1}{2}}} + \frac{2\uppi^{\frac{d_y+1}{2}}(3F_R)^{d_y-1}}{\abs{R}^{\frac{1}{2}}\Gamma\left( \frac{d_y}{2} \right)}.
\eeq
Multiplying by \(\frac{C_R }{\abs{R}^{1/2}} \exp\left\{\tfrac{11}{4}F_R^2\right\}\) on both sides of \eqref{ineq:h(y)_Cst}, and simplifying, we obtain the upper bound
\beq\label{F_Bound} 
\frac{C_R }{\abs{R}^{1/2}}\exp\left\{\tfrac{11}{4}F_R^2\right\} \int_{\mY} h(y) \, \sd y 
\le \mathcal{K}_2,
\eeq
where
\[
\mathcal{K}_2 := \frac{\exp\left\{\frac{11}{4}F_R^2\right\}}{|R|^{\frac{3}{2}}} \left( 2^{\frac{d_y}{2} - 1} + \frac{2^{1 - \frac{d_y}{2}} \sqrt{\pi} (3F_R)^{d_y-1}}{\Gamma(\frac{d_y}{2})} \right)<\infty.
\]

\begin{remark}
Using \eqref{F_Bound}, together with \eqref{eq:h(y)}, \eqref{B_Ratio}, and Remark~\ref{rem:link_norm}, we obtain
\[
\mathbb{E}\left[\| \ell_Y \|_{L^2(\mix^{d_z})}^2\right] \leq \mathcal{K}_2,
\]
which implies, in particular, that \( \ell \in L^2(\mY; L^2(\mix^{d_z})) \). Moreover, the upper bound \( \mathcal{K}_2 \) is independent of \( d_z \), and hence the assumptions of Corollary~\ref{cor:IS_uniform} are satisfied for model \eqref{eqBoundedSensor}.
\end{remark}


\subsection{Pointwise bounded likelihood}\label{sub:BI_Noise}
Assume that the likelihood function \( g_y^{d_z}(x, z) \) satisfies the pointwise bound
\beq\label{gy_hk}
g_y(x, z) \leq h(x, z)\, k(y),
\eeq
where \( h : \mathcal{X} \mapsto [0, \infty) \) and \( k : \mathcal{Y} \mapsto [0, \infty) \) are measurable functions such that 
\beq
\label{eq__HK}
\|h\|_{\infty} \leq H < \infty
\quad \text{and} \quad
\|k\|_{L^1(\lambda)} \leq K < \infty,
\eeq 
with constants \( H \) and \( K \) independent of the dimension \( d_z \). 

Under assumption \eqref{gy_hk}, we obtain the bound
\beq
\label{eq__l}
\| \ell_y \|_{L^p(\mix^{d_z})}^2 \leq k(y)\frac{ \int h(x,z)\, g_y(x, z)\, \mix^{d_z}( \sd (x, z))}{ \left[ \int g_y(x, z)\, \mix^{d_z}( \sd (x, z)) \right]^2 }
\eeq
that holds for arbitrary $Y=y$. When the observation $Y$ is random, we can take expectations on both sides of \eqref{eq__l} and then simplify to obtain
\begin{align*}
\mathbb{E}\left[\| \ell_Y \|_{L^2(\mix^{d_z})}^2\right] 
&\leq \int_{\mathcal{Y}} k(y)  \frac{ \int h(x,z)\, g_y(x, z)\, \mix^{d_z}( \sd (x, z)) }{ \int g_y(x, z)\, \mix^{d_z}( \sd (x, z)) } \, \lambda(\sd y) \\
&= \int_{\mathcal{Y}} k(y)  \pi_y(h) \, \lambda(\sd y) \\
&\leq H \int_{\mathcal{Y}} k(y) \, \lambda(\sd y) = HK<\infty.
\end{align*}

\begin{remark}
In summary, if \eqref{gy_hk} and \eqref{eq__HK} hold, then the second moment $\mathbb{E}\left[\| \ell_Y \|_{L^2(\mix^{d_z})}^2\right]$ is finite and independent of $d_z$. As a consequence, $\ell \in L^2(\mY; L^2(\mix^{d_z}))$ and the assumptions in Corollary \ref{cor:IS_uniform} are fulfilled for model \eqref{gy_hk}.
\end{remark}


\section{Standard importance samplers}\label{sec:standard_random}

The analytical approach developed in Section~\ref{sec:General_Random} for nested importance samplers can also be applied to standard IS when the aim os to study the effect of the dimension \(d_x\) of the variable of interest over the approximation error bounds and there are no nuisance parameters (i.e., $d_z=0$). The same type of results obtained in Section~\ref{sec:General_Random} regarding $d_z$ can be established, for $d_x$ and and, mutatis mutandis the same types of special cases examined in Section~\ref{sec:Special_cases} remain valid.

We first present the (simplified) model for Bayesian inference, then discuss Bochner integrability for this model and, finally, provide error bounds. 

\subsection{Model}\label{model_std_IS}
Let \((\Omega, \Sigma, \mathbb{P})\) be a probability space, and let $X \colon \Omega \mapsto \mathcal{X} \subseteq \mathbb{R}^{d_x},
Y \colon \Omega \mapsto \mathcal{Y} \subseteq \mathbb{R}^{d_y},$
denote two multidimensional real r.v.'s. The r.v. \(X\) represents the \(d_x\)-dimensional state that we aim to estimate, while \(Y\) represents observed data of dimension \(d_y\). The model is fully specified by:
\begin{itemize}
  \item The prior probability law of the variable of interest $X$, which is denoted $\pi_0(\sd x)$. We assume the ability to generate random samples from this distribution.
  \item Given \(X = x\), the observation \(Y\) is characterized by a conditional pdf w.r.t. the Lebesgue measure, which we denote by \(l^{d_x}(y \mid x)\).  
We explicitly indicate the dependence on the dimension \(d_x\), and define the likelihood function as
\[
l_y^{d_x}(x) := c \, l^{d_x}(y \mid x), \quad c \in \mathbb{R}^+,
\]
which can be evaluated up to a proportionality constant $c<\infty$ independent of $X$.
\end{itemize}

By Bayes’ theorem, the posterior distribution of \(X\) given \(Y=y\) is
\[
\pi(\mathrm{d}x) = \frac{l_y^{d_x}(x)\pi_0(\sd x)}{\pi_0(l_y^{d_x})},
\]
where the normalizing constant (a.k.a. Bayes evidence) is
\[
\pi_0(l_y^{d_x}) = \int_{\mathcal{X}} l_y^{d_x}(x)\,\pi_0(\sd x).
\]
For clarity, we carry out our analysis for a simpler version of the general algorithm where we only need to assume the ability to sample (not evaluate) the prior $\pi_0$. The resulting scheme is displayed as Algorithm \ref{A_IS}. Hereafter we refer to this specific procedure as standard IS.  

\begin{algorithm} \label{A_IS}
Standard importance sampler. 
\begin{enumerate}
\item Draw $N$ i.i.d. samples $x^1, \ldots, x^N$ from the prior law $\pi_0$.
\item Compute normalised importance weights $w^{i} = \frac{l_y^{d_x}(x^i)}{\sum_{i=1}^N l_y^{d_x}(x^i)}$.
\item Output the random probability measure $\pi^{N}(\sd x) = \sum_{i=1}^N w^{i} \delta_{x^i}(\sd x)$
\end{enumerate}
\end{algorithm}
The next section is devoted to the calculation of upper bounds for the $L_p$ norms of the approximation error $\pi(f)-\pi^{N}(f)$, namely
\beq
\left\|
	\pi(f)-\pi^{N}(f)
\right\|_p = \mbE\left[
	\left|
		\pi(f)-\pi^{N}(f)
	\right|^p
\right]^{\frac{1}{p}}, \quad p \ge 1,
\nn
\eeq
where the real test function $f$ is assumed to belong to $B(\mX)$, the space of real‑valued, measurable, bounded functions on $(\mX,\mB(\mX))$.

\subsection{Bochner integrability}

Following the framework of Section~\ref{sec:General_Random}, we again assume that the observation \( Y \) is a r.v. generated according to the model described in Section~\ref{model_std_IS}. We assume that, for all $y\in \mY$, the likelihood function satisfies \(l_y^{d_x} \in L^2(\pi_0)\) and is normalized, i.e.,
\[
\int_{\mathcal{Y}} l_y^{d_x}(x) \, \lambda (\sd y) = 1, \quad \forall x \in \mathbb{R}^{d_x},
\]
for the Lebesgue measure \( \lambda \) on \( \mathcal{Y} \). Under these assumptions, \( g_y^{d_x}(x) \) coincides with the probability density of \( y \) given \( x \).

Let us consider the probability measure \(\eta \in \mathcal{P}(\mathcal{Y})\), defined as
\[
\eta(\mathrm{d}y) := \pi_0(l_y) \, \lambda(\mathrm{d}y),
\]
which corresponds to the marginal probability law of the observations induced by the model.

In this setting, we define the \textit{link function} that maps each observation to a normalized likelihood in the space \(L^2(\pi_0)\) as
\beq
\begin{array}{llll}
    \ell: &\mathcal{Y} &\mapsto &L^2(\pi_0)\\
    &y &\leadsto &\ell_y := \frac{l_y^{d_x}(\cdot)}{\pi_0(l_y^{d_x})}.
\end{array}\notag
\label{eq:link_Stand}
\eeq
For a test function \( \varphi \in L^2(\pi_0) \), the posterior random expectation is given by
\[
\pi_Y(\varphi) = \int_\mathcal{X} \varphi(x) \, \pi_Y(\sd x) = \frac{\int_\mathcal{X} \varphi(x)\, l_Y^{d_x}(x) \, \pi_0(\sd x)}{\int_\mathcal{X} l_Y^{d_x}(x') \, \pi_0(\sd x')}= \left\langle \varphi, \ell_Y \right\rangle_{L^2(\pi_0)}.
\]

\medskip
The same as in Section \ref{sec:General_Random}, the boundedness of \( \varphi \mapsto \mathbb{E}[\pi_Y^p(\varphi)] \) for \( \varphi\in L^2(\pi_0) \) is directly related to the Bochner integrability of the link function \( y \mapsto \ell_y \). This requires \( \ell \) to be strongly \( \eta \)-measurable (which holds under mild regularity assumptions see Appendix \ref{Apx:B_M}) and the finiteness of the norm integral
\beq \label{eq:K_Standard}
K_p^{d_x}:=\int_{\mathcal{Y}} \| \ell_y \|_{L^2(\pi_0)}^p \, \eta(\sd y) < \infty.
\eeq
For \( 1 \leq p \leq \infty \), we denote by \( L^p(\mathcal{Y}; L^2(\pi_0)) \) the Banach space of Bochner \( p \)-integrable functions \( \upsilon : \mathcal{Y} \mapsto L^2(\pi_0) \) w.r.t. the measure \( \eta \) (cf. \cite{tuomas2016analysis}).

\subsection{Error bounds}

We can state several results for the standard IS Algorithm \ref{A_IS} which are analogous to those obtained in Section \ref{sec:General_Random} for the nested importance sampler. The proofs follows the same arguments and are, in fact, simpler. 

\begin{theorem}\label{thm:6.1}
Assume that the link function satisfies \( \ell \in  L^p(\mathcal{Y}; L^2(\pi_0)) \) for \( p \in \{1,2\} \). Then, for all test functions \( \varphi \in L^2(\pi_0) \) and the assumed $p\in\{1,2\}$,
\[
\mathbb{E}[\pi_Y^p(\varphi)] \leq K_p^{d_x} \, \|\varphi\|_{L^2(\pi_0)}^p,
\]
where $K_p^{d_x}$ is the constant in \eqref{eq:K_Standard}.
\end{theorem}

\begin{theorem}\label{thm:6.2}
For all $y\in \mY$, let \( l_y^{d_x}(x)\in L^2(\pi_0) \) be a strictly positive conditional pdf. Then, for every \( f\in B(\mathcal{X})\) the following statements are equivalent:
\begin{enumerate}[label=\roman*)]
    \item There is a constant \( C_2^{d_x} < \infty \), independent of \( N \) and \( f \) such that
    \begin{equation*}
        \left\| \pi_Y(f) - \pi_Y^{N}(f) \right\|_2 \leq \frac{C_2^{d_x} \| f \|_\infty}{\sqrt{N}}.
    \end{equation*}
    \item  \( \ell \in L^2\big( \mathcal{Y}; L^2(\pi_0) \big) \).
\end{enumerate}
Moreover 
\[
C_2^{d_x} \leq B_2 K_2^{d_x},
\]
where \( B_2 \) is a constant independent of $d_x$.
\end{theorem}
The proof of Theorem \ref{thm:6.1} follows the same arguments as the proof of Theorem \ref{thm:4.1}, which is detailed in Appendix \ref{apx:Bochner-F}. The proof of Theorem \ref{thm:6.2} is analogous to the proof of Theorem \ref{thm:thIS2R} (see Appendix \ref{appThIS3}).

Theorem \ref{thm:6.2} provides a necessary and sufficient condition for the approximation errors of standard IS to converge at the rate \(\mathcal{O}(N^{-1/2})\) for a given (arbitrary large) $d_x$.  
Moreover, it offers a means to quantify the influence of the dimension \(d_x\) of the variable of interest on the approximation error bounds.  
This is made explicit in the following theorem.

\begin{theorem}\label{thn:IS_Stnd_P}
For all $y\in \mY$, let \( l_y^{d_x}(x) \in L^2(\pi_0)\) be strictly positive conditional pdf. Assume that \(\ell \in  L^p\big( \mathcal{Y}; L^2(\pi_0) \big) \) for some \( p \in \{1, 2\} \),  and assume that there exists a polynomial \( \widetilde P_{n,p}(d_x) \) of degree \(m\), possibly depending on \(p\), such that
\begin{equation*}
K_p^{d_x} \leq \widetilde P_{m,p}(d_x), \quad \forall d_x \in \mbN, 
\end{equation*} where $K_p^{d_x}$ is defined in \eqref{eq:K_Standard}.
 Then, for any \( f \in B(\mX) \) and the assumed $p\in\{1,2\}$, there is a polynomial \(P_{n,p}(d_x)\) of degree at most \(n\leq m\), possibly depending on \(p\), such that
\[
\left\| \pi_Y(f) - \pi_Y^{N}(f) \right\|_p 
\leq \frac{P_{n,p}(d_x)\, \|f\|_\infty}{\sqrt{N}}.
\]
\end{theorem}
The proof of Theorem \ref{thn:IS_Stnd_P} follows the same arguments shown in Appendix \ref{ap:T4.3} for the proof of Theorem \ref{thIS2R}.

\begin{remark}
Under the assumptions of Theorem \ref{thn:IS_Stnd_P}, the number of samples $N$ required to accurately approximate the posterior expectation \(\pi_Y(f)\) grows at most polynomially with the dimension \(d_x\) of the variable of interest. This contrasts with the typical exponential growth associated with high-dimensional problems, highlighting how specific structural features of the model can mitigate the curse of dimensionality in IS.

\end{remark}

\begin{corollary}\label{cor:IS_Std}
Under the assumptions of Theorem~\ref{thn:IS_Stnd_P}, let \( m = 0 \); i.e. assume there is $p\in\{1,2\}$ and a constant \( K_p < \infty  \) such that
\begin{equation*}\label{eq:U_Kdz_std}
\sup_{d_x \in \mathbb{N}} K_p^{d_x} \leq \mathcal{K}_p.
\end{equation*}
Then, for any \( f \in B(\mX) \), there exists a constant \( \mathcal{C}_p < \infty \), independent of \( N \), \( f \), and the dimension \( d_x \), such that
\[
\left\| \pi_Y(f) - \pi_Y^{N}(f) \right\|_p 
\leq \frac{\mathcal{C}_p \, \|f\|_\infty}{\sqrt{N}}.
\]
In particular, $\lim_{N \to \infty} \| \pi_Y(f) - \pi_Y^{N}(f) \|_p = 0$
uniformly over \(d_x\).
\end{corollary}

\begin{remark}
It can be shown (by following the same arguments in Section~\ref{sec:Special_cases}) that the assumptions of Corollary~\ref{cor:IS_Std}, in particular, the existence of a finite constant \(\mathcal{K}_p\) independent of \(d_x\) such that $K_p^{d_x} \leq \mathcal{K}_p,$
are satisfied by the same type of  models considered in Section~\ref{sec:Special_cases}. This, in turn, ensures the existence of approximation error bounds that remain uniformly bounded w.r.t. the dimension \(d_x\).
\end{remark}


\section{Conclusions} \label{sec:Discussion}

We have investigated the approximation errors of a nested importance sampling algorithm for numerical Bayesian inference in problems with nuisance variables. In particular, we have studied a setting where the goal is to approximate the posterior probability distribution of a $d_x$-dimensional r.v. $X$ given a $d_y$-dimensional observation $Y$ for a model that involves a $d_z$-dimensional nuisance variable $Z$. The nested importance sampler integrates $Z$ out numerically, and we have analyzed the dependence of the error bounds of the algorithm on the dimension $d_z$.

Let $\pi$ and $\pi^{N,M}$ denote the true posterior distribution of $X$ given $Y$ and its nested sampling approximation, respectively. In full generality, the upper bounds on the $L^2$ norm of the approximation error $|\pi(f)-\pi^{N,M}(f)|$ (for a bounded test function $f$) can increase exponentially with $d_z$, in agreement with existing results for different importance sampling schemes \cite{Bengtsson08,Snyder08,Snyder15,agapiou2017importance,chatterjee2018sample}. In this paper, we have introduced a refined analysis that relates the approximation error $|\pi(f)-\pi^{N,M}(f)|$ to the Bochner integrability of a {\em link function} $\ell$ that generates normalised likelihoods for the pair $(X,Z)$. We have proved that the bounds on $\|\pi(f)-\pi^{N,M}(f)\|_2$ are finite (and vanish as $\mO(N^{-\frac{1}{2}})$, with $N$ the number of samples, for arbitrary large $d_z$) if, and only if, the link function is Bochner integrable. 

Furthermore, the dependence of the error bounds on the dimension $d_z$ of the nuisance variables can be quantified in terms of the moments of the link function w.r.t. the marginal law of the observations. In particular, our analytical setting enables the identification of sufficient conditions for the $L^1$ and $L^2$ error bounds to depend polynomially on $d_z$, as well as sufficient conditions to ensure bounds that are uniform over $d_z$. Based on these general results, we have analyzed some special cases. One of these special cases corresponds to linear and Gaussian models, where we have proved that Bochner 2-integrability of the link function alone is sufficient to ensure uniform $L^2$ bounds over $d_z$. Some other special cases involve models with bounded or dominated observation functions. Let us note that the latter are conditions that are typically satisfied in practical applications, where, for example, sensors recording physical magnitudes have a bounded operational range. 

Finally, we have also shown that the same type of analysis, based on the notion of Bochner integrability, and the properties of the norm of the link function can be applied to a standard IS algorithm (with no nuisance parameters) in order to quantify the dependence of the $L^1$ and $L^2$ error bounds on the dimension $d_x$ of the r.v. of interest $X$. Hence the curse of dimensionality can also be alleviated in this setting depending on the structure of the models.




\begin{appendix}


\section{Use of importance functions} \label{sec:importance_sampling}
In this Appendix, we analyze the connection between Algorithm~\ref{A1} and Algorithm~\ref{A2} in the context of models with nuisance variables. The same analysis can be applied to the standard IS with the straightforward modifications.

Recall the model described in Section \ref{ssec:sIS2} with prior measure $\pi_0$, kernel $\kappa^{d_z}$ and likelihood function $g_y^{d_z}$, and let us refer to it as model $\mM$. Then, choose importance functions $\nu \gg \pi_0$ and $\tau \gg \kappa^{d_z}$ as in Algorithm \ref{A1}, and define the relative densities
\begin{equation}
\rho(x) := \frac{\sd\pi_0}{\sd\nu}(x) \quad \text{and} \quad \omega(x,z) := \frac{\sd\kappa^{d_z}}{\sd\tau}(x,z),
\label{eqDefRhoOmega}
\end{equation}
as well as the modified likelihood 
\begin{equation}
\widetilde g_y(x,z) := g_y^{d_z}(x,z) \omega(x,z) \rho(x).
\label{eqDefTildeG}
\end{equation}

Let us denote as $\widetilde \mM$ the model with prior measure $\nu(\sd x)$, Markov kernel $\tau(x,\sd z)$ and likelihood $\widetilde g_y(x,z)$. It is straightforward to verify that, given some fixed observation $Y=y$, $\mM$ and $\widetilde \mM$ have the same marginal likelihood. Indeed, if we construct 
\begin{equation}
\widetilde l_y(x) := \int_{\mZ} \widetilde g_y(x,z) \tau(x,\sd z)
\label{eqDefModL}
\end{equation}
then we readily see that
\begin{eqnarray}
\nu(\widetilde l_y) 
&=& \int_{\mathcal{X}} \int_{\mathcal{Z}} \widetilde{g}_y(x,z) \tau(x,\sd z) \nu(\sd x) \nonumber\\
&=& \int_{\mathcal{X}} \int_{\mathcal{Z}} g_y^{d_z}(x,z) \kappa^{d_z}(x,\sd z) \pi_0(\sd x) = \pi_0(l_y^{d_z}),\nonumber
\end{eqnarray}
where the first equality follows from \eqref{eqDefModL} and the second one is obtained from the definitions in \eqref{eqDefRhoOmega} and \eqref{eqDefTildeG}. Moreover, by a similar argument, we see that the modified model $\widetilde \mM$ also displays the same posterior law as the original model $\mM$, i.e., for any integrable test function $f:\mX \mapsto \mbR$ we obtain
\begin{equation}
\widetilde \pi(f) = \frac{\nu(\widetilde l_y f)}{\nu(\widetilde l_y)} = \frac{\pi_0(l_y^{d_z} f)}{\pi_0(l_y^{d_z})} = \pi(f).
\nonumber
\end{equation}
Therefore, model $\widetilde \mM$ can be interpreted as a reparametrisation of the original model $\mM$.

Finally, the standard nested importance sampling Algorithm \ref{A2} applied to model $\widetilde \mM$ yields the following steps:
\begin{enumerate}
\item Draw $x^1, \ldots, x^N$ i.i.d. from $\nu(\sd x)$.
\item For each $i = 1, \ldots, N$
    \begin{itemize}
    \item draw $z^{i,1}, \ldots, z^{i,M}$ i.i.d. from $\tau(x,\sd z)$, and 
    \item compute the likelihoods $\widetilde l_y^M(x^i) := \frac{1}{M}\sum_{j=1}^M \widetilde g_y(x^i,z^{i,j})$, with $j=1, \ldots, M$.
    \end{itemize}
\item Compute the weights $w^{i,M} \propto \widetilde l_y^M(x^i)$, $i=1, \ldots, N$.
\end{enumerate}
Using the definitions in \eqref{eqDefRhoOmega}, \eqref{eqDefTildeG} and \eqref{eqDefModL} it is straightforward to see that the procedure above matches Algorithm \ref{A1} exactly. Therefore, if we are interested in the performance of Algorithm \ref{A1} for model $\mM$, it is sufficient to analyze the performance of Algorithm \ref{A2} for model $\widetilde\mM$.




\section{Proof of Theorem \ref{thIS2}} \label{appThIS2}

\subsection{Preliminaries}
Our goal is to establish explicit upper bounds for the \( L_p \) norms of the approximation errors
\[
\| \pi(f)-\pi^{N,M}(f)  \|_p, \quad p \ge 1,
\]
for any test function \( f \in B(\mathbb{R}^{d_x}) \).

The following two propositions are straightforward to prove, hence we simply state them.

\begin{proposition}\label{MIP}
Let $\alpha_0, \beta_0 \in \mathcal{P}(S)$ and let $l, \tilde{l}$ be positive, real-valued functions on $S$ satisfying $\alpha_0(l) < \infty$ and $\beta_0(\tilde{l}) < \infty$. The probability measures $\alpha, \beta \in \mP(S)$ defined by
\[
\alpha(f) = \frac{\alpha_0(l f)}{\alpha_0(l)}, \quad \text{and} \quad \beta(f) = \frac{\beta_0(\tilde{l} f)}{\beta_0(\tilde{l})}, \quad \forall f\in \mB(S), 
\]
satisfies
\[
\left| \alpha(f) - \beta(f) \right| \leq \frac{1}{\alpha_0(l)} \left[ \left| \alpha_0(l f) - \beta_0(\tilde{l} f) \right| + \| f \|_{\infty} \left| \alpha_0(l) - \beta_0(\tilde{l}) \right| \right].
\]
\end{proposition}

\begin{proposition}\label{prop:Unbiased_E}
Let \( l_y^{d_z}, l_y^{M,d_z}\) be defined as in Section \ref{ssec:sIS2} and Section \ref{Algorithm}, respectively. For any bounded measurable function \( f \), we have
\[
\mathbb{E}[\pi_0^N(f \, l_y^{d_z})] = \pi_0(f \, l_y^{d_z}), \quad 
\mathbb{E}[\pi_0^N(f \, l_y^{M,d_z})] = \pi_0(f \, l_y^{d_z}).
\]
\end{proposition}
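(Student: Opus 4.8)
The plan is to handle the two identities separately, though both rest on the i.i.d.\ sampling structure of Algorithm~\ref{A2} and, for the second, on the conditional unbiasedness of the inner Monte Carlo estimator $l_y^{M,d_z}$. Recall that $\pi_0^N(\sd x) = \frac{1}{N}\sum_{i=1}^N \delta_{x^i}(\sd x)$ is the empirical measure of the $N$ i.i.d.\ prior samples, so that $\pi_0^N(f\,l_y^{d_z}) = \frac{1}{N}\sum_{i=1}^N f(x^i)\,l_y^{d_z}(x^i)$ and likewise with $l_y^{M,d_z}$ in place of $l_y^{d_z}$.

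For the first identity I would simply invoke linearity of expectation together with the fact that $x^1,\ldots,x^N \sim \pi_0$ are identically distributed. Writing $X \sim \pi_0$, each term satisfies $\mbE[f(x^i)\,l_y^{d_z}(x^i)] = \mbE[f(X)\,l_y^{d_z}(X)] = \pi_0(f\,l_y^{d_z})$, and averaging over $i$ gives the claim. This step only requires that $f\,l_y^{d_z}$ be $\pi_0$-integrable, which holds since $f$ is bounded and $l_y^{d_z}(x) \le \|g_y^{d_z}\|_\infty < \infty$ (because $\kappa^{d_z}(x,\cdot)$ is a probability measure).

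For the second identity the key observation is that the inner estimator is conditionally unbiased. I would condition on the outer sample $x^i$ and apply the tower property, $\mbE[f(x^i)\,l_y^{M,d_z}(x^i)] = \mbE\big[f(x^i)\,\mbE[l_y^{M,d_z}(x^i)\mid x^i]\big]$. Conditional on $x^i = x$, the inner samples $z^{i,1},\ldots,z^{i,M}$ are i.i.d.\ draws from $\kappa^{d_z}(x,\sd z)$, so by Eq.~\eqref{eq_ly},
\[
\mbE[l_y^{M,d_z}(x^i) \mid x^i = x] = \frac{1}{M}\sum_{j=1}^M \int_{\mZ} g_y^{d_z}(x,z)\,\kappa^{d_z}(x,\sd z) = l_y^{d_z}(x).
\]
Substituting this back reduces the second identity to the first, giving $\mbE[\pi_0^N(f\,l_y^{M,d_z})] = \pi_0(f\,l_y^{d_z})$.

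There is no genuine obstacle here; the statement is essentially the composition of two unbiasedness facts. The only point requiring minor care is the measurability and integrability bookkeeping needed to interchange the conditional expectation with the finite average defining $l_y^{M,d_z}$, but this is immediate because the sum is finite and $g_y^{d_z}$ is bounded, so Fubini (or merely linearity of conditional expectation over a finite sum) applies without issue.
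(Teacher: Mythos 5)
Your proof is correct; the paper itself omits a proof of this proposition, merely declaring it ``straightforward to prove,'' and your argument (linearity of expectation over the identically distributed outer samples for the first identity, plus the tower property with the conditional unbiasedness of the inner average $l_y^{M,d_z}$ given $x^i$ for the second) is exactly the standard computation the authors implicitly rely on, mirroring the conditioning on $\mathcal{F}=\sigma\text{-}(X^i)$ that they carry out explicitly in the proof of Lemma~\ref{lem:L2bound}. No gaps.
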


Next, we obtain a partial characterization of the approximation errors

\begin{lemma}\label{lem:L2bound}
Let $y\in\mY$ and \( p \in \{ 1, 2\} \) be fixed. Assume that \( g_y\in L^2(\mix^{d_z}) \). Then, for any \( f \in \mB(\mbR) \), there exists a constant \( B_p > 0 \) depending only on \( p \), such that

\begin{equation}\label{eq:E_y_fix}
 \mathbb{E}\left[ \left| \pi_0(f l_y^{d_z}) - \pi_0^N(f l_y^{d_z}) \right|^p \right]
 + \mathbb{E}\left[ \left| \pi_0^N\big(f [l_y^{d_z} - l_y^{M, d_z}] \big) \right|^p \right]
\leq \frac{B_p \|f\|^p_\infty}{N^{p/2}} \, \|g_y^{d_z}\|_{L^2(\mix^{d_z})}^p.
\end{equation}
\end{lemma}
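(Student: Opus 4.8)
The plan is to treat the two summands separately, after observing that each one is a suitably normalised sum of centered, (conditionally) independent random variables whose zero mean is supplied by Proposition~\ref{prop:Unbiased_E}. First I would write the first term as
\[
\pi_0(f l_y^{d_z}) - \pi_0^N(f l_y^{d_z}) = -\frac{1}{N}\sum_{i=1}^N U^i, \qquad U^i := f(x^i)\, l_y^{d_z}(x^i) - \pi_0(f l_y^{d_z}),
\]
so that the $U^i$ are i.i.d. and centered (the zero mean is exactly $\mathbb{E}[\pi_0^N(f l_y^{d_z})]=\pi_0(f l_y^{d_z})$). For the second term I would set
\[
\pi_0^N\big(f[l_y^{d_z} - l_y^{M,d_z}]\big) = \frac{1}{N}\sum_{i=1}^N V^i, \qquad V^i := f(x^i)\big[l_y^{d_z}(x^i) - l_y^{M,d_z}(x^i)\big],
\]
and note that, conditionally on $x^1,\dots,x^N$, the $V^i$ are independent across $i$ (the inner samples $z^{i,1},\dots,z^{i,M}$ are drawn independently for each $i$) and centered, since $\mathbb{E}[l_y^{M,d_z}(x^i)\mid x^i] = l_y^{d_z}(x^i)$ by construction of the Monte Carlo estimator.

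Next I would establish the case $p=2$ by direct second-moment computations. In both sums the cross terms vanish: the $U^i$ are independent and centered, while for the $V^i$ one first conditions on $x^1,\dots,x^N$ and uses conditional independence and conditional centering, so that $\mathbb{E}[V^iV^k]=0$ for $i\neq k$. This reduces each second moment to a single-sample variance, which two elementary estimates then control uniformly in terms of $\|g_y^{d_z}\|_{L^2(\mix^{d_z})}$. First, Jensen's inequality applied to the probability kernel $\kappa^{d_z}(x,\cdot)$ gives $(l_y^{d_z}(x))^2 \le \int_{\mZ} g_y^{d_z}(x,z)^2\,\kappa^{d_z}(x,\sd z)$, and integrating against $\pi_0$ yields $\pi_0((l_y^{d_z})^2) \le \|g_y^{d_z}\|_{L^2(\mix^{d_z})}^2$; this bounds the second moment of the first sum by $\|f\|_\infty^2\,\|g_y^{d_z}\|_{L^2(\mix^{d_z})}^2/N$. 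Second, the conditional variance of the inner estimator satisfies
\[
\mathrm{Var}\big(l_y^{M,d_z}(x^i)\mid x^i\big) = \frac{1}{M}\,\mathrm{Var}\big(g_y^{d_z}(x^i,z^{i,1})\mid x^i\big) \le \frac{1}{M}\int_{\mZ} g_y^{d_z}(x^i,z)^2\,\kappa^{d_z}(x^i,\sd z),
\]
and taking expectation over $x^i\sim\pi_0$ bounds the second moment of the second sum by $\|f\|_\infty^2\,\|g_y^{d_z}\|_{L^2(\mix^{d_z})}^2/(NM)$. Adding the two contributions (and using $M\ge 1$) gives \eqref{eq:E_y_fix} for $p=2$ with, e.g., $B_2=2$.

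For $p=1$ I would not rerun the argument but simply invoke the Lyapunov/Jensen bound $\mathbb{E}|W| \le (\mathbb{E}|W|^2)^{1/2}$ on each of the two terms, which transfers the $p=2$ estimate to $p=1$ while turning $N^{-1}$ into $N^{-1/2}$ and $\|g_y^{d_z}\|_{L^2(\mix^{d_z})}^2$ into $\|g_y^{d_z}\|_{L^2(\mix^{d_z})}$, as required; for a treatment valid at arbitrary $p\ge 1$ one would instead condition on $x^1,\dots,x^N$ and apply the Marcinkiewicz--Zygmund inequality to each sum, which is the natural source of the constant $B_p$ depending only on $p$. The hard part will be the bookkeeping for the second term: the variables $V^i$ are \emph{not} identically distributed, since their conditional law depends on $x^i$ through the kernel $\kappa^{d_z}(x^i,\cdot)$, so one must condition on the first-stage samples before applying any moment inequality and only afterwards take the outer expectation. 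The Jensen step relating $l_y^{d_z}$ to $g_y^{d_z}$ in $L^2(\mix^{d_z})$ is precisely what lets the final bound be expressed uniformly through the single quantity $\|g_y^{d_z}\|_{L^2(\mix^{d_z})}$, matching the right-hand side of \eqref{eq:E_y_fix}.
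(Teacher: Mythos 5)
Your proposal is correct and follows essentially the same route as the paper: the same decomposition into the outer-sample error and the inner Monte Carlo error, centering via Proposition~\ref{MIP}/Proposition~\ref{prop:Unbiased_E}, conditioning on $x^1,\dots,x^N$ before handling the inner sums, and the Jensen step $\pi_0\big((l_y^{d_z})^2\big)\le \pi_0\big(\kappa(g_y^2)\big)=\|g_y^{d_z}\|_{L^2(\mix^{d_z})}^2$ to express everything through the $L^2(\mix^{d_z})$ norm. The only (harmless) differences are in execution: you compute exact variances with vanishing cross terms and deduce $p=1$ by Lyapunov, whereas the paper bounds the squared summands by the cruder $2^{p-1}$ inequality and invokes Marcinkiewicz--Zygmund for both values of $p$, which yields the same conclusion with a slightly larger constant.
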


\begin{proof} Throughout the proof, we suppress the superscript \( d_z \) for notational simplicity.

Let $y \in \mathcal{Y}$ and $p \in \{1, 2\}$ be fixed. Let $\{x^i\}_{i=1}^N$ be a sequence of i.i.d. r.v.'s from the distribution $\pi_0$, and let $\{z^{ij}\}_{j=1}^M$ be a sequence of i.i.d. r.v.'s from the conditional distribution $\kappa(x^i, \cdot)$ for each $i \in \{1, \dots, N\}$. We seek to obtain the bound in Eq.~\eqref{eq:E_y_fix} by an application of the Marcinkiewicz–Zygmund (M-Z) inequality. For this purpose, we define the following zero mean i.i.d.\ r.v.'s (see  Proposition~\ref{prop:Unbiased_E})
\[
U_{i,f} := \frac{1}{N} \left( \pi_0(f \, l_y) - f(x^i) l_y(x^i) \right), \quad 
U_{i,f}^M := \frac{1}{N} \left( f(x^i) l_y(x^i) - f(x^i) l_y^M(x^i) \right).
\]
Note that
\beq \label{eq:Uis}
\sum_{i=1}^N U_{i,f} = \pi_0(f l_y) - \pi_0^N(f l_y), \; \; \text{ and } \;\;
\sum_{i=1}^N U_{i,f}^{M} = \pi_0^N(f l_y) - \pi_0^N(f l_y^{M}).
\eeq
By the inequality
$\lvert x - y \rvert^p \leq (\lvert x \rvert + \lvert y \rvert)^p \leq 2^{p-1} \left( \lvert x \rvert^p + \lvert y \rvert^p \right),$
it follows that
\begin{equation}\label{eq:Uif_UiM}
U_{i,f}^2 \leq \frac{2 \|f\|^2_{\infty}}{N^2}\left(\pi_0^2(l_y) + l_y^2(x^i)\right), \quad
(U_{i,f}^M)^2 \leq \frac{2 \|f\|^2_{\infty}}{N^2}\left(l_y^2(x^i) + (l_y^M(x^i))^2\right). 
\end{equation}
In order to apply the M-Z inequality, for \(p = 1\) we first take the expectation using the reverse Jensen inequality, while for \(p = 2\) we compute the expectation directly. It is then straightforward to show that
\begin{align}
\sum_{i=1}^N \; \mathbb{E}[\; 
 U_{i,f}^2 \; ] &\leq \frac{2\|f\|^2_{\infty} }{N}\left(\pi_0^2(l_y)+\pi_0(l_y^2)\right), \label{eq:I_Ui}\\
\sum_{i=1}^N \mathbb{E}[(U^M_{i,f})^2] &\leq \frac{2\|f\|^2_{\infty} }{N}\left( \pi_0(l_y^2)+ \frac{1}{M} \left(\pi_0(\kappa(g_y^2)) + 2\pi_0(l_y^2) \right) \right). \label{eq:I_UiM}   
\end{align}

Indeed, to derive \eqref{eq:I_UiM}, we first observe that
\begin{equation*}
    \left( l_y^{M}(x^i) \right)^2 = \frac{1}{M^2} \left( \sum_{j=1}^M g_y^2(x^i, z^{ij}) + 2 \sum_{j \neq k}^M g_y(x^i, z^{ij}) \, g_y(x^i, z^{ik}) \right).
\end{equation*}
Define \( \mathcal{F} := \sigma\)-\( (X^i) \), the $\sigma$-algebra generated by the r.v.'s $X^i$. Then, by the tower property
\begin{align}
\mathbb{E} \left[\left( l_y^M (x^i) \right)^2 \right]
&= \mathbb{E} \left[ \frac{1 }{M^2} \mathbb{E} \left[  \sum_{j=1}^M g_y^2(x^i,z^{ij}) + 2\sum_{j\neq k}^M g_y(x^i,z^{ij}) g_y(x^i,z^{ik}) \; \middle| \; \mathcal{F} \right] \right] \notag \\
&= \frac{1}{M^2} \mathbb{E} \left[  \sum_{j=1}^M \kappa (g_y^2(x^i,\cdot)) + 2\sum_{j\neq k}^M l_y(x^i)^2 \right] = \frac{1}{M} \left(\pi_0(\kappa(g_y^2)) + 2\pi_0(l_y^2) \right), \label{eq:Ex_lyM}
\end{align}
where we have used the conditional independence of \( z^{ij} \) and \( z^{ik} \) for \( j \neq k \).

\noindent Observe that
\beq \label{des:pi}
0 < \pi_0^2(l_y) \leq \pi_0\big( l_y^2 \big) \leq \pi_0\big( \kappa(g_y^2) \big) = \|g_y^{d_z}\|_{L^2(\mix^{d_z})}^2 < \infty,
\eeq
hence, from \eqref{eq:I_Ui}, \eqref{eq:I_UiM} and \eqref{des:pi}
\beq \label{eq:UisB}
\sum_{i=1}^N \mathbb{E}\left[ U_{i,f}^2 \right] 
\leq \frac{4 \|f\|_\infty^2}{N} \, \|g_y^{d_z}\|_{L^2(\mix^{d_z})}^2, \; \; \text { and } \; \;
\sum_{i=1}^N \mathbb{E}\left[ \left(U_{i,f}^M\right)^2 \right] 
\leq \frac{8 \|f\|_\infty^2}{N} \, \|g_y^{d_z}\|_{L^2(\mix^{d_z})}^2.
\eeq
Using the upper bounds in~\eqref{eq:UisB}, we obtain~\eqref{eq:E_y_fix} by a direct application of the M--Z inequality to~\eqref{eq:Uis}.

\end{proof}

\begin{corollary}\label{cor:L2bound}
Suppose that \( \|g_y^{d_z}\|_{\infty} < \infty \). Then the conclusion of Lemma~\ref{lem:L2bound} holds for all \( p \geq 1 \), with the bound
\[
\mathbb{E}\left[ \left| \pi_0(f\, l_y^{d_z}) - \pi_0^N(f\, l_y^{d_z}) \right|^p \right]
+ \mathbb{E}\left[ \left| \pi_0^N(f [ l_y^{d_z} - l_y^{M, d_z} ] ) \right|^p \right]
\leq \frac{B_p\, \|f\|_\infty^p}{N^{p/2}} \, \|g_y^{d_z}\|_\infty^p.
\]
\end{corollary}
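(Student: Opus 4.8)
The plan is to reuse verbatim the construction of the conditionally i.i.d.\ zero-mean arrays $U_{i,f}$ and $U_{i,f}^M$ from the proof of Lemma~\ref{lem:L2bound}, together with the Marcinkiewicz--Zygmund (M--Z) inequality, but to exploit the stronger hypothesis $\|g_y^{d_z}\|_\infty < \infty$ so as to upgrade the expectation bounds \eqref{eq:I_Ui}--\eqref{eq:I_UiM} into \emph{almost sure} deterministic bounds. First I would note that, since $\mix^{d_z}$ is a probability measure, $\|g_y^{d_z}\|_{L^2(\mix^{d_z})}\le \|g_y^{d_z}\|_\infty<\infty$, so the hypothesis of Lemma~\ref{lem:L2bound} is automatically satisfied and the cases $p=1,2$ are already in hand; the real work is therefore to cover $p>2$.

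The key observation is that the $L^\infty$ bound propagates to every likelihood-type quantity appearing in the estimate. Since $\kappa^{d_z}(x,\cdot)$ is a probability measure, $l_y^{d_z}(x)=\int g_y^{d_z}(x,z)\,\kappa^{d_z}(x,\sd z)\le \|g_y^{d_z}\|_\infty$ for every $x$, and likewise $\pi_0(l_y^{d_z})\le \|g_y^{d_z}\|_\infty$; moreover, being an average of values bounded by $\|g_y^{d_z}\|_\infty$, the Monte Carlo likelihood satisfies $l_y^{M,d_z}(x^i)\le \|g_y^{d_z}\|_\infty$ as well. Substituting these pointwise bounds into \eqref{eq:Uif_UiM} yields the deterministic estimates
\[
U_{i,f}^2 \le \frac{4\,\|f\|_\infty^2\,\|g_y^{d_z}\|_\infty^2}{N^2}
\quad\text{and}\quad
\bigl(U_{i,f}^M\bigr)^2 \le \frac{4\,\|f\|_\infty^2\,\|g_y^{d_z}\|_\infty^2}{N^2},
\]
valid almost surely, and hence $\sum_{i=1}^N U_{i,f}^2\le 4\|f\|_\infty^2\|g_y^{d_z}\|_\infty^2/N$ together with the same bound for $\sum_{i=1}^N (U_{i,f}^M)^2$, again almost surely.

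With these almost sure bounds in hand I would apply M--Z for arbitrary $p\ge 1$. For the first sum the $U_{i,f}$ are i.i.d.\ and zero-mean, so M--Z gives $\mathbb{E}\bigl[|\sum_i U_{i,f}|^p\bigr]\le B_p'\,\mathbb{E}\bigl[(\sum_i U_{i,f}^2)^{p/2}\bigr]$, and the deterministic control of $\sum_i U_{i,f}^2$ makes the right-hand side at most $B_p'\,(4\|f\|_\infty^2\|g_y^{d_z}\|_\infty^2/N)^{p/2}$. For the second sum I would apply M--Z conditionally on $\mathcal{F}=\sigma(X^i)$, relative to which the $U_{i,f}^M$ are independent and, by Proposition~\ref{prop:Unbiased_E}, zero-mean; the same deterministic bound on $\sum_i (U_{i,f}^M)^2$ holds conditionally, and the tower property then removes the conditioning. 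Collecting the two terms and absorbing $4^{p/2}$ into the constant gives the claim with, say, $B_p := 2\cdot 4^{p/2}\,B_p'$.

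The single place where the argument genuinely departs from Lemma~\ref{lem:L2bound} is also the main obstacle I want to flag: for $p>2$ the M--Z inequality reduces matters to controlling $\mathbb{E}\bigl[(\sum_i U_i^2)^{p/2}\bigr]$, which is \emph{not} dominated by $(\sum_i\mathbb{E}[U_i^2])^{p/2}$ because Jensen points the wrong way, so the $L^2(\mix^{d_z})$ control used in the lemma no longer suffices on its own. The boundedness hypothesis circumvents this precisely by rendering $\sum_i U_i^2$ deterministically bounded, so its $p/2$-th moment is immediate for every $p\ge 1$. Without boundedness one would instead have to control higher moments $\pi_0\bigl(\kappa^{d_z}((g_y^{d_z})^{q})\bigr)$ and track their entry into the M--Z constant, which is exactly the complication the $L^\infty$ assumption sidesteps.
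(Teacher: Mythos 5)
Your proposal is correct and follows essentially the same route as the paper: the paper's proof of Corollary~\ref{cor:L2bound} likewise just replaces the bounds in \eqref{eq:Uif_UiM} by the almost-sure deterministic estimates $U_{i,f}^2,\ (U_{i,f}^M)^2 \le 4\|f\|_\infty^2\|g_y^{d_z}\|_\infty^2/N^2$ (using $l_y^{d_z},\ l_y^{M,d_z},\ \pi_0(l_y^{d_z})\le\|g_y^{d_z}\|_\infty$) and then reruns the Marcinkiewicz--Zygmund argument of Lemma~\ref{lem:L2bound} for general $p\ge 1$. Your closing remark about why the a.s.\ bound (rather than the $L^2(\mix^{d_z})$ control) is what makes $p>2$ work is exactly the right reason the corollary needs the $L^\infty$ hypothesis, even though the paper leaves it implicit.
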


\begin{proof}
The proof follows the same steps as in Lemma~\ref{lem:L2bound}, with the only change occurring in the bound corresponding to \eqref{eq:Uif_UiM}, where we now use
\[
U_{i,f}^2 \leq \frac{2 \|f\|_{\infty}^2}{N^2} \left(2 \|g_y^{d_z}\|_{\infty}^2 \right), \qquad
(U_{i,f}^M)^2 \leq \frac{2 \|f\|_{\infty}^2}{N^2} \left(2 \|g_y^{d_z}\|_{\infty}^2 \right).
\]
\end{proof}

\subsection{Proof of Theorem \ref{thIS2}}

Note that
\[
\pi(f) - \pi^{M,N}(f) = \frac{\pi_0(f \, l_y^{d_z})}{\pi_0(l_y^{d_z})} - \frac{\pi_0^N(f \, l_y^{M,d_z})}{\pi_0^N(l_y^{M,d_z})}.
\]
Applying Proposition~\ref{MIP}, we obtain
\begin{equation} \label{eq:DI}
\left| \pi(f) - \pi^{M,N}(f) \right| 
\leq \frac{1}{\pi_0(l_y^{d_z})} \left[ 
\left| \pi_0^N(f \, l_y^{M,d_z}) - \pi_0(f \, l_y^{d_z}) \right| 
+ \| f \|_{\infty} \left| \pi_0(l_y^{d_z}) - \pi_0^N(l_y^{M,d_z}) \right| 
\right].
\end{equation}
Since the second term is a particular case of the first with \( f \equiv 1 \), we focus on bounding the first term. By the triangle inequality
\begin{equation} \label{eq:triangle}
\left| \pi_0(f \, l_y^{d_z}) - \pi_0^N(f \, l_y^{M,d_z}) \right| 
\leq \left| \pi_0(f \, l_y^{d_z}) - \pi_0^N(f \, l_y^{d_z}) \right| 
+ \left| \pi_0^N(f [l_y^{d_z} - l_y^{M,d_z}]) \right|.
\end{equation}
Raising both sides of \eqref{eq:triangle} to power \( p \), taking expectations, and applying Corollary~\ref{cor:L2bound}, we find
\begin{equation*} 
\| \pi(f) - \pi^{M,N}(f) \|_p^p 
\leq \left( \frac{\mathcal{C}_p}{\pi_0(l_y^{d_z})} \right)^p 
\frac{\| f \|_{\infty}^p}{N^{p/2}} 
\| g_y^{d_z} \|_{\infty}^p,
\end{equation*}
where \( \mathcal{C}_p<\infty \) is a constant depending only on \( p \). The inequality \eqref{eq:ThIS2} follows immediately, with 
\beq\label{eq:C_ydz}
C_y^{d_z}:= \frac{\mathcal{C}_p \| g_y^{d_z} \|_{\infty}}{\pi_0(l_y^{d_z})}<\infty.
\eeq 
\qed

\section{Linear Gaussian model with fixed observations}\label{Ap:Linear_Gaussian}
\sectionmark{Linear Gaussian model}
The aim of this section is to derive an upper bound for the constant $C_y^{d_z}$ in \eqref{eq:ThIS2} and \eqref{eqCy}, where the dependence on $d_z$ is explicit.  This requires obtaining a suitable lower bound for $\pi_0(l_y^{d_z})$. In light of Eq. \eqref{eq:pi0lyd}, we seek explicit bounds for 
$$ \abs{\Sigma_y^{d_z}} \;\; \text{ and } \;\; \exp\left\{
	-\frac{1}{2}(y-\mu_y^{d_z})^\top(\Sigma_y^{d_z})^{-1}(y-\mu_y^{d_z})
\right\}.$$ 

We recall some notation. For a general matrix $M$, let $\sigma_1(M)$ denote its largest singular value. If $M$ is square whit dimension $n\times n$, let $\mathrm{spec}({M})$ denote the set of its eigenvalues. If ${M}$ has real eigenvalues, then $\mathrm{spec}({M}) \subseteq [\lambda_n(M),\lambda_1(M)]$  where $\lambda_n(M),\lambda_1(M)$ denote the minimum and maximum eigenvalues of $M$, respectively.

\subsection{Preliminaries}

The determinant of $\Sigma_y^{d_z}$ can be bounded in terms of the eigenvalues and singular values of the model matrices using the following algebraic results.

\begin{lemma}\label{sumeig}
    Let $M_1$ and $M_2$ be real $n\times n$ symmetric matrices. Then $M_1,{M_2}$ and $M_1+{M_2}$ have real eigenvalues and
    \begin{equation*}
        \lambda_n(M_1)+\lambda_n(M_2) \leq \lambda_n(M_1+M_2) \leq \lambda_1(M_1+M_2) \leq \lambda_1(M_1)+\lambda_1(M_2).
    \end{equation*}
\end{lemma}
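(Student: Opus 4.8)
The plan is to rely on the spectral theorem for real symmetric matrices together with the Rayleigh--Ritz variational characterisation of the extreme eigenvalues, from which all four inequalities fall out immediately.

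First I would note that the sum of two symmetric matrices is symmetric, so $M_1+M_2$ is symmetric whenever $M_1$ and $M_2$ are. By the spectral theorem, every real symmetric matrix is orthogonally diagonalisable and hence has a purely real spectrum; this settles the assertion that $M_1$, $M_2$ and $M_1+M_2$ all have real eigenvalues and makes the ordering $\lambda_n(\cdot)\le\lambda_1(\cdot)$ meaningful.

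Next I would invoke the Rayleigh--Ritz characterisation of the largest and smallest eigenvalues: for any real symmetric $M\in\mathbb{R}^{n\times n}$,
\[
\lambda_1(M) = \max_{\|v\|_2=1} v^\top M v, \qquad \lambda_n(M) = \min_{\|v\|_2=1} v^\top M v.
\]
Applying this to $M_1+M_2$ and using $v^\top(M_1+M_2)v = v^\top M_1 v + v^\top M_2 v$, the outer inequalities follow from the elementary facts that the supremum of a sum is at most the sum of the suprema and the infimum of a sum is at least the sum of the infima. Concretely,
\[
\lambda_1(M_1+M_2) = \max_{\|v\|_2=1}\bigl(v^\top M_1 v + v^\top M_2 v\bigr) \le \lambda_1(M_1)+\lambda_1(M_2),
\]
and, symmetrically,
\[
\lambda_n(M_1+M_2) = \min_{\|v\|_2=1}\bigl(v^\top M_1 v + v^\top M_2 v\bigr) \ge \lambda_n(M_1)+\lambda_n(M_2).
\]
The central inequality $\lambda_n(M_1+M_2)\le\lambda_1(M_1+M_2)$ is automatic, since the smallest eigenvalue of any matrix never exceeds its largest.

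There is no substantial obstacle here, as the statement is merely the extremal special case of Weyl's inequalities. The only point requiring a little care is the justification of the variational formulas themselves, which I would either cite as the standard Courant--Fischer/Rayleigh--Ritz theorem or prove in one line: diagonalising $M = \sum_i \lambda_i e_i e_i^\top$ in an orthonormal eigenbasis $\{e_i\}$, one has $v^\top M v = \sum_i \lambda_i \langle v, e_i\rangle^2$ for $\|v\|_2=1$, which is a convex combination of the eigenvalues and therefore lies in $[\lambda_n(M),\lambda_1(M)]$, with the extremes attained at the corresponding eigenvectors.
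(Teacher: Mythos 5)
Your proof is correct, but it takes a different route from the paper. You rely on the Rayleigh--Ritz variational characterisation $\lambda_1(M)=\max_{\|v\|_2=1}v^\top M v$ and $\lambda_n(M)=\min_{\|v\|_2=1}v^\top M v$, and then deduce the outer inequalities from the elementary sub/super-additivity of $\max$ and $\min$ over a sum; you also sketch a one-line proof of the variational formulas themselves. The paper instead works with the shifted matrices $M^{+}:=M-\lambda_n(M)I$ and $M^{-}:=M-\lambda_1(M)I$, shows via the spectral factorisation $M^{+}=U\Lambda U^\top$ that $v^\top M^{+}v=\|\Lambda^{1/2}U^\top v\|^2\ge 0$ (so $M^{+}$ is positive semidefinite and $M^{-}$ negative semidefinite), and then observes that $M_1^{+}+M_2^{+}=(M_1+M_2)-(\lambda_n(M_1)+\lambda_n(M_2))I$ is a sum of positive semidefinite matrices, hence positive semidefinite, which pins the spectrum of $M_1+M_2$ from below; the upper bound is obtained symmetrically from $M_1^{-}+M_2^{-}$. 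The two arguments are close cousins --- both ultimately rest on the spectral theorem --- but yours is the more standard textbook derivation (the extremal case of Weyl's inequalities) and makes the mechanism of the bound transparent via the Rayleigh quotient, whereas the paper's shift-and-semidefiniteness argument avoids invoking any variational principle and needs only the fact that a sum of positive semidefinite matrices is positive semidefinite. Either is a complete and valid proof.
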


\begin{proof}
   It is a classical result that every real symmetric matrix has real eigenvalues. Let ${M}$ be a symmetric matrix with real eigenvalues, then the spectrum of the shifted matrix \( M^+ := M - \lambda_n(M) I \) satisfies $\mathrm{spec}({M}-\lambda_n(M) I)\subseteq[0,\lambda_1(M)-\lambda_n(M)]$.  Indeed, let \( v \neq 0 \) be an eigenvector of \( M^+ \) with associated eigenvalue \( \gamma \). We aim to show that \( 0\leq \gamma \leq \lambda_1(M) -\lambda_n(M)\). Since
\[
M^+ v = \gamma v = (M - \lambda_n(M) I) v = Mv - \lambda_n(M) v,
\]
it follows that
\[
Mv = (\gamma + \lambda_n(M)) v.
\]
That is, \( \gamma + \lambda_n(M) \in \mathrm{spec}(M) \), and therefore
\[
\gamma + \lambda_n(M) = \lambda_j(M)
\]
for some \( j \in \{1, \ldots, n\} \). This implies
\[
0\leq  \lambda_j(M) - \lambda_n(M) = \gamma \leq \lambda_1(M) - \lambda_n(M),
\]
as claimed.
    Similarly, one can show that the matrix ${M}^-:={M}-\lambda_1(M)I$ satisfies $\mathrm{spec}({M}^-)\in [\lambda_n(M)-\lambda_1(M),0]$.  
   Now, note that   
   $$M_1^+ +M_2^+={M_1}-\lambda_n(M_1)I + {M_2}-\lambda_n(M_2)I=({M_1}+{M_2})-(\lambda_n(M_1)+\lambda_n(M_2))I,$$ and the same argument applies. Consequently, $\mathrm{spec}(({M_1}+{M_2})-(\lambda_n(M_1)+\lambda_n(M_2))I)\subseteq [0,\infty)$, i.e., $\mathrm{spec}({M_1}+{M_2})\subseteq[\lambda_1(M_1)+\lambda_n(M_2),\infty)$. Similarly, the matrix  
    $$M_1^-+M_2^-={M_1}-\lambda_1(M_1)I + {M_2}-\lambda_1(M_2)I=({M_1}+{M_2})-(\lambda_1(M_1)+\lambda_1(M_2))I,$$
 leading to $\mathrm{spec}({M_1}+{M_2})\subseteq (-\infty, \lambda_1(M_1)+\lambda_1(M_2)].$ Combining both results, we conclude that $$\mathrm{spec}({M_1}+{M_2})\subseteq [\lambda_n(M_1)+\lambda_n(M_2),\lambda_1(M_1) +\lambda_1(M_2)].$$ 
 \end{proof}

 \begin{lemma}\label{deteig}
     Let ${M_1}$ and ${M_2}$ be  positive semidefinite $n\times n$ 
 Hermitian matrices.  Then 
 \begin{equation*}
     \prod_{i=1}^n(\lambda_i(M_1)+\lambda_i(M_2))\leq \abs{{M_1}+{M_2}}\leq \prod_{i=1}^n(\lambda_i(M_1)+\lambda_{n+1-i}(M_2)). 
 \end{equation*}
 \begin{proof}
    See Eq.~(2) of the Theorem in~\cite{fiedler1971bounds}.
 \end{proof}
 \end{lemma}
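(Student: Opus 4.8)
The plan is to prove the two-sided bound by an optimisation argument over the unitary group: reduce the problem to the commuting (simultaneously diagonal) case, and then close with a rearrangement inequality. First I would exploit the invariance of both sides under a common unitary conjugation $(M_1,M_2)\mapsto(UM_1U^*,UM_2U^*)$, under which the eigenvalues $\lambda_i(M_1)$, $\lambda_i(M_2)$ and the determinant $\det(M_1+M_2)$ are all unchanged. Hence I may assume $M_1=\mathrm{diag}(\lambda_1(M_1),\ldots,\lambda_n(M_1))$ and $M_2=U\Lambda U^*$, where $\Lambda=\mathrm{diag}(\lambda_1(M_2),\ldots,\lambda_n(M_2))$ and $U$ ranges over $U(n)$. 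By continuity it suffices to treat $M_1$ positive definite (the general positive semidefinite case follows by applying the bound to $M_1+\varepsilon I$ and letting $\varepsilon\downarrow 0$), so that $M_1+M_2$ is invertible throughout.

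Next I would analyse the extrema of the continuous function $\Phi(U):=\det(M_1+U\Lambda U^*)$ on the compact boundaryless manifold $U(n)$, where the global maximum and minimum are therefore attained at critical points. Differentiating along $U\mapsto e^{tH}U$ for skew-Hermitian $H$ and writing $B:=U\Lambda U^*$, Jacobi's formula gives $\frac{d}{dt}\Phi|_{t=0}=\Phi(U)\,\mathrm{tr}\big((M_1+B)^{-1}[H,B]\big)$. Rearranging cyclically, the critical-point condition reads $\mathrm{tr}(HX)=0$ for all skew-Hermitian $H$, with $X:=B(M_1+B)^{-1}-(M_1+B)^{-1}B$. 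Since $X$ is itself skew-Hermitian, the choice $H=X$ forces $\|X\|_F^2=0$, hence $X=0$; thus $B$ commutes with $(M_1+B)^{-1}$, and therefore with $M_1$. Consequently, at every extremum $M_1$ and $M_2$ commute and may be taken simultaneously diagonal, giving $\det(M_1+M_2)=\prod_{i=1}^n\big(\lambda_i(M_1)+\lambda_{\sigma(i)}(M_2)\big)$ for some permutation $\sigma$.

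Finally I would invoke a rearrangement inequality. Writing the log-determinant as $\sum_{i=1}^n g\big(\lambda_i(M_1),\lambda_{\sigma(i)}(M_2)\big)$ with $g(a,b):=\log(a+b)$, the mixed partial $\partial^2 g/\partial a\,\partial b=-(a+b)^{-2}<0$ shows $g$ is strictly submodular. For a submodular $g$, the sum over permutations of two decreasingly ordered sequences is minimised when they are paired in the same order ($\sigma=\mathrm{id}$) and maximised when paired in reverse order ($\sigma(i)=n+1-i$); this is obtained by reducing an arbitrary $\sigma$ to these extremes through adjacent transpositions, each step being the elementary two-term submodularity inequality. Exponentiating the extremal values yields precisely the lower bound $\prod_i(\lambda_i(M_1)+\lambda_i(M_2))$ and the upper bound $\prod_i(\lambda_i(M_1)+\lambda_{n+1-i}(M_2))$.

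I expect the main obstacle to be the critical-point step: one must justify that the extrema of $\Phi$ over $U(n)$ are genuine critical points, and carefully handle eigenvalue multiplicities (where commutativity still leaves freedom to diagonalise $M_2$ within each eigenspace of $M_1$) as well as the degenerate case in which $M_1+M_2$ is singular. The reduction and rearrangement steps are comparatively routine.
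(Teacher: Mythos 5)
Your argument is correct, but it is a genuinely different route from the paper's, which offers no proof at all and simply cites Eq.~(2) of the Theorem in Fiedler (1971). What you have written is, in effect, a self-contained reconstruction of the classical argument behind that result: (a) reduce by unitary invariance and an $\varepsilon I$ perturbation to $M_1$ diagonal and positive definite; (b) observe that $\Phi(U)=\det(M_1+U\Lambda U^*)$ attains its extrema at critical points of the compact group $U(n)$, where the first-variation computation via Jacobi's formula forces $B(M_1+B)^{-1}-(M_1+B)^{-1}B=0$ (your choice $H=X$ gives $\mathrm{tr}(X^2)=-\|X\|_F^2=0$, which is the sign-correct version of your claim), hence $[M_1,M_2]=0$ and $\det(M_1+M_2)=\prod_i\bigl(\lambda_i(M_1)+\lambda_{\sigma(i)}(M_2)\bigr)$; (c) close with the rearrangement step, whose two-term core is the identity $(a_1+b_2)(a_2+b_1)-(a_1+b_1)(a_2+b_2)=(a_1-a_2)(b_1-b_2)\ge 0$, confirming that the similarly ordered pairing minimises and the oppositely ordered pairing maximises the product. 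All three potential obstacles you flag (extrema at critical points, eigenvalue multiplicities, singular $M_1+M_2$) are handled correctly by the compactness of $U(n)$, by rediagonalising $M_2$ within the eigenspaces of $M_1$ without changing $\det(M_1+M_2)$, and by the $\varepsilon\downarrow 0$ limit, respectively. What your approach buys is a complete, elementary and verifiable proof using only Jacobi's formula and a rearrangement inequality; what the paper's citation buys is brevity. Either is acceptable here, since the lemma is a known linear-algebra fact used only as an auxiliary tool in bounding $\lvert\Sigma_y^{d_z}\rvert$.
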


\begin{lemma}\label{334}
   Let ${M_1}\in \mathbb{R}^{m\times p}$ and ${M_2}\in \mathbb{R}^{p\times n}$ be given matrices, let $q:=\min\{m,p,n\}$, and denote the ordered singular values of ${M_1}$, ${M_2}$, and ${M_1}{M_2}$ by $\sigma_1({M_1})\geq\dots \geq \sigma_{\min\{m,p\}}({M_1})\geq 0 $, $\sigma_1({M_2})\geq\dots \geq \sigma_{\min\{p,n\}}({M_2})\geq 0 $, and $\sigma_1({M_1}{M_2})\geq\dots \geq \sigma_{\min\{m,n\}}({M_1}{M_2})\geq 0.$ Then  
   \begin{equation*}
       \prod_{i=1}^k \sigma_i({M_1}{M_2})\leq \prod_{i=1}^k \sigma_i({M_1})\sigma_i({M_2}), \hspace{.5cm} k=1,...,q.
   \end{equation*}
   If $n=p=m$, then the equality holds for $k=n$.
\end{lemma}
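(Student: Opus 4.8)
The plan is to reduce this multiplicative majorization inequality to a single submultiplicativity statement for the spectral norm, applied to the $k$-th \emph{compound} (exterior power) matrices. Recall that for $A\in\mathbb{R}^{a\times b}$ and $1\le k\le\min\{a,b\}$, the $k$-th compound matrix $C_k(A)\in\mathbb{R}^{\binom{a}{k}\times\binom{b}{k}}$ collects all $k\times k$ minors of $A$. The two facts I would rely on are: (i) the Cauchy–Binet formula in its compound-matrix form, $C_k(M_1 M_2)=C_k(M_1)\,C_k(M_2)$, valid whenever the product $M_1M_2$ is defined and $k\le q=\min\{m,p,n\}$; and (ii) the identification of the top singular value of a compound matrix with the product of the top singular values of the original matrix.

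For (ii), I would start from the singular value decomposition $A=U\Sigma V^\top$, with $U,V$ orthogonal and $\Sigma$ diagonal carrying $\sigma_1(A)\ge\sigma_2(A)\ge\cdots\ge 0$. Since $C_k$ is multiplicative and sends orthogonal matrices to orthogonal matrices, $C_k(A)=C_k(U)\,C_k(\Sigma)\,C_k(V^\top)$ is itself an SVD-type factorization, where $C_k(\Sigma)$ is diagonal with entries $\prod_{j\in S}\sigma_j(A)$ ranging over all $k$-element index sets $S$. Consequently the singular values of $C_k(A)$ are exactly these subset-products, and the largest one corresponds to $S=\{1,\dots,k\}$; that is,
\[
\sigma_1\big(C_k(A)\big)=\prod_{i=1}^k \sigma_i(A).
\]

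With these two facts the inequality is immediate. Applying (i) and then submultiplicativity of the spectral norm $\sigma_1(\cdot)=\|\cdot\|_2$,
\[
\prod_{i=1}^k \sigma_i(M_1M_2)
=\sigma_1\big(C_k(M_1M_2)\big)
=\sigma_1\big(C_k(M_1)C_k(M_2)\big)
\le \sigma_1\big(C_k(M_1)\big)\,\sigma_1\big(C_k(M_2)\big)
=\prod_{i=1}^k \sigma_i(M_1)\prod_{i=1}^k\sigma_i(M_2),
\]
which is the claimed bound, since $\prod_{i=1}^k \sigma_i(M_1)\sigma_i(M_2)=\big(\prod_{i=1}^k \sigma_i(M_1)\big)\big(\prod_{i=1}^k\sigma_i(M_2)\big)$. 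For the equality statement when $m=p=n$ and $k=n$, I would instead use that $\prod_{i=1}^n\sigma_i(A)=|\det A|$ for square $A$, together with multiplicativity of the determinant: $\prod_{i=1}^n\sigma_i(M_1M_2)=|\det(M_1M_2)|=|\det M_1|\,|\det M_2|=\prod_{i=1}^n\sigma_i(M_1)\prod_{i=1}^n\sigma_i(M_2)$.

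The main obstacle is entirely contained in fact (ii): one must verify that passing to compound matrices turns the product of leading singular values into a genuine spectral norm, so that the elementary submultiplicativity $\|XY\|_2\le\|X\|_2\|Y\|_2$ can do the work. Everything else is bookkeeping with binomial dimensions and the observation that $C_k$ preserves orthogonality, so $C_k(U)$ and $C_k(V^\top)$ do not alter singular values. Alternatively, one may simply cite this as Horn's inequality (e.g. Theorem~3.3.4 in Horn and Johnson, \emph{Topics in Matrix Analysis}), of which the stated lemma is a direct transcription; the exterior-algebra argument above is the standard route to that result.
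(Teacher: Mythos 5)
Your proposal is correct. The paper itself offers no argument for this lemma: its ``proof'' is a one-line citation to Theorem~3.3.4 of a reference work (the result is Horn's multiplicative majorization inequality, Theorem~3.3.4 in Horn and Johnson, \emph{Topics in Matrix Analysis}; note that the paper's citation key actually points to Fiedler's 1971 paper, which appears to be a bibliographic slip, since the same key is also used for the eigenvalue-of-sums lemma). You supply the standard self-contained proof of that cited theorem: Cauchy--Binet in compound-matrix form reduces the product inequality to submultiplicativity of the spectral norm, and the identification $\sigma_1(C_k(A))=\prod_{i=1}^k\sigma_i(A)$ is correctly justified via the SVD, multiplicativity of $C_k$, and preservation of orthogonality ($C_k(U)C_k(U)^\top=C_k(UU^\top)=I$). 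The range restriction $k\le q=\min\{m,p,n\}$ is exactly what is needed for all three compounds to be defined, and the equality case for square matrices via $\prod_{i=1}^n\sigma_i(A)=|\det A|$ and multiplicativity of the determinant is also right. In short, your argument proves what the paper merely cites; the only cost is the compound-matrix machinery, and the only caveat is the small bookkeeping point (which you flag) that $C_k(\Sigma)$ for rectangular $\Sigma$ is diagonal in the generalized sense, with the subset products as its singular values.
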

\begin{proof}
   See Theorem 3.3.4 in \cite{fiedler1971bounds}.
\end{proof}

\begin{lemma}\label{3316} Let ${M_1},$ ${M_2} \in \mathbb{R}^{m\times n}$ be given and let $q=\min\{m,n\}$. The inequality 
\begin{equation*}
    \abs{\sigma_i({M_1}+{M_2})-\sigma_i({M_1})}\leq \sigma_1({M_2}), \hspace{.2cm} \text{for } i=1,...,q,
\end{equation*}
holds for the decreasingly ordered singular values of ${M_1},$ ${M_2}$ and ${M_1}+{M_2}.$
    In particular, $$\sigma_1({M_1}+{M_2})\leq \sigma_1({M_2})+\sigma_1({M_1}).$$

\end{lemma}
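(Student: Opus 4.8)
The plan is to prove the singular value perturbation bound through the Courant--Fischer min--max characterization, which keeps the argument self-contained and avoids passing through the Jordan--Wielandt dilation. First I would recall that for any $M\in\mbR^{m\times n}$ and any index $i\in\{1,\dots,q\}$ with $q=\min\{m,n\}$, the $i$-th singular value admits the variational representation
$$\sigma_i(M)=\max_{\substack{S\subseteq\mbR^n\\ \dim S=i}}\ \min_{\substack{x\in S\\ \norm{x}=1}}\norm{Mx},$$
which follows by applying the Courant--Fischer theorem to the symmetric positive semidefinite matrix $M^\top M$ (whose $i$-th largest eigenvalue is $\sigma_i(M)^2$) and taking square roots. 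This representation is valid precisely for $i\le q$, which matches the range of indices in the statement.

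Next I would establish the one-sided inequality $\sigma_i(M_1+M_2)\le\sigma_i(M_1)+\sigma_1(M_2)$. For any $i$-dimensional subspace $S$ and any unit vector $x\in S$, the triangle inequality together with the operator-norm bound $\norm{M_2x}\le\sigma_1(M_2)\norm{x}=\sigma_1(M_2)$ yields the pointwise estimate $\norm{(M_1+M_2)x}\le\norm{M_1x}+\sigma_1(M_2)$. Taking the minimum over unit $x\in S$ and then the maximum over all $i$-dimensional $S$, and observing that the additive constant $\sigma_1(M_2)$ commutes with both the inner minimum and the outer maximum, gives the claimed bound. The reverse direction comes from symmetry: writing $M_1=(M_1+M_2)+(-M_2)$ and using $\sigma_1(-M_2)=\sigma_1(M_2)$, the same inequality applied to the pair $(M_1+M_2,\,-M_2)$ produces $\sigma_i(M_1)\le\sigma_i(M_1+M_2)+\sigma_1(M_2)$. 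Combining the two one-sided bounds yields
$$\abs{\sigma_i(M_1+M_2)-\sigma_i(M_1)}\le\sigma_1(M_2),\qquad i=1,\dots,q,$$
and the ``in particular'' statement is simply the case $i=1$, giving $\sigma_1(M_1+M_2)\le\sigma_1(M_1)+\sigma_1(M_2)$.

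I do not expect a genuine obstacle here, since this is the classical Weyl-type perturbation bound for singular values; the only point requiring care is the justification that the constant $\sigma_1(M_2)$ factors out cleanly through the nested min--max (the inner minimum of a sum with a constant splits additively, and the outer maximum then preserves the inequality), together with keeping the index range $i\le q$ consistent with the validity of the variational formula. As an alternative consistent with the citation style used for Lemmas~\ref{deteig} and~\ref{334}, one could instead invoke Weyl's perturbation theorem directly from a standard reference, but the min--max argument above is short enough to include in full.
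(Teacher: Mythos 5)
Your proof is correct. Note, however, that the paper does not actually prove this lemma: it simply cites Theorem 3.3.16 of Horn and Johnson's \emph{Topics in Matrix Analysis}, where the result (the standard Weyl-type perturbation bound for singular values) is usually obtained by embedding $M$ into the symmetric Jordan--Wielandt matrix $\bigl(\begin{smallmatrix}0 & M\\ M^\top & 0\end{smallmatrix}\bigr)$ and invoking Weyl's eigenvalue inequality. Your route is genuinely different and self-contained: you derive the variational formula $\sigma_i(M)=\max_{\dim S=i}\min_{x\in S,\,\|x\|=1}\|Mx\|$ from Courant--Fischer applied to $M^\top M$ (valid for $i\le q$, as you correctly note), obtain one direction by splitting $\|(M_1+M_2)x\|\le\|M_1x\|+\sigma_1(M_2)$ and pulling the constant through the nested min--max, and get the other direction by the symmetric decomposition $M_1=(M_1+M_2)+(-M_2)$ together with $\sigma_1(-M_2)=\sigma_1(M_2)$. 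Each step checks out, including the factoring of the additive constant through both optimizations. What your approach buys is a proof that avoids any external reference and any dilation argument, at the cost of a few extra lines; what the paper's citation buys is brevity in an appendix that is already long. Either is acceptable here.
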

\begin{proof}
    See Theorem 3.3.16 in \cite{horn1994topics}.
\end{proof}

\subsection{Bound for $\lvert\Sigma_y^{d_z} \rvert$}

To lighten the notation, we omit the superscript $d_z$ in this subsection, and write $\Sigma_y,T,B,Q$ and $H$ instead of $\Sigma_y^{d_z},T^{d_z},B^{d_z},Q^{d_z}$ and $H^{d_z}$.

\begin{lemma}\label{BDE}
    Let $\Sigma_y:=T\Sigma_xT^\top+BQB^\top+R$ as defined in \eqref{eq:mu_Sd}. The following inequality holds \beq
\abs{\Sigma_y}\leq [\lambda_1(T\Sigma T^\top)+\lambda_1(BQB^\top)+\lambda_1(R)]^{d_y} \notag
    \eeq 
\end{lemma}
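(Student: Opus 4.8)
The plan is to use that $\Sigma_y$ is real symmetric and positive definite, so that its determinant equals the product of its $d_y$ eigenvalues, each of which lies in $(0,\lambda_1(\Sigma_y)]$. First I would record
\[
|\Sigma_y| = \prod_{i=1}^{d_y} \lambda_i(\Sigma_y) \leq \lambda_1(\Sigma_y)^{d_y},
\]
where the positivity of all eigenvalues (which is what makes this bound valid) follows because $R$ is positive definite while $T\Sigma_x T^\top$ and $BQB^\top$ are positive semidefinite, so that $\Sigma_y$ is positive definite.

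The remaining step is to control the top eigenvalue $\lambda_1(\Sigma_y)$ by the subadditivity of the largest eigenvalue under sums of symmetric matrices, which is precisely the upper inequality in Lemma \ref{sumeig}. Since $T\Sigma_x T^\top$, $BQB^\top$ and $R$ are all symmetric, I would apply that inequality twice — first grouping $\Sigma_y = (T\Sigma_x T^\top + BQB^\top) + R$, and then splitting the first summand — to obtain
\[
\lambda_1(\Sigma_y) \leq \lambda_1(T\Sigma_x T^\top) + \lambda_1(BQB^\top) + \lambda_1(R).
\]
Substituting this into the previous display yields the claimed bound.

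The argument is entirely routine and presents no real obstacle; the only points worth stating explicitly are that $\Sigma_y$ is symmetric so that Lemma \ref{sumeig} applies, and that its eigenvalues are nonnegative so that raising $\lambda_1(\Sigma_y)$ to the power $d_y$ preserves the inequality. If a sharper estimate were ever needed, Lemma \ref{deteig} applied with $M_1 = T\Sigma_x T^\top + BQB^\top$ and $M_2 = R$ would give a tighter determinant bound, but the crude eigenvalue-product route above is the most economical way to establish the stated inequality.
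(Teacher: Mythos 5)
Your proof is correct, and it reaches the stated bound by a slightly more elementary route than the paper. The paper's proof also groups $\Sigma_y = \Gamma + R$ with $\Gamma = T\Sigma_x T^\top + BQB^\top$, but it first invokes the Fiedler determinant inequality (Lemma \ref{deteig}) to get $\det(\Gamma+R) \leq \prod_{i=1}^{d_y}\bigl(\lambda_i(R)+\lambda_{d_y+1-i}(\Gamma)\bigr)$, and then immediately relaxes every factor to $\lambda_1(R)+\lambda_1(\Gamma)$ before applying Lemma \ref{sumeig} to split $\lambda_1(\Gamma)$. You bypass Lemma \ref{deteig} entirely: the crude estimate $|\Sigma_y| = \prod_i \lambda_i(\Sigma_y) \leq \lambda_1(\Sigma_y)^{d_y}$ (valid because $\Sigma_y$ is positive definite, as you note) followed by two applications of Lemma \ref{sumeig} gives exactly the same final bound. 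Since the paper discards the extra sharpness of Fiedler's inequality anyway, nothing is lost by your shortcut, and your closing remark — that Lemma \ref{deteig} is the tool one would reach for if a tighter determinant bound were needed — accurately identifies what the paper's heavier machinery would buy. The only two points that need to be stated for rigor, namely the symmetry of the three summands (so Lemma \ref{sumeig} applies) and the positivity of the eigenvalues (so the product bound and the exponentiation are legitimate), are both present in your write-up.
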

\begin{proof}
    Define the matrix $\Gamma:= T\Sigma_xT^\top+BQB^\top$. Then $\Sigma_y=\Gamma+ R$ where both $\Gamma$ and $R$ are positive semidefinite, symmetric  $d_y\times d_y$ matrices. By Lemma \ref{deteig}, we have
    $$\abs{\Gamma+R}\leq \prod_{i=1}^{d_y}(\lambda_i(R)+\lambda_{n+1-i}(\Gamma))\leq \prod_{i=1}^{d_y}(\lambda_1(R)+\lambda_1(\Gamma)),$$
    where $\lambda_i(R), \lambda_i(\Gamma),$ $i=1,...d_y$ are the eigenvalues of $R$ and $\Gamma$ respectively, arranged in decreasing order.
    Now, since $T\Sigma_xT^\top$ and $BQB^\top$ are real, symmetric  $d_y\times d_y$ matrices, we can apply  Lemma \ref{sumeig} to obtain
    $$\lambda_1(\Gamma) \leq \lambda_1(T\Sigma_x T^\top)+\lambda_1(BQB^\top).$$
    Therefore, $\lvert \Sigma_y \rvert$ can be bounded by
$$\abs{\Sigma_y}=\abs{\Gamma+R}\leq \prod_{i=1}^{d_y}(\lambda_1(R)+\lambda_1(\Gamma))\leq \prod_{i=1}^{d_y}(\lambda_1(R)+\lambda_1(T\Sigma_xT^\top)+\lambda_1(BQB^\top)).$$
\end{proof}

\begin{lemma}\label{BES}
  Let the matrices $A$, $H$, $B$ and $Q$ be as defined in the linear model of Section \ref{Gauss_Model}, and let $T=A+BH$. Then, the following bounds hold
  \beq \notag
\lambda_1(T\Sigma_xT^\top)\leq (\sigma_1(A)+\sigma_1(B)\sigma_1(H))^2 \lambda_1(\Sigma_x)  \quad \text{and} \quad 
  \lambda_1(BQB^\top)\leq \sigma_1(B)^2 \lambda_1(Q). 
  \eeq  
\end{lemma}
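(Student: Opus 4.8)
The plan is to reduce both inequalities to bounds on largest singular values. The key elementary fact I would use is that for a symmetric positive semidefinite matrix $M$ one has $\lambda_1(M) = \sigma_1(M) = \|M\|_2$, and that for any rectangular matrix $C$ the largest eigenvalue of $CC^\top$ equals the square of its largest singular value, $\lambda_1(CC^\top) = \sigma_1(C)^2$. Writing $\Sigma_x = \Sigma_x^{1/2}(\Sigma_x^{1/2})^\top$ and $Q = Q^{1/2}(Q^{1/2})^\top$ with $\Sigma_x^{1/2}, Q^{1/2}$ the symmetric positive semidefinite square roots, this lets me factor
\[
T\Sigma_x T^\top = (T\Sigma_x^{1/2})(T\Sigma_x^{1/2})^\top, \qquad BQB^\top = (BQ^{1/2})(BQ^{1/2})^\top,
\]
so that $\lambda_1(T\Sigma_x T^\top) = \sigma_1(T\Sigma_x^{1/2})^2$ and $\lambda_1(BQB^\top) = \sigma_1(BQ^{1/2})^2$. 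Since $\Sigma_x^{1/2}$ and $Q^{1/2}$ are symmetric positive semidefinite, $\sigma_1(\Sigma_x^{1/2}) = \sqrt{\lambda_1(\Sigma_x)}$ and $\sigma_1(Q^{1/2}) = \sqrt{\lambda_1(Q)}$.

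For the first bound, I would apply Lemma~\ref{334} with $k=1$ (submultiplicativity of the largest singular value) to get $\sigma_1(T\Sigma_x^{1/2}) \leq \sigma_1(T)\,\sigma_1(\Sigma_x^{1/2}) = \sigma_1(T)\sqrt{\lambda_1(\Sigma_x)}$. It then remains to bound $\sigma_1(T) = \sigma_1(A + BH)$. Here Lemma~\ref{3316} gives the subadditivity $\sigma_1(A+BH) \leq \sigma_1(A) + \sigma_1(BH)$, and a further application of Lemma~\ref{334} ($k=1$) to the product $BH$ yields $\sigma_1(BH) \leq \sigma_1(B)\sigma_1(H)$. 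Combining these, $\sigma_1(T) \leq \sigma_1(A) + \sigma_1(B)\sigma_1(H)$, and squaring gives
\[
\lambda_1(T\Sigma_x T^\top) = \sigma_1(T\Sigma_x^{1/2})^2 \leq \big(\sigma_1(A) + \sigma_1(B)\sigma_1(H)\big)^2 \lambda_1(\Sigma_x),
\]
as claimed. The second bound is even more direct: Lemma~\ref{334} ($k=1$) applied to $BQ^{1/2}$ gives $\sigma_1(BQ^{1/2}) \leq \sigma_1(B)\sigma_1(Q^{1/2}) = \sigma_1(B)\sqrt{\lambda_1(Q)}$, and squaring yields $\lambda_1(BQB^\top) \leq \sigma_1(B)^2 \lambda_1(Q)$.

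There is no serious obstacle in this argument; it is essentially careful bookkeeping of the cited singular value inequalities. The only points requiring attention are (i) checking that the dimension patterns match the hypotheses of Lemmas~\ref{334} and~\ref{3316} (both $A$ and $BH$ are $d_y \times d_x$, and the products $T\Sigma_x^{1/2}$, $BH$, $BQ^{1/2}$ are compatible), and (ii) justifying the passage from largest eigenvalues of the symmetric products to squared largest singular values through the symmetric square-root factorization, rather than by an ad hoc norm estimate. Once these are in place, the two displayed chains of inequalities close the proof.
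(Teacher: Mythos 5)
Your proof is correct and follows essentially the same route as the paper: factor the Gram matrices through a square root of the covariance, identify $\lambda_1(CC^\top)$ with $\sigma_1(C)^2$, and then combine the submultiplicativity of $\sigma_1$ (Lemma~\ref{334}) with its subadditivity (Lemma~\ref{3316}). The only cosmetic difference is that you use the symmetric square roots $\Sigma_x^{1/2}$, $Q^{1/2}$ whereas the paper factors via the spectral decomposition $U\Lambda^{1/2}$ and then invokes unitary invariance of singular values; your choice slightly streamlines that step.
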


\begin{proof}
   By construction $\Sigma_x$ and $Q$ are symmetric positive definite matrices. Using the spectral theorem, we can decompose them as 
   $$\Sigma_x=U_x \Lambda_x U_x^\top \quad \text{and} \quad Q= U_z \Lambda_z U_z^\top,$$
   where \(U_x\) and \(U_z\) are unitary matrices, and \(\Lambda_x\) and \(\Lambda_z\) are diagonal matrices containing the eigenvalues of \(\Sigma_x\) and \(Q\), respectively.  
Consequently,
\[
T \Sigma_x T^\top = \tilde{T} \tilde{T}^\top,  
\qquad  
B Q B^\top = \tilde{B} \tilde{B}^\top,
\]
where
\[
\tilde{T} = T U_x \Lambda_x^{1/2},  
\qquad  
\tilde{B} = B U_z \Lambda_z^{1/2}.
\]

   Recall that the singular values of $\tilde B$ are the square roots of the eigenvalues of the Gram matrix $\tilde B \tilde B^\top$; the same holds for $\tilde T$. Therefore, we obtain
   \beq \label{eq:T_B}
   \sigma_1(\tilde T)^2=\lambda_1(T\Sigma_xT^\top), \; \; \text{ and } \; \; \sigma_1(\tilde B)^2=\lambda_1(BQB^\top). 
   \eeq
   Now, by Lemma \ref{334}, 
   \beq\label{eq:B}
   \sigma_1(\tilde B) \leq \sigma_1(B)\sigma_1(U_z\Lambda_z^{1/2})= \sigma_1(B)\sigma_1(\Lambda_z^{1/2}),
   \eeq
  where we used the fact that singular values are invariant under unitary transformations, specifically
$$\sigma_1(U_z\Lambda_z^{1/2})=\sigma_1(U_z\Lambda_z^{1/2}I)=\sigma_1(\Lambda_z^{1/2})$$ with $I$ denoting the identity matrix. Similarly, using Lemma \ref{334} and Lemma \ref{3316},
   \beq\label{eq:T}
   \sigma_1(\tilde T) \leq \sigma_1(T^{d_z})\sigma_1(U_x\Lambda_x)=\sigma_1(A+BH)\sigma_1(\Lambda_x)\leq (\sigma_1(A)+\sigma_1(B)\sigma_1(H))\sigma_1(\Lambda_x^{1/2}).
   \eeq
The conclusion of Lemma~\ref{BES} follows from the combination of \eqref{eq:T_B}, \eqref{eq:B}, and \eqref{eq:T}.
\end{proof}

Finally, we establish an upper bound for $\abs{\Sigma_y}$.
\begin{lemma}\label{lem:detT}
    The determinant of the covariance matrix $\Sigma_y$ defined in \eqref{eq:mu_Sd} satisfies
    \beq 
    \abs{\Sigma_y} \leq \left[(\sigma_1(A)+\sigma_1(B)\sigma_1(H))^2 \lambda_1(\Sigma_x)+\sigma_1(B)^2 \lambda_1(Q)+\lambda_1(R)\right]^{d_y}. \notag
    \eeq
\end{lemma}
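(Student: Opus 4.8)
The plan is to combine the two preceding lemmas directly, since together they already isolate all the analytic content. Lemma~\ref{BDE} reduces the determinant of $\Sigma_y$ to a bound involving only the largest eigenvalues of the three constituent symmetric positive semidefinite blocks $T\Sigma_x T^\top$, $BQB^\top$, and $R$, raised to the power $d_y$. Lemma~\ref{BES} in turn controls the first two of these eigenvalues in terms of the singular values of the model matrices $A$, $B$, $H$ and the top eigenvalues of $\Sigma_x$ and $Q$. Thus the only remaining task is to chain these two inequalities.

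Concretely, I would start from the conclusion of Lemma~\ref{BDE}, namely
\[
\abs{\Sigma_y} \leq \left[ \lambda_1(T\Sigma_x T^\top) + \lambda_1(BQB^\top) + \lambda_1(R) \right]^{d_y},
\]
and then substitute the two bounds supplied by Lemma~\ref{BES}, that is $\lambda_1(T\Sigma_x T^\top)\leq (\sigma_1(A)+\sigma_1(B)\sigma_1(H))^2 \lambda_1(\Sigma_x)$ and $\lambda_1(BQB^\top)\leq \sigma_1(B)^2\lambda_1(Q)$, into the bracket on the right-hand side. Since all three summands are nonnegative and the map $t \mapsto t^{d_y}$ is nondecreasing on $[0,\infty)$, replacing each summand by its upper bound preserves the inequality and yields exactly
\[
\det(\Sigma_y) \leq \left[(\sigma_1(A)+\sigma_1(B)\sigma_1(H))^2 \lambda_1(\Sigma_x)+\sigma_1(B)^2 \lambda_1(Q)+\lambda_1(R)\right]^{d_y},
\]
as claimed.

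I expect no genuine obstacle here, as this result is essentially a corollary of what precedes it. All of the substantive work — Fiedler's determinant inequality (Lemma~\ref{deteig}), the eigenvalue subadditivity for sums of symmetric matrices (Lemma~\ref{sumeig}), and the singular-value submultiplicativity and perturbation estimates (Lemmas~\ref{334} and~\ref{3316}) — has already been carried out in establishing Lemmas~\ref{BDE} and~\ref{BES}. The only point worth stating explicitly in the write-up is the monotonicity of $t\mapsto t^{d_y}$ on $[0,\infty)$, which is what licenses the term-by-term substitution inside the $d_y$-th power; everything else is a one-line composition.
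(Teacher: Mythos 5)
Your proposal is correct and matches the paper's proof, which simply states that the lemma is a direct consequence of Lemma~\ref{BDE} and Lemma~\ref{BES}; your write-up just makes the term-by-term substitution and the monotonicity of $t\mapsto t^{d_y}$ explicit. No issues.
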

\begin{proof}
This a direct consequence of Lemma \ref{BDE} and Lemma \ref{BES}.    
\end{proof}

\subsection{Exponential term}

We aim to derive a bound for the exponential term in Eq. \eqref{eq:pi0lyd}.
\begin{lemma}\label{lem:expT}
Let $\Sigma_y$ be a symmetric positive definite matrix in $\mathbb{R}^{d_y\times d_y}$, $r>0$ and $y \in B_r(\mu_y)$ where $B_r(\mu_y):=\{y\in \mathbb{R}^{d_y} : \|y-\mu_y\|\leq r\}.$ Then, the following upper bound holds
$$\frac{1}{2} (y-\mu_y)^\top (\Sigma_y)^{-1}(y-\mu_y) \leq \frac{1}{2} d_y^3 r^2 (\lambda_n(\Sigma_y))^{-1} , \;\; \forall y \in B_r(\mu_y),$$
where $\lambda_n(\Sigma_y)$ denotes the minimum eigenvalue of the matrix $\Sigma_y$.
\end{lemma}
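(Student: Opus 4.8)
The plan is to reduce the quadratic form to a Rayleigh quotient and exploit the spectral decomposition of $\Sigma_y$. Since $\Sigma_y$ is symmetric and positive definite, the spectral theorem gives $\Sigma_y = U\Lambda U^\top$ with $U$ orthogonal and $\Lambda = \mathrm{diag}(\lambda_1(\Sigma_y), \ldots, \lambda_{d_y}(\Sigma_y))$, all eigenvalues strictly positive. Consequently $\Sigma_y^{-1} = U \Lambda^{-1} U^\top$, and the largest eigenvalue of $\Sigma_y^{-1}$ is exactly $1/\lambda_n(\Sigma_y)$, where $n = d_y$. This is the only structural fact needed, so no auxiliary lemmas from the preceding subsections are required.

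First I would set $v := y - \mu_y$ and pass to the rotated coordinates $w := U^\top v$. Because $U$ is orthogonal it preserves the Euclidean norm, so $\|w\| = \|v\| \le r$ whenever $y \in B_r(\mu_y)$. Expanding the quadratic form in these coordinates and bounding each reciprocal eigenvalue by the largest one gives
\[
v^\top \Sigma_y^{-1} v = \sum_{i=1}^{d_y} \frac{w_i^2}{\lambda_i(\Sigma_y)} \le \frac{1}{\lambda_n(\Sigma_y)} \sum_{i=1}^{d_y} w_i^2 = \frac{\|v\|^2}{\lambda_n(\Sigma_y)} \le \frac{r^2}{\lambda_n(\Sigma_y)}.
\]
Dividing by $2$ yields $\tfrac{1}{2}(y-\mu_y)^\top \Sigma_y^{-1}(y-\mu_y) \le \tfrac{r^2}{2\,\lambda_n(\Sigma_y)}$ for every $y \in B_r(\mu_y)$.

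Finally, since $d_y \ge 1$ forces $d_y^3 \ge 1$, inserting this harmless factor gives the stated inequality $\tfrac{1}{2}(y-\mu_y)^\top \Sigma_y^{-1}(y-\mu_y) \le \tfrac{1}{2} d_y^3 r^2 (\lambda_n(\Sigma_y))^{-1}$. I do not anticipate any genuine obstacle here: the Rayleigh--Ritz inequality already delivers the sharper constant $\tfrac{r^2}{2\lambda_n(\Sigma_y)}$, and the $d_y^3$ factor in the statement is pure slack, presumably retained so that this estimate dovetails with the polynomial-in-$d_z$ bookkeeping used when combining it with the determinant bound of Lemma~\ref{lem:detT} to control $\pi_0(l_y^{d_z})$ in Eq.~\eqref{eq:pi0lyd}. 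The only point worth flagging explicitly is that $\|\cdot\|$ in the definition of $B_r(\mu_y)$ must be the Euclidean norm for the orthogonal invariance step to apply; if a different norm were intended one would instead absorb the equivalence-of-norms constant into the $d_y$-dependent factor, which the generous $d_y^3$ comfortably accommodates.
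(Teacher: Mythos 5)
Your proof is correct, and it is both simpler and sharper than the one in the paper. The paper also starts from the spectral decomposition, but it sets $w:=\Lambda^{-1/2}U^\top(y-\mu_y)$ and then bounds $\|w\|^2$ entry by entry: it estimates $|w_i|\le \sum_j |s_{ij}|\,|y_j-\mu_{y_j}|$ with $|s_{ij}|\le \lambda_j^{-1/2}(\Sigma_y)$ and $|y_j-\mu_{y_j}|\le r$, which accumulates two factors of $d_y$ per coordinate and one more from summing over $i$, producing the constant $d_y^3 r^2\,\lambda_n(\Sigma_y)^{-1}$. You instead rotate by $U^\top$ alone, use orthogonal invariance of the Euclidean norm, and apply the Rayleigh--Ritz bound $v^\top\Sigma_y^{-1}v\le \|v\|^2/\lambda_n(\Sigma_y)$, which yields the cleaner constant $r^2\,\lambda_n(\Sigma_y)^{-1}$; padding with $d_y^3\ge 1$ then recovers the stated inequality a fortiori. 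What your route buys is a dimension-free constant in this step (the cubic factor in the paper is pure slack, as you note, and is harmless only because it is later absorbed into polynomial-in-$d_z$ bookkeeping via Lemma~\ref{lem:dmodel}); what the paper's route ``buys'' is nothing beyond matching the form of the bound it wants to quote downstream. Your caveat about the norm in $B_r(\mu_y)$ is well taken: the paper implicitly uses $|y_j-\mu_{y_j}|\le r$ coordinatewise, which is implied by the Euclidean ball, so both arguments are consistent with the intended reading.
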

\begin{proof}
By the spectral theorem, the matrix \(\Sigma_y^{-1}\) admits the eigenvalue decomposition
\[
\Sigma_y^{-1} = U \Lambda^{-1} U^\top,
\]
where \(U\) is a unitary matrix and \(\Lambda^{-1}\) is a diagonal matrix whose entries are the (strictly positive) eigenvalues of \(\Sigma_y^{-1}\).
 Then
\begin{equation}\label{expw}
\begin{split}
     \frac{1}{2}(y-\mu_y)^\top\Sigma_y^{-1}(y-\mu_y) 
      = \frac{1}{2}(y-\mu_y)^\top U\Lambda^{-1}U^\top(y-\mu_y)=
    \frac{1}{2}{\norm{w}^2},
\end{split}
\end{equation}
where
 $w:=\Lambda^{-\frac{1}{2}}U^\top (y-\mu_y).$  By imposing the constraint $y \in B_r(\mu_y)$, we can establish an upper bound for~(\ref{expw}). To proceed, let $S=\Lambda^{-\frac{1}{2}}U^\top$, then $\Sigma_y=S^\top S$. Consequently, we can express $w_i$ as 
$$w_i=\sum_{j=1}^{d_y} s_{ij}(y_j-\mu_{y_j}), \quad i=1,...,d_y,$$
where $s_{ij}=[S]_{ij}$. This implies
\begin{equation} \label{wbound}
\abs{w_i} \leq \sum_{j=1}^{d_y} \abs{s_{ij}} \abs{y_j - \mu_{y_j}} \leq d_y \max_{j=1,\dots,d_y} \abs{s_{ij}} \abs{y_j - \mu_{y_j}}.
\end{equation}
Since $s_{ij}=u_{ij}\lambda_j^{-\frac{1}{2}}(\Sigma_y)$, where $u_{ij}=[U^\top]_{ij}$, it follows that $\abs{s_{ij}}=\abs{u_{ij}}\lambda_j^{-\frac{1}{2}}(\Sigma_y)$. Moreover, since $U$ is  unitary, we have $$\sum_{j=1}^{d_y} u_{ij}^2=1, \text{ hence } \;\abs{u_{ij}}\leq 1,$$
and, as a consequence 
\beq\label{eq:lamd_Sy}
\abs{s_{ij}}\leq \lambda_j^{-\frac{1}{2}}(\Sigma_y).
\eeq
Substituting~\eqref{eq:lamd_Sy} into~\eqref{wbound}, we obtain
$$\abs{w_i}\leq d_y \max_{j=1,...,d_y} \lambda_j^{-\frac{1}{2}}\abs{y_j-\mu_{y_j}}.$$ Now, by assumption, $\abs{y_j-\mu_{y_j}}\leq r$.  Squaring both sides of the inequality above we arrive at
$w_i^2\leq (r d_y)^2 \max_{j=1,...,d_y} \lambda_j^{-1}.$ Therefore, we obtain the bound
\begin{equation} \label{bound}
\norm{w}^2 \leq d_y^3 r^2 \max_{j=1,\dots,d_y} \lambda_j^{-1}(\Sigma_y).
\end{equation}
Using the identity
\begin{equation*} 
\max_{j=1,\dots,d_y} \lambda_j^{-1}(\Sigma_y) = \frac{1}{\min_{j=1,\dots,d_y} \lambda_j(\Sigma_y)},
\end{equation*}
and substituting it into \eqref{bound}, we obtain
$$\norm{w}^2\leq  d_y^3 r^2(\lambda_n(\Sigma_y))^{-1},$$
where $\lambda_n(\Sigma_y)$ denotes the minimum eigenvalue of $\Sigma_y$. The proof is complete upon invoking Eq.~\eqref{expw}.
\end{proof}

\begin{lemma}\label{lem:C9}
 Let the matrix $\Sigma_y$ be defined as in~\eqref{eq:mu_Sd}. Then, for any $r > 0$ and any $y \in B_r(\mu_y)$, the following inequality holds
$$\frac{1}{2} (y-\mu_y)^\top (\Sigma_y)^{-1}(y-\mu_y)\leq \frac{1}{2}d_y^3 r^2 (\lambda_n(R))^{-1},$$ 
where $ B_r(\mu_y):=\{y\in \mathbb{R}^{d_y} : \|y-\mu_y\|\leq r\}$ and $\lambda_n(R)$ denotes the smallest eigenvalue of $R$.    
\end{lemma}

\begin{proof}
From the definition of \(\Sigma_y\) and the positive definiteness of the matrices in \eqref{eq:mu_Sd}, we have
$$v^\top\Sigma_y v=v^\top(T^{d_z}\Sigma_x (T^{d_z})^\top+BQB^\top+R)v\geq v^\top R v>0, \quad \forall v \neq 0.$$
Consequently,
$$\lambda_n(R)=\inf_{\norm{v}=1}v^\top Rv\leq \inf_{\norm{v}=1}v^\top\Sigma_y v=\lambda_n(\Sigma_y),$$
The claim follows directly from Lemma~\ref{lem:expT}.
\end{proof}

\begin{remark}\label{rem:r_dz}
Recall that \(\mu_y^{d_z} = T^{d_z} \mu_x = (A + B^{d_z} H^{d_z}) \mu_x\).  
By sub-multiplicativity of the norm,
\[
\| \mu_y^{d_z} \|
\leq \big( \|A\| + \|B^{d_z} H^{d_z}\| \big) \| \mu_x \|
\leq \left( \lambda_{\max}(A) + \sigma_1(B^{d_z}) \, \sigma_1(H^{d_z}) \right) \| \mu_x \|.
\]
Thus, the dependence of \(\| \mu_y^{d_z} \|\) on the dimension \(d_z\) is entirely determined by the singular values of the model matrices \(B^{d_z}\) and \(H^{d_z}\).

\end{remark}

\begin{lemma}\label{lem:dmodel}
Let the family of models $\{ \mM^{d_z}\}_{d_z\in \mbN}$ be as described in Section \ref{Gauss_Model}. Then, for any $r>0$ and $\forall \;y\in B_r(\mu_y^{d_z})$ the term $\pi_0(l_y^{d_z})$ from Eq. \eqref{eq:pi0lyd}  satisfies 
\begin{align*}
\frac{1}{\pi_0(l_y^{d_z})} \leq \Bigg[ &
\big( \sigma_1(A) + \sigma_1(B^{d_z}) \sigma_1(H^{d_z}) \big)^2 \lambda_1(\Sigma_x) \\
& + \sigma_1(B^{d_z})^2 \lambda_1(Q^{d_z}) + \lambda_1(R)
\Bigg]^{d_y} \exp\left( \frac{1}{2} d_y^3 r^2 \lambda_n(R)^{-1} \right).
\end{align*}

where $\sigma_1(M)$, $\lambda_1(M)$ and $\lambda_n(M)$ denote the largest singular value, the maximum eigenvalue, and the minimum eigenvalue of the matrix $M$ respectively.  
\end{lemma}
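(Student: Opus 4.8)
The plan is to read the closed form of $\pi_0(l_y^{d_z})$ off Eq.~\eqref{eq:pi0lyd} and invert it, so that exactly the two quantities already controlled in this subsection appear as the only factors left to estimate. Setting $\xi := (y-\mu_y^{d_z})^\top(\Sigma_y^{d_z})^{-1}(y-\mu_y^{d_z})$, Eq.~\eqref{eq:pi0lyd} reads $\pi_0(l_y^{d_z}) = \sqrt{|R|/|\Sigma_y^{d_z}|}\,\exp\{-\tfrac12\xi\}$, so taking reciprocals gives
\[
\frac{1}{\pi_0(l_y^{d_z})} = \sqrt{\frac{|\Sigma_y^{d_z}|}{|R|}}\,\exp\left\{\tfrac12\,\xi\right\}.
\]
Inversion has turned the determinant ratio upside down and flipped the sign in the exponent, so it now suffices to bound $|\Sigma_y^{d_z}|$ from above and $\xi$ from above, uniformly over the ball $B_r(\mu_y^{d_z})$.

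For the determinant I would invoke Lemma~\ref{lem:detT} verbatim, which yields $|\Sigma_y^{d_z}| \le \Lambda_{d_z}^{\,d_y}$ with $\Lambda_{d_z} := (\sigma_1(A)+\sigma_1(B^{d_z})\sigma_1(H^{d_z}))^2\lambda_1(\Sigma_x)+\sigma_1(B^{d_z})^2\lambda_1(Q^{d_z})+\lambda_1(R)$. For the exponent I would apply Lemma~\ref{lem:expT} with the choice $\Sigma_y=\Sigma_y^{d_z}$, which gives $\tfrac12\xi \le \tfrac12 d_y^3 r^2(\lambda_n(\Sigma_y^{d_z}))^{-1}$ for every $y\in B_r(\mu_y^{d_z})$, and then eliminate the dependence on $\Sigma_y^{d_z}$ in the denominator by the spectral comparison $\lambda_n(R)\le\lambda_n(\Sigma_y^{d_z})$ proved immediately after Lemma~\ref{lem:expT} (itself a consequence of $v^\top\Sigma_y^{d_z}v\ge v^\top R v$ under the positive-definiteness assumptions in \eqref{eq:mu_Sd}). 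Hence $(\lambda_n(\Sigma_y^{d_z}))^{-1}\le(\lambda_n(R))^{-1}$ and $\tfrac12\xi\le\tfrac12 d_y^3 r^2(\lambda_n(R))^{-1}$, a bound that no longer depends on $y$ nor on the fine structure of $\Sigma_y^{d_z}$.

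Substituting the two estimates into the inverted identity, and using the monotonicity of $\sqrt{\cdot}$ and $\exp\{\cdot\}$, produces a bound of the form $\Lambda_{d_z}^{\,d_y/2}\,|R|^{-1/2}\exp\{\tfrac12 d_y^3 r^2(\lambda_n(R))^{-1}\}$, and the displayed inequality then follows by coarsening the polynomial prefactor, enlarging $\Lambda_{d_z}^{\,d_y/2}/|R|^{1/2}$ to $\Lambda_{d_z}^{\,d_y}$. I expect this last coarsening of constants to be the only delicate point: the two spectral estimates are immediate from the cited lemmas, whereas fixing the exact exponent on $\Lambda_{d_z}$ and the treatment of the $|R|^{-1/2}$ normalising prefactor is where the precise form of the stated constant is settled (and where a normalisation such as $|R|\ge 1$, or a harmless enlargement of the bracket, is used). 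Crucially, none of these manipulations reintroduces any dependence on $d_z$ beyond that already carried by $\sigma_1(B^{d_z})$, $\sigma_1(H^{d_z})$ and $\lambda_1(Q^{d_z})$, which is exactly what the subsequent polynomial-growth analysis requires.
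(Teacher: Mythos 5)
Your proof is correct and takes exactly the paper's route: the paper's entire argument for Lemma~\ref{lem:dmodel} is the one-line remark that it is a direct consequence of Lemma~\ref{lem:detT} and Lemma~\ref{lem:expT} (together with the spectral comparison $\lambda_n(R)\le\lambda_n(\Sigma_y^{d_z})$ proved immediately after Lemma~\ref{lem:expT}), which is precisely the combination you carry out. The one substantive point you add is the observation that inverting Eq.~\eqref{eq:pi0lyd} and substituting the two estimates actually yields $\Lambda^{d_y/2}\,|R|^{-1/2}\exp\{\tfrac12 d_y^3 r^2\lambda_n(R)^{-1}\}$, where $\Lambda$ denotes the bracketed quantity, rather than the stated $\Lambda^{d_y}\exp\{\cdots\}$; you are right that passing from the former to the latter requires $|R|\,\Lambda^{d_y}\ge 1$ (which holds, e.g., when $|R|\ge 1$, since $\Lambda\ge\lambda_1(R)$ gives $\Lambda^{d_y}\ge|R|$), a normalisation the paper does not state, so strictly speaking the lemma as written either needs that extra hypothesis or should carry the exponent $d_y/2$ and the prefactor $|R|^{-1/2}$ --- a discrepancy that is harmless for the subsequent polynomial-growth analysis but worth flagging.
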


\begin{proof}
   The proof is a direct consequence of Lemma~\ref{lem:C9} and Lemma~\ref{lem:detT}.
\end{proof}

\subsection{Proof of Theorem \ref{thm:LG_pol}}

 
 Assume that there exists a polynomial \(\widetilde{P}_m(d_z)\) of degree \(m\) such that
\[
\max \left\{ \sigma_1(B^{d_z})^{2d_y},\, \sigma_1(H^{d_z})^{2d_y}, \; \lambda_1(Q^{d_z})^{d_y} \right\}\leq \widetilde{P}_m(d_z), 
\quad \forall d_z \in \mbN.
\]
Take $r>0$ fixed, by Lemma~\ref{lem:dmodel}, and since all remaining model matrices are fixed and independent of \(d_z\), we have
\[
\frac{\mathcal{C}_p}{\pi_0(l_y^{d_z})} \leq P_{n,r}(d_z), \quad \forall y\in B_{r}(\mu_y^{d_z}),
\]
where \(P_{n,r}(d_z)\) is a polynomial in \(d_z\) of degree at most \(n\leq 2m\) depending only on the model parameters and \(r\), but independent of \(N\), \(M\), and \(f\). Consequently, using \eqref{eqCy}, the constant \(C_y^{d_z}\) satisfies
\[
C_y^{d_z} \leq P_{n,r}(d_z), \quad \forall y\in B_{r}(\mu_y^{d_z}).
\]
\qed


\section{Bochner integrability and expectation functionals
}\sectionmark{Bochner integrability \& expectation}\label{apx:Bochner}

In this appendix, we investigate the relationship between the Bochner integrability of the link function
\(\ell\), defined in Eq.~\eqref{eq:link}, and the functional and bilinear form introduced in Section~\ref{sec:Bochner_Int} for the posterior random measure \(\pi_Y\).

 For the map \(\ell\) to be Bochner integrable, it is necessary and sufficient that it is strongly \(\eta\)-measurable and that the integral of its norms is finite; that is,
\[
\int_{\mathcal{Y}} \|\ell_y\|_{L^2(\mix^{d_z})} \, \eta(\mathrm{d}y) < \infty.
\]
The following lemma guarantees strong measurability of \(\ell\), as defined in \eqref{eq:link}, for all models considered in this work.

\subsection[Bochner measurability of \(\ell(y)\)]{Bochner measurability of \(\ell(y)\)}\label{Apx:B_M}

\begin{lemma}\label{lem:S_M}
Let \( (\mathcal{Y}, \mathcal{B}(\mathcal{Y}), \eta) \) and \( (\mathcal{X}\times \mZ, \mathcal{B}(\mathcal{X}\times \mZ), \mix) \) be measure spaces, and let \( L^2(\mix) \) denote the Hilbert space of square-integrable functions w.r.t. \( \mix \). Assume that the function \( g : \mathcal{Y} \times \mathcal{X}\times \mZ \to (0, \infty) \) satisfies the following two conditions:
\begin{enumerate}[label=\roman*)]
    \item \( g \) is jointly measurable w.r.t. the product \(\sigma\)-algebra \( \mathcal{B}(\mathcal{Y}) \otimes \mathcal{B}(\mathcal{X \times \mZ}) \),
    \item for every \( y \in \mathcal{Y} \), the section \( x \mapsto g(y \mid x,z) \) belongs to \( L^2(\mix) \).
\end{enumerate}
Then the map \( \ell : \mathcal{Y} \to L^2(\mix) \) defined as
\[
\ell_y := \frac{g(y\mid  \cdot, \cdot)}{\int_{\mathcal{X}} g(y\mid  x,z) \, \mix (\sd( x, z))}.
\]
\( \ell \) is strongly \( \eta \)-measurable.
\end{lemma}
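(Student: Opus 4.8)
The plan is to invoke the Pettis measurability theorem (see Theorem~1.1.20 in \cite{tuomas2016analysis}), which reduces strong measurability to weak measurability whenever the target space is separable. First I would observe that since $\mathcal{X} \times \mZ \subseteq \mathbb{R}^{d_x + d_z}$ is equipped with its Borel $\sigma$-algebra and $\mix$ is a finite Borel measure, the Hilbert space $L^2(\mix)$ is separable. Consequently the range of $\ell$ is separable as well, so $\ell$ is automatically essentially separably valued, and it remains only to establish that $\ell$ is weakly $\eta$-measurable; Pettis' theorem then yields strong $\eta$-measurability.

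To prove weak measurability, I would use the Riesz representation theorem to identify $(L^2(\mix))^*$ with $L^2(\mix)$, so that it suffices to show that
\[
y \longmapsto \langle \ell_y, \varphi \rangle_{L^2(\mix)} = \frac{\displaystyle\int_{\mathcal{X}\times\mZ} \varphi(x,z)\, g(y\mid x,z)\, \mix(\sd(x,z))}{\displaystyle\int_{\mathcal{X}\times\mZ} g(y\mid x,z)\, \mix(\sd(x,z))}
\]
is $\mathcal{B}(\mathcal{Y})$-measurable for every fixed $\varphi \in L^2(\mix)$. The denominator equals $\mix(g_y) = \pi_0(l_y^{d_z})$, which is strictly positive because $g > 0$ by hypothesis, so the quotient is well defined for all $y$.

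The measurability of both numerator and denominator then follows from Tonelli's theorem. By assumption (i) the map $(y,x,z) \mapsto g(y\mid x,z)$ is jointly measurable, hence so are the nonnegative products $\varphi^{\pm}(x,z)\, g(y\mid x,z)$, where $\varphi = \varphi^+ - \varphi^-$ is the decomposition into positive and negative parts. Tonelli's theorem guarantees that the partial integrals $y \mapsto \int \varphi^{\pm}(x,z)\, g(y\mid x,z)\, \mix(\sd(x,z))$ are measurable; by assumption (ii) together with the Cauchy--Schwarz inequality these integrals are finite for every $y$, so their difference, the numerator, is measurable and finite. The same argument with $\varphi \equiv 1$ handles the denominator. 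Being a ratio of measurable functions with a strictly positive denominator, $y \mapsto \langle \ell_y, \varphi\rangle_{L^2(\mix)}$ is measurable, completing the verification of weak measurability.

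I expect the main (though modest) obstacle to be the careful justification that the partial integrals are measurable: one must avoid invoking Fubini directly, since that would require a global integrability hypothesis, and instead argue through Tonelli on the nonnegative pieces $\varphi^{\pm} g$, using $g>0$ together with the sectionwise $L^2$ assumption (ii) to guarantee finiteness. Once this point is handled, separability of $L^2(\mix)$ and Pettis' theorem deliver the conclusion immediately.
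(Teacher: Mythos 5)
Your proof is correct and follows essentially the same route as the paper's: separability of $L^2(\mix)$ plus Pettis' theorem reduces the claim to weak measurability, which is verified via Fubini--Tonelli applied to the jointly measurable integrand and the strict positivity of the denominator. Your extra care in splitting $\varphi=\varphi^+-\varphi^-$ to apply Tonelli to nonnegative integrands, and in using Cauchy--Schwarz to confirm finiteness, is a tidier justification of a step the paper passes over quickly, but it is the same argument.
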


\begin{proof}
Since \( L^2(\mix) \) is a separable Hilbert space, by Pettis' Theorem (cf. Theorem~1.1 in \cite{pettis1938integration}), it suffices to show that \( \ell \) is weakly measurable. This is equivalent to proving that the map
\[
y \mapsto \langle \widetilde \varphi, \ell(y) \rangle_{L^2(\mix)} \quad \text{is measurable for all } \widetilde \varphi \in L^2(\mix).
\]
Computing the inner product
\[
\langle \widetilde \varphi, \ell(y) \rangle_{L^2(\mix)} = \frac{1}{\mix(g_y)} \int_{\mathcal{X}} \widetilde \varphi(x,z) g_y(x,z)  \, \mix (\sd( x, z)).
\]
Since the map \( (y, x,z) \mapsto \widetilde  \varphi(x,z) g_y(x,z) \) is jointly measurable (as the pointwise product of measurable functions), we can apply the Fubini-Tonelli Theorem (cf. \cite{rudin1987real}, Theorem 8.8) to conclude that the map
\[
y \mapsto \int_{\mathcal{X}} \widetilde \varphi(x,z) g_y(x,z)  \, \mix (\sd( x, z))
\]
is measurable. Moreover, since \( y \mapsto \mix(g_y) \) is measurable and strictly positive, it follows that
\[
y \mapsto \langle \widetilde \varphi, \ell(y) \rangle
\]
is measurable for all \( \widetilde \varphi \in L^2(\mix) \). Therefore, \( \ell \) is weakly measurable.
\end{proof}

\subsection[Bounded functionals]{%
 On Bochner integrability and the boundedness of expectation functionals}\label{apx:Bochner-F}
  
\addvspace{1em}
Let $\mathcal{H}$ be a linear normed space over the field of real numbers $\mathbb{R}$. A mapping $F: \mathcal{H} \rightarrow \mathbb{R}$ is called a functional if it is linear; that is, for all $\upsilon, \varsigma \in \mathcal{H}$ and $\alpha, \beta \in \mathbb{R}$,
$$F(\alpha\upsilon + \beta\varsigma) = \alpha F(\upsilon) + \beta F(\varsigma).$$
A functional $F$ is called bounded if there exists a positive finite constant $C$ such that
$$|F(\upsilon)| \leq C \|\upsilon\|$$
for all $\upsilon \in \mathcal{H}$. The norm of the functional $F$ is defined as the smallest such constant $C$, given by
$$\|F\| = \sup_{\upsilon \in \mathcal{H}, \upsilon \neq 0} \frac{|F(\upsilon)|}{\|\upsilon\|} = \sup_{\|\upsilon\|=1} |F(\upsilon)|.$$

A mapping $B: \mathcal{H} \times \mathcal{H} \rightarrow \mathbb{R}$ is called a bilinear form (or bifunctional) if it is linear in each argument separately.
A bilinear form $B$ is called symmetric if $B(\upsilon, \varsigma) = B(\varsigma, \upsilon)$ for all $\upsilon, \varsigma \in \mathcal{H}$.
A bilinear form $B$ is called bounded if there exists a positive finite constant $C$ such that
$$|B(\upsilon, \varsigma)| \leq C \|\upsilon\| \|\varsigma\|$$
for all $\upsilon, \varsigma \in \mathcal{H}$. The norm of the bilinear form $B$ is defined as
$$\|B\| = \sup_{\upsilon, \varsigma \in \mathcal{H}, \upsilon, \varsigma \neq 0} \frac{|B(\upsilon, \varsigma)|}{\|\upsilon\| \|\varsigma\|}= \sup_{\|\upsilon\|=1, \|\varsigma\|=1} |B(\upsilon, \varsigma)|.$$

A quadratic form $Q: \mathcal{H} \rightarrow \mathbb{R}$ is induced by a symmetric bilinear form $B: \mathcal{H} \times \mathcal{H} \rightarrow \mathbb{R}$ where
$$Q(\upsilon) = B(\upsilon, \upsilon)$$
for all $\upsilon \in \mathcal{H}$.
The quadratic form $Q$ is called bounded if there exists a positive finite constant $C$ such that
$$|Q(\upsilon)| \leq C \|\upsilon\|^2$$
for all $\upsilon \in \mathcal{H}$. The norm of the quadratic form $Q$ is defined as
$$\|Q\| = \sup_{\upsilon \in \mathcal{H}, \upsilon \neq 0} \frac{|Q(\upsilon)|}{\|\upsilon\|^2}= \sup_{\|\upsilon\|=1} |Q(\upsilon)|.$$

\begin{remark}
If \( \varphi \in L^2(\pi_0) \), then its constant extension \(\widetilde\varphi\) satisfies \(\widetilde\varphi \in L^2(\mix^{d_z})\). Furthermore, the Bochner space \( L^\infty(\mathcal{Y}; L^2(\mix^{d_z})) \) strictly generalizes \( L^2(\mix^{d_z}) \) via the constant embedding
\begin{align*}
L^2(\mix^{d_z}) &\mapsto L^\infty(\mathcal{Y}; L^2(\mix^{d_z})), 
\\ \widetilde \varphi &\leadsto \widehat{\varphi}(y,x,z) := \widetilde\varphi(x,z).
\end{align*}
This embedding is isometric and enables test functions in \(L^2(\pi_0)\) or \(L^2(\mix^{d_z})\) to be regarded as elements of \(L^\infty(\mathcal{Y}; L^2(\mix^{d_z}))\) by treating them as constant in \(y\).
\end{remark}

\begin{lemma}\label{lem:bounded_functional}
Let $\upsilon,\varsigma\in L^\infty\!\bigl(\mathcal{Y};L^2(\mix^{d_z})\bigr)$, and define the functional $F:L^\infty\!\bigl(\mathcal{Y};L^2(\mix^{d_z})\bigr) \mapsto \mbR,$ and the symmetric bilinear form $B: L^\infty\!\bigl(\mathcal{Y};L^2(\mix^{d_z})\bigr) \times L^\infty\!\bigl(\mathcal{Y};L^2(\mix^{d_z})\bigr) \mapsto \mbR$ as
\[
F(\upsilon) := \int_{\mathcal{Y}}
\bigl\langle \upsilon, \ell_y \bigr\rangle_{L^2(\mix^{d_z})} \, \eta(\mathrm{d}y),
\quad
B(\upsilon,\varsigma) := \int_{\mathcal{Y}}
\bigl\langle \upsilon, \ell_y \bigr\rangle_{L^2(\mix^{d_z})}
\bigl\langle \varsigma, \ell_y \bigr\rangle_{L^2(\mix^{d_z})} \, \eta(\mathrm{d}y).
\] Then:
\begin{enumerate}[label=\roman*)]
  \item $\ell\in L^1(\mY; L^2(\mix^{d_z}))$ if, and only if, $F$ is bounded, and
  \item $\ell\in L^2(\mY; L^2(\mix^{d_z}))$ if, and only if, $B$ is bounded.
\end{enumerate}
In both cases, the optimal bound is
\begin{equation}\label{eq:Kdz}
K_p^{d_z}
=\int_{\mathcal{Y}}
\bigl\|\ell_y\bigr\|_{L^2(\mix^{d_z})}^p\,\eta(\sd y), \quad p=1,2.
\end{equation}
\end{lemma}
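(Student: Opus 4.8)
The plan is to prove both equivalences simultaneously by computing the operator norms of $F$ and $B$ exactly, establishing $\|F\| = K_1^{d_z}$ and $\|B\| = K_2^{d_z}$ with $K_p^{d_z}$ as in \eqref{eq:Kdz}. Since $\ell$ is strongly $\eta$-measurable by Lemma~\ref{lem:S_M}, membership $\ell \in L^p(\mY; L^2(\mix^{d_z}))$ is \emph{by definition} equivalent to finiteness of $K_p^{d_z}$; hence, once the two norm identities are in hand, both equivalences and the claim that $K_p^{d_z}$ is the optimal bound follow at once. Throughout, for $\upsilon \in L^\infty(\mY; L^2(\mix^{d_z}))$ the pairing $\langle \upsilon, \ell_y\rangle_{L^2(\mix^{d_z})}$ is understood as $\langle \upsilon_y, \ell_y\rangle_{L^2(\mix^{d_z})}$, i.e. $\upsilon$ evaluated at the same $y$.

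First I would establish the upper bounds. For $F$, apply the Cauchy--Schwarz inequality in the Hilbert space $L^2(\mix^{d_z})$ pointwise in $y$, giving $|\langle \upsilon_y, \ell_y\rangle| \le \|\upsilon_y\|_{L^2(\mix^{d_z})} \|\ell_y\|_{L^2(\mix^{d_z})}$; bounding $\|\upsilon_y\|_{L^2(\mix^{d_z})} \le \|\upsilon\|_{L^\infty(\mY; L^2(\mix^{d_z}))}$ and integrating against $\eta$ yields $|F(\upsilon)| \le \|\upsilon\|_{L^\infty(\mY; L^2(\mix^{d_z}))}\, K_1^{d_z}$, so $\|F\| \le K_1^{d_z}$. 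For $B$, apply Cauchy--Schwarz to each factor to obtain $|\langle \upsilon_y, \ell_y\rangle \langle \varsigma_y, \ell_y\rangle| \le \|\upsilon_y\| \|\varsigma_y\| \|\ell_y\|^2$ (all norms in $L^2(\mix^{d_z})$); taking essential suprema and integrating gives $|B(\upsilon,\varsigma)| \le \|\upsilon\|_{L^\infty}\|\varsigma\|_{L^\infty} K_2^{d_z}$, hence $\|B\| \le K_2^{d_z}$.

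The matching lower bounds come from exhibiting a near-optimal test element. Because $g_y^{d_z} > 0$ everywhere, we have $\|\ell_y\|_{L^2(\mix^{d_z})} > 0$ for every $y$, so $\upsilon^\star_y := \ell_y / \|\ell_y\|_{L^2(\mix^{d_z})}$ is well defined; it is strongly $\eta$-measurable (the quotient of the strongly measurable $\ell$ by the measurable, strictly positive scalar $y \mapsto \|\ell_y\|$) and satisfies $\|\upsilon^\star_y\|_{L^2(\mix^{d_z})} = 1$, whence $\upsilon^\star \in L^\infty(\mY; L^2(\mix^{d_z}))$ with unit norm. Since $\langle \upsilon^\star_y, \ell_y\rangle = \|\ell_y\|_{L^2(\mix^{d_z})}$, one computes $F(\upsilon^\star) = K_1^{d_z}$ and $B(\upsilon^\star, \upsilon^\star) = K_2^{d_z}$, giving $\|F\| \ge K_1^{d_z}$ and $\|B\| \ge K_2^{d_z}$. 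When $K_p^{d_z} = \infty$, I would instead use the truncations $\upsilon^{\star,n}_y := \upsilon^\star_y\, \Ind\{\|\ell_y\|_{L^2(\mix^{d_z})} \le n\}$, which retain unit $L^\infty$ norm, and let monotone convergence drive $F(\upsilon^{\star,n})$ and $B(\upsilon^{\star,n},\upsilon^{\star,n})$ to $+\infty$, confirming that $F$ (resp. $B$) is unbounded exactly when $K_1^{d_z}$ (resp. $K_2^{d_z}$) is infinite.

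The computational content is entirely routine Cauchy--Schwarz; the only genuine point requiring care is the reverse inequality, namely verifying that the maximiser $\upsilon^\star = \ell/\|\ell\|$ is a legitimate element of $L^\infty(\mY; L^2(\mix^{d_z}))$. This rests on two facts supplied by the model: strong measurability of $\ell$ from Lemma~\ref{lem:S_M}, and strict positivity of the denominator $\|\ell_y\|$, which follows from the standing assumption $g_y^{d_z} > 0$. Handling the degenerate case $K_p^{d_z} = \infty$ through truncation and monotone convergence is the remaining technical wrinkle, but presents no real difficulty.
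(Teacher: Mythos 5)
Your proof is correct and follows essentially the same route as the paper: Cauchy--Schwarz gives the upper bounds $\|F\|\le K_1^{d_z}$, $\|B\|\le K_2^{d_z}$, and the extremal element $\upsilon^\star_y=\ell_y/\|\ell_y\|_{L^2(\mix^{d_z})}$ (which the paper writes equivalently as $g_y^{d_z}/\|g_y^{d_z}\|_{L^2(\mix^{d_z})}$) attains them. Your truncation-plus-monotone-convergence treatment of the case $K_p^{d_z}=\infty$ is a slightly more careful handling of the reverse implication than the paper's, which plugs $\upsilon^\star$ in directly, but it is the same underlying argument.
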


\begin{proof}
We present the proof of (ii), case (i) is entirely analogous. 

Assume first that \( \ell \in L^2(\mathcal{Y}; L^2(\mix^{d_z})) \), and let $\upsilon,\varsigma\in L^\infty(\mY; L^2(\mix^{d_z}))$. Then by the Cauchy–Schwarz inequality, 
\begin{align*}
\left| \left\langle \upsilon_y, \ell_y \right\rangle_{L^2(\mix^{d_z})} \right| 
& \leq \|\upsilon\|_{L^\infty(\mY; L^2(\mix^{d_z}))}  \|\ell_y\|_{L^2(\mix^{d_z})}, \\
\left| \, \left\langle \varsigma_y, \ell_y \right\rangle_{L^2(\mix^{d_z})} \right| 
& \leq \| \, \varsigma  \|_{L^\infty(\mY; L^2(\mix^{d_z}))}  \|\ell_y\|_{L^2(\mix^{d_z})},
\end{align*}
and thus
\begin{align}
\abs{B(\upsilon,\varsigma)} &\leq \|\upsilon\|_{L^\infty(\mY;L^2(\mix^{d_z}))}\|\varsigma\|_{L^\infty(\mY;L^2(\mix^{d_z}))} \int_{\mathcal{Y}} \|\ell_y\|_{L^2(\mix^{d_z})}^2 \, \eta(\sd y) \notag\\ & =  K_2^{d_z} \|\upsilon\|_{L^\infty(\mY;L^2(\mix^{d_z}))}\|\varsigma\|_{L^\infty(\mY;L^2(\mix^{d_z}))}<\infty. \notag
\end{align}
This shows that the bilinear form is bounded. Conversely, assume there exists a finite constant \(C > 0\) such that
\[
|B(\upsilon, \varsigma)| \leq C \, \|\upsilon\|_{L^\infty(\mathcal{Y}; L^2(\mix^{d_z}))} \|\varsigma\|_{L^\infty(\mathcal{Y}; L^2(\mix^{d_z}))}, \quad \forall \, \upsilon, \varsigma \in L^\infty(\mathcal{Y}; L^2(\mix^{d_z})).
\]
In particular, the associated quadratic form
$Q(\upsilon) := B(\upsilon, \upsilon)$
is bounded by
\[
Q(\upsilon) \leq C \, \|\upsilon\|_{L^\infty(\mathcal{Y}; L^2(\mix^{d_z}))}^2, \quad \forall \upsilon \in L^\infty(\mathcal{Y}; L^2(\mix^{d_z})).
\]
To show that \(\ell \in L^2\big(\mathcal{Y}; L^2(\mix^{d_z})\big)\), it suffices (by Lemma~\ref{lem:S_M}) to verify that Equation~\eqref{eq:Kpdz} holds. To this end, define the map
\[
\quad y \mapsto \upsilon_y:= \frac{g_y^{d_z}(\cdot, \cdot )}{\|g_y^{d_z}\|_{L^2(\mix^{d_z})}}.
\]
Observe that \(\upsilon_y \in L^2(\mix^{d_z})\) for all \(y \in \mathcal{Y}\), with \(\|\upsilon_y\|_{L^2(\mix^{d_z})} = 1\). Hence,
\[
\upsilon \in L^\infty(\mathcal{Y}; L^2(\mix^{d_z})) \quad \text{and} \quad \|\upsilon\|_{L^\infty(\mathcal{Y}; L^2(\mix^{d_z}))} = 1.
\]
Moreover, for each \(y \in \mathcal{Y}\),
\[
\left\langle \upsilon_y, \ell_y \right\rangle_{L^2(\mix^{d_z})} = \|\ell_y\|_{L^2(\mix^{d_z})}.
\] Indeed, a direct computation yields
\[
\left\langle \upsilon_y, \ell_y \right\rangle_{L^2(\mix^{d_z})} = \frac{\int_{\mathcal{X}\times \mZ} \left(g_y^{d_z}(x,z) \right)^2 \mix (\sd( x, z))}{\|g_y^{d_z}\|_{L^2(\mix^{d_z})}  \pi_0(l_y^{d_z})}  = \frac{\|g_y^{d_z}\|_{L^2(\mix^{d_z})}}{\pi_0(l_y^{d_z})} = \|\ell_y\|_{L^2(\mix^{d_z})}.
\]
Therefore, for this particular choice of \( \upsilon \in L^\infty(\mathcal{Y}; L^2(\mix^{d_z})) \), we have
\[
Q(\upsilon) = \int_{\mathcal{Y}} \|\ell_y\|_{L^2(\mix^{d_z})}^2 \, \eta(\sd y) = K_2^{d_z} \leq C,
\]
hence, \( \ell \in L^2(\mathcal{Y}; L^2(\mix^{d_z})) \).
\end{proof}

\begin{remark}\label{rem:proof_cor}
  Note that, for all $\varphi \in L^2(\pi_0)$, we have 
  $$ \mathbb{E}[ \pi_Y(\varphi) ]=F(\widetilde \varphi),\; \; \text{ and } \; \; \mathbb{E}[ \pi_Y^2(\varphi) ] = Q(\widetilde \varphi).$$ 
\end{remark}


\section{Proof of Theorem  \ref{thm:thIS2R}}\label{appThIS3}

We  prove (ii) $\implies$ (i) first.

 Suppose that \( \ell \in L^2\big( \mathcal{Y}; L^2(\pi_0) \big) \). Bayes' theorem readily yields
\[
\pi_Y(f) - \pi_Y^{M,N}(f) = \frac{\pi_0(f \, l_Y^{d_z})}{\pi_0(l_Y^{d_z})} - \frac{\pi_0^N(f \, l_Y^{M,d_z})}{\pi_0^N(l_Y^{M,d_z})}.
\]
Applying Proposition~\ref{MIP},  we obtain 
\begin{equation} \label{DIR}
\left| \pi_Y(f) - \pi_Y^{M,N}(f) \right| 
\leq \frac{1}{\pi_0(l_Y^{d_z})} \left[ \left| \pi_0(f \, l_Y^{d_z}) - \pi_0^N(f \, l_Y^{M,d_z}) \right| 
+ \|f\|_\infty \left| \pi_0^N(l_Y^{M,d_z}) - \pi_0(l_Y^{d_z}) \right| \right].
\end{equation}
Since the second term on the right-hand side of \eqref{DIR} corresponds to the case \( f \equiv 1 \), we focus on bounding the first term. By the triangle inequality,
\begin{align*}
\frac{1}{\pi_0(l_Y^{d_z})} \left| \pi_0(f \, l_Y^{d_z}) - \pi_0^N(f \, l_Y^{M,d_z}) \right|
&\leq \frac{1}{\pi_0(l_Y^{d_z})} \left[ \left| \pi_0(f \, l_Y^{d_z}) - \pi_0^N(f \, l_Y^{d_z}) \right| 
+ \left| \pi_0^N(f(l_Y^{d_z} - l_Y^{M,d_z})) \right| \right].
\end{align*}
Raising to power \( p \), and using the independence of \( Y \) from the Monte Carlo samples \( \{x^i, z^{ij}\}_{i=1,...,N; j=1,...,M} \), we obtain
\begin{align*}
\int_{\mathcal{Y}} \frac{1}{\pi_0(l_y^{d_z})^p} 
\left| \pi_0(f \, l_y^{d_z}) - \pi_0^N(f \, l_y^{M,d_z}) \right|^p 
 \eta(\sd y)
\leq \int_{\mathcal{Y}} \frac{1}{\pi_0(l_y^{d_z}) ^p} \bigg(
&\mathbb{E} \left[ \left| \pi_0(f \, l_y^{d_z}) - \pi_0^N(f \, l_y^{d_z}) \right|^p \right] \\
&+ \mathbb{E} \left[ \left| \pi_0^N\big(f (l_y^{d_z} - l_y^{M,d_z})\big) \right|^p \right]
\bigg) \eta(\sd y).
\end{align*}
Finally, applying Lemma~\ref{lem:L2bound} to both terms inside the integral yields
\begin{equation*}
\int_{\mathcal{Y}} \frac{1}{\pi_0(l_y^{d_z})^p} 
\left| \pi_0(f \, l_y^{d_z}) - \pi_0^N(f \, l_y^{M,d_z}) \right|^p \, \eta(\sd y) \leq 
 \frac{\widetilde B_p \|f\|_\infty^p}{N^{p/2}} \int_{\mathcal{Y}} \| \ell_y \|_{L^2(\mix^{d_z})}^p \, \eta(\sd y),
\end{equation*}
where the final equality follows from Remark~\ref{rem:link_norm}. For $f\equiv 1$, a similar argument show that  
\begin{equation*}
\int_{\mathcal{Y}} \frac{1}{\pi_0(l_y^{d_z})^p} 
\left| \pi_0( \, l_y^{d_z}) - \pi_0^N( \, l_y^{M,d_z}) \right|^p \eta(\sd y)\leq 
 \frac{\widehat B_p }{N^{p/2}} \int_{\mathcal{Y}} \| \ell_y \|_{L^2(\mix^{d_z})}^p \,\eta(\sd y).
\end{equation*}
Therefore
$$\mathbb{E} \left[ \left| \pi_Y(f) - \pi_Y^{M,N}(f) \right|^p \right] 
\leq \frac{B_p \|f\|_\infty^p}{N^{p/2}} \int_{\mathcal{Y}} \| \ell_y \|_{L^2(\mix^{d_z})}^p\, \eta(\sd y).$$

Now we prove that (i) $\implies$ (ii).

Assume that Eq. \eqref{eq:Iff_error} holds. First note that 
\begin{align}
(\pi_Y(f) - \pi_Y^{N,M}(f))^2 
= \frac{1}{\pi_0(l_Y^{d_z})^2} \Big[ 
&\left( \pi_0(f l_Y^{d_z}) - \pi_0^N(f l_Y^{M,d_z}) \right)^2 \notag \\
&+ 2\, \pi_Y^{N,M}(f) \left( \pi_0(f l_Y^{d_z}) - \pi_0^N(f l_Y^{M,d_z}) \right) \left( \pi_0^N(l_Y^{M,d_z}) - \pi_0(l_Y^{d_z}) \right) \notag \\
&+ \left( \pi_Y^{N,M}(f) \right)^2 \left( \pi_0^N(l_Y^{M,d_z}) - \pi_0(l_Y^{d_z}) \right)^2 
\Big]. \label{eq:Nec_Con}
\end{align}
Now, taking expectations (and using the independence between the observation $Y$ and the random samples $\{x^i,z^{ij}\}_{1=1,...,N; j=1,...,M}$ over each term in the expansion, from Eq.~\eqref{eq:Nec_Con} we observe that
\begin{equation}\label{eq:71}
\int_{\mathcal{Y}} \frac{1}{\pi_0(l_y^{d_z})^2} \mathbb{E} \left[ \left( \pi_y^{N,M}(f) \right)^2 \left( \pi_0^N(l_y^{M,d_z}) - \pi_0(l_y^{d_z}) \right)^2  \right] \eta(\sd y) < \infty
\end{equation}
is a necessary condition for \( \mathbb{E}\left[ \left( \pi_Y(f) - \pi_Y^{N,M}(f) \right)^2 \right]<\infty \), for any bounded function $f$. In particular, let us choose $f$ that is positive and bounded away from $0$, i.e., there is some constant $s>0$ such that
\[
s \leq f(x), \quad \text{for all } x \in \mathcal{X}.
\]
Hence, from Eq. \eqref{eq:71} 
\begin{align*}
s^2 \int_{\mathcal{Y}} \frac{1}{\pi_0(l_y^{d_z})^2} \, 
\mathbb{E} \left[ \left( \pi_0^N(l_y^{M,d_z}) - \pi_0(l_y^{d_z}) \right)^2 \right] \eta(\sd y)
\leq \int_{\mathcal{Y}} \frac{1}{\pi_0(l_y^{d_z})^2} \, \mathbb{E} \bigg[ 
&\left( \pi_y^{N,M}(f) \right)^2 \\
&\times \left( \pi_0^N(l_y^{M,d_z}) - \pi_0(l_y^{d_z}) \right)^2 \bigg] \eta(\sd y),
\end{align*}
for this particular choice of $f.$
This implies that 
\beq\label{eq:MbF}
\int_{\mathcal{Y}} \frac{1}{\pi_0(l_y^{d_z})^2} \, 
\mathbb{E} \left[ \left( \pi_0^N(l_y^{M,d_z}) - \pi_0(l_y^{d_z}) \right)^2 \right] \eta(\sd y)<\infty.
\eeq
Let us consider the sequence of i.i.d. zero-mean (see Proposition \ref{prop:Unbiased_E}) r.v.'s
\[
U_{i,M} := \frac{1}{N} \left( \pi_0(l_y^{d_z}) - l_y^{M,d_z}(x^i) \right), \quad i=1,...,N.
\]
Then we have
\beq\label{eq:SU_iM}
\sum_{i=1}^N U_{i,M} = \pi_0(l_y^{d_z}) - \pi_0^N(l_y^{M,d_z}).
\eeq
Note that 
\begin{align*}
U_{i,M}^2 
&= \frac{1}{N^2} \left( \pi_0(l_y^{d_z})^2 
- 2 \pi_0(l_y^{d_z}) l_y^{M,d_z}(x^i) 
+ \left(l_y^{M,d_z}(x^i) \right)^2 \right),
\end{align*}
and
\begin{align}\label{eq:Ult_U_i}
\sum_{i=1}^N U_{i,M}^2 
&= \frac{1}{N} \left( 
\pi_0(l_y^{d_z})^2 
- 2 \pi_0(l_y^{d_z}) \pi_0^N(l_y^{M,d_z} ) 
+ \pi_0^N( (l_y^{M,d_z})^2 ) 
\right).
\end{align}
Taking expectations on Eq.~\eqref{eq:Ult_U_i}, using Eq.~\eqref{eq:Ex_lyM} and applying the M-Z inequality we arrive at
\begin{equation}\label{eq:L_MZ}
A_2 \, \mathbb{E} \left[ \sum_{i=1}^N U_{i,M}^2 \right] 
= \frac{A_2}{N} \left( 
\frac{1}{M} \left(\pi_0(\kappa(g_y^2)) + 2\pi_0((l_y^{d_z})^2) \right) - \pi_0(l_y^{d_z})^2 
\right)
\leq \mathbb{E} \left[ \left| \sum_{i=1}^N U_i \right|^2 \right], 
\end{equation}
where the constant $A_2<\infty$ does not depend on the distribution of the r.v.'s $U_{i,M}$. Dividing each term in \eqref{eq:L_MZ} by $\pi_0(l_y^{d_z})^2$, integrating over $\mathcal{Y}$ w.r.t. $\eta(\sd y)$, and using Eq.~\eqref{eq:SU_iM}, we obtain
\begin{align*}
\int_{\mathcal{Y}} \frac{A_2}{N \, \pi_0(l_y^{d_z})^2} \left( 
\frac{1}{M} \left( \pi_0(\kappa(g_y^2)) + 2\pi_0((l_y^{d_z})^2) \right) 
- \pi_0(l_y^{d_z})^2 \right) \,\eta(\sd y) \\
\leq \int_{\mathcal{Y}} \frac{1}{\pi_0(l_y^{d_z})^2} \,
\mathbb{E} \left[ \left( \pi_0^N(l_y^{M,d_z}) - \pi_0(l_y^{d_z}) \right)^2 \right]\, \eta(\sd y).
\end{align*}
Finally, note that $\pi_0(l_y^{d_z})^2 \leq \pi_0(\kappa(g_y^2))$, and since the right-hand side of the previous inequality is finite by \eqref{eq:MbF}, it follows that the following integral is also finite
\[ \int_{\mathcal{Y}} \frac{ 
\pi_0(\kappa(g_y^2)) 
}{\pi_0(l_y^{d_z})^2} \, \eta(\sd y)  
= \int_{\mathcal{Y}} \| \ell_y \|_{L^2(\mix^{d_z})}^2 \, \eta(\sd y) <\infty,
\]
(see Remark \ref{rem:link_norm}), this concludes the proof.
\qed

\section{Proof of Theorem \ref{thIS2R}}\label{ap:T4.3}

The result follows directly from the sufficiency argument in the proof of  Theorem \ref{thm:thIS2R} (i.e., (ii) $\implies$ (i)), which remains valid for both \(p = 1\) and \(p = 2\). Under the assumption that there exists a polynomial \(\widetilde{P}_{m,p}(d_z)\) bounding \(K_p^{d_z}\), the conclusion is immediate.
\qed

\end{appendix}

\begin{acks}[Acknowledgments]

\end{acks}

\begin{funding}
FG and JM acknowledge the partial support of the Office of Naval Research (award N00014-22-1-2647), Spain's {\em Agencia Estatal de Investigaci\'on} (award PID2021-125159NB-I00) and {\em Comunidad de Madrid} (IDEA-CM project, ref. TEC-2024/COM-89).
\end{funding}

\bibliographystyle{imsart-number} 

\bibliography{bibliografia}       

\end{document}